\documentclass[sigplan, screen]{acmart}
%\acmSubmissionID{<PAPER ID>}
\renewcommand\footnotetextcopyrightpermission[1]{}
% Optional: Remove the ACM reference between the abstract and the main text.
\settopmatter{printfolios=true,printacmref=false}
% Optional: Comment out the CCS concepts and keywords.

%% The following content must be adapted for the final version
% paper-specific
\AtBeginDocument{%
  \providecommand\BibTeX{{%
    \normalfont B\kern-0.5em{\scshape i\kern-0.25em b}\kern-0.8em\TeX}}}

%% Rights management information.  This information is sent to you
%% when you complete the rights form.  These commands have SAMPLE
%% values in them; it is your responsibility as an author to replace
%% the commands and values with those provided to you when you
%% complete the rights form.

% \setcopyright{acmcopyright}
% \copyrightyear{2018}
% \acmYear{2018}
% \acmDOI{XXXXXXX.XXXXXXX}

%% These commands are for a PROCEEDINGS abstract or paper.

% \acmConference[Conference acronym 'XX]{Make sure to enter the correct
%   conference title from your rights confirmation emai}{June 03--05,
%   2018}{Woodstock, NY}
%
%  Uncomment \acmBooktitle if th title of the proceedings is different
%  from ``Proceedings of ...''!
%
%\acmBooktitle{Woodstock '18: ACM Symposium on Neural Gaze Detection,
%  June 03--05, 2018, Woodstock, NY} 

% \acmPrice{15.00}
% \acmISBN{978-1-4503-XXXX-X/18/06}

\usepackage{tikz}
\usepackage{amsmath}
\DeclareMathAlphabet{\mathcal}{OMS}{cmsy}{m}{n}
\usepackage{textgreek}
\usepackage[english]{babel}
\usepackage{amsthm}
\newtheorem{lemma}{Lemma}
\newtheorem{theorem}{Theorem}
\usepackage{qtree}
\usepackage[rightcaption]{sidecap} 
\usepackage{listings}
%\usepackage{minted}
%\definecolor{myGray}{gray}{0.97}
%\usemintedstyle{borland}
\usepackage{xurl}
\usepackage{nameref}

\def\circle<#1>{\raisebox{.5pt}{\textcircled{\raisebox{-.9pt} {#1}}}}

\usepackage[neverdecrease]{paralist}
\usepackage{adjustbox}
\usepackage{subcaption}
\usepackage{xspace}
\newcommand{\algo}{V-Guard\xspace}

\begin{document}
\title{V-Guard: An Efficient Permissioned Blockchain for Achieving Consensus under Dynamic Memberships}

\author{Gengrui Zhang, Yunhao Mao, Shiquan Zhang, Shashank Motepalli, and Hans-Arno Jacobsen}
\affiliation{%
  \institution{\textit{University of Toronto}}
}

\begin{abstract}
This paper presents \algo, a new permissioned blockchain that achieves consensus for vehicular data under changing memberships, targeting the problem in V2X networks where vehicles are often intermittently connected on the roads. 
To achieve this goal, \algo integrates membership management into the consensus process for agreeing on data entries. It binds a data entry with a membership configuration profile that describes responsible vehicles for achieving consensus for the data entry. As such, \algo produces chained consensus results of both data entries and their residing membership profiles, which enables consensus to be achieved seamlessly under changing memberships. In addition, \algo separates the ordering of transactions from consensus, allowing concurrent ordering instances and periodic consensus instances to order and commit data entries. These features make \algo efficient for achieving consensus under dynamic memberships with high throughput and latency performance.
\end{abstract}

\maketitle

\section{Introduction}
The widespread adoption of vehicular automation technologies in modern vehicles has resulted in automobile manufacturers becoming centralized data monopolies. These manufacturers gather vast amounts of data generated by consumers' vehicles, but the data is often inaccessible to users, who must obtain permission from the manufacturer to access it~\cite{teslashprotest1, teslashprotest2, yourcaryourdatayourchoice}. The lack of transparency and control over vehicle operating data has become an increasing concern, particularly with the rise of reports and investigations into malfunctions of immature automation technologies~\cite{carinsurance, nhtsareport, teslascrahes, telsadrivelesscrash, waymoscrashes}. For example, it is difficult to determine the extent of driver responsibility in the event of an accident when only one source of data comes from the monopoly~\cite{americanbar, pattinson2020legal}. Unsurprisingly, recent surveys show that consumers consider data integrity and transparency as their top concerns regarding automation technologies~\cite{huff2021tell, abraham2016autonomous}.
Moreover, insurers and lawmakers are concerned about the automotive industry's data management practices. Insurers need vehicle operation and safety data to improve their risk models and pricing~\cite{insurer1, insurer2} while lawmakers seek to ensure accountability and safety~\cite{legislation, americanbar}.

In addition, the preservation of consumers' right to data ownership has become the focus of numerous regulations in recent years. For example, the European Union's General Data Protection Regulation (GDPR) and the California Privacy Rights Act (CPRA) share similar principles for data management, including data integrity, transparency, ownership, and accountability~\cite{gdpr}. Consequently, there has been a growing chorus of voices demanding more transparency and independent validation of vehicle data~\cite{forbesprivacy, auotnewstrans}.

\subsection{Why a blockchain?}
The currently used centralized solutions, such as manufacturer data collectors and event data recorders, often fall short of upholding the principles of transparent data management and data ownership preservation as outlined in the GDPR and CPRA. First, they are vulnerable to manipulation since they rely on a single source of truth (the manufacturers). Second, users are dependent on the manufacturer's permission to access their own data. Third, the need for additional hardware equipment may make these solutions less cost-effective~\cite{blackbox2, blackbox101}.

In contrast, blockchain solutions, due to their decentralized nature, inherently disrupt data monopolies and eliminate the potential for manipulation from a singular source of "truth". They can reinforce the principles of preserving data ownership and ensuring transparency~\cite{gmblockchainforbes, clavin2020blockchains}. 
Specifically, blockchain solutions in vehicle-to-everything (V2X) networks can circumvent the aforementioned drawbacks. V2X networks allow vehicles to communicate via wireless communication technologies, such as Dedicated Short-Range Communication (DSRC) and Cellular Vehicle-to-Everything (C-V2X). As such, data can be immutably recorded using joint consensus, which prevents both users and manufacturers from fabricating or erasing evidence. Since vehicles are manufactured with identities~\cite{vin}, permissioned blockchains, which scrutinize the identity of their participants, can be applied to achieve deterministic consensus with high performance~\cite{ibmblockchain, limechain}. Furthermore, employing blockchain solutions as independent software on the operating systems of vehicles results in lower product and maintenance costs compared to using event data recorders. 
These features provide users with greater control over their own data, while also promoting accountability and safety within the industry~\cite{mckinseyblockchain}. 

\subsection{Why a new blockchain?}
Blockchain initiatives have been launched by leading automobile manufacturers, such as BMW~\cite{bmwbc}, Mercedes~\cite{mercedesbc, forbesmercedesbc}, Volkswagen~\cite{vwbc1, vwbc2}, Toyota~\cite{toyotabc}, and Ford~\cite{forbesmercedesbc1, forbesmercedesbc2}. 
However, these applications use traditional permissioned blockchains (e.g., HyperLedger Fabric~\cite{androulaki2018hyperledger}, CCF~\cite{russinovich2019ccf}, and Diem~\cite{diem2020}) and primarily focus on supply chain management, ridesharing, and carsharing~\cite{cnbcbc, forbesbcautomotive, qadbcautomotive}. The consensus algorithms (e.g., PBFT~\cite{castro1999practical} and HotStuff~\cite{yin2019hotstuff}) of these blockchains are specifically designed to function in a stable environment with a relatively small volume of data. Each server in the network is configured with a predefined membership profile that typically consists of a static set of servers. When membership changes occur (e.g., new servers join or old servers leave), the consensus algorithms halt ongoing services and update the membership profile on each server using additional reconfiguration approaches (e.g., MC~\cite{rodrigues2010automatic} and Lazarus~\cite{garcia2019lazarus}) to consistently manage membership changes~\cite{lamport2010reconfiguring, martin2004framework}. 

This suspend-and-update method, which creates a ``hard stop'' in transaction consensus, may be acceptable when membership changes are infrequent. However, this method is not suitable for blockchains utilized in V2X, where a \emph{dynamic environment with frequent membership changes} caused by vehicles' arbitrary connectivity is the common place~\cite{garcia2021tutorial, v2xconnectivity, qualcommv2x}. For example, in DSRC V2X networks, the vehicles in a running vehicle's vicinity can change dynamically on the road, causing them to go online and offline depending on the distance. Similarly, in Cellular V2X networks, it is difficult to predict which drivers will power off their vehicles, resulting in participants becoming unavailable.

To cope with frequent membership changes, traditional BFT consensus algorithms (e.g., PBFT~\cite{castro1999practical} and HotStuff~\cite{yin2019hotstuff}) will suffer from significant performance degradation~\cite{duan2022foundations, lorch2006smart}. For example, an MC~\cite{rodrigues2010automatic} reconfiguration takes around $10$-$20$ seconds to complete a membership change. The additional costs incurred by membership management can become significant under high system workloads and large scales~\cite{abraham2016solida, rodrigues2010automatic, bessani2020byzantine}. 
Therefore, new solutions are urgently needed for V2X blockchains that can achieve consensus seamlessly during membership changes without impeding ongoing consensus.

\subsection{\algo for membership dynamicity}

To address the unique challenges presented by V2X networks, we introduce \algo, a permissioned blockchain architecture that utilizes a novel consensus algorithm. \algo targets the challenge of achieving efficient consensus in the face of dynamically changing memberships.
Unlike conventional BFT algorithms (e.g., PBFT~\cite{castro1999practical} and HotStuff~\cite{yin2019hotstuff}) that focus solely on achieving consensus for data transactions, \algo incorporates membership configuration agreement into the consensus of data transactions. It provides a traceable and immutable ledger for not only data transactions but also for their corresponding membership configurations. As such, the consensus target of \algo becomes: 

\texttt{<data transactions, membership configurations>}

\algo employs a membership management unit (MMU) to handle dynamic membership configurations. The MMU maintains a record of all available vehicles that can act as potential members and groups them into various sets. It generates a queue of membership configuration profiles, known as booths, and provides an interface to the consensus services. During the consensus process, a data entry is paired with a designated booth in both the ordering and consensus instances. When the current booth becomes unavailable, the ordering/consensus instance obtains a new booth from the MMU. Thus, \algo achieves consensus not only on what data is committed but also on which booth (vehicles) committed the data. 
%This approach ensures that \algo delivers reliable, secure, and transparent records of all transactions occurring within the V2X network.

In addition, since vehicle data produced by different sensors is often transactionally unrelated, \algo allows concurrent ordering for data by separating ordering from consensus, where the ordering and consensus phases can take place in different booths. The ordering phase concurrently appends data entries to a total order log, and the consensus phase periodically commits the ordered entries to an immutable ledger. This separation minimizes message passing, allowing \algo to achieve high throughput. Furthermore, the separation empowers \algo to manage the memberships in a more granular way; it enables data entries to be ordered and committed in different booths; that is, consensus can be achieved seamlessly in dynamic memberships.

We implemented \algo and evaluated its performance under both static and dynamic memberships. \algo achieved high consensus throughput and low latency. For example, with a membership size of $n=4$, \algo achieved a peak consensus throughput of $765,930$~TPS (transactions per second) at a latency of $143$~ms under static membership and $615,328$ TPS at $1324$~ms under dynamic memberships.
\algo aims to record and replicate only critical data reflecting the status and decisions of autonomous driving systems, as the data that requires recording is relatively small. This includes essential variables such as speed, acceleration, direction, and object detection, rather than the raw data. Therefore, the achieved performance shows that \algo is a practical and feasible solution.

To summarize, \algo has the following novel features.

\begin{itemize}

    \item It achieves consensus seamlessly during membership changes without impeding transaction consensus. It integrates membership management into each phase of the consensus process, targeting the problem of frequent membership changes in blockchains for V2X networks.
        
    \item It separates ordering from consensus and thus enables concurrent data ordering, reducing ordering latency and messaging costs. The separation also empowers data entries to be ordered (ordering) and committed (consensus) in different memberships.
     
    \item Its implementation achieves high performance in terms of throughput and latency in both static and dynamic memberships. The implementation is open-source and available at \url{https://github.com/vguardbc/vguardbft}.
    
\end{itemize}

\section{V-Guard overview}

\algo provides a blockchain service among vehicles and their manufacturers to avoid a single point of control of data. To apply this service, a vehicle creates a \algo instance and starts to operate as a \emph{proposer}. The vehicle can also join other vehicles' instances by operating as a \emph{validator}. Each \algo instance must include at least four members: the proposer, automobile manufacturer (a.k.a., pivot validator), and other (at least two) vehicles (a.k.a., vehicle validators). The description of the roles is presented below.

\begin{description}
    \item[Proposer ($V_p$).] Each vehicle is the proposer of its own \algo instance, so \emph{each instance has only one proposer}. The proposer operates as a leader in its \algo instance: it proposes data to validators and coordinates consensus for proposed data among validators.
    
    \item[Validator.] In a \algo instance, except the proposer, other members operate as a validator. The automobile manufacturer always operates as a \textbf{pivot validator} (denoted by $V_{\pi}$), and the other vehicles operate as \textbf{vehicle validators} (denoted by $V_i$).
    
\end{description}

A vehicle may participate in multiple \algo instances simultaneously: it operates as the proposer of its own \algo instance and as a validator in other vehicles' \algo instances. We denote the number of instances a vehicle joins as \emph{catering factor}, $\gamma$.

\begin{description}
    \item[Catering factor ($\gamma$).] The catering factor ($\gamma \in \mathbb{Z}^+$) of a vehicle describes the number of \algo instances this vehicle is currently participating in. It changes correspondingly when the vehicle joins and quits other vehicles' \algo instances.
\end{description}

\begin{figure}[t]
    \centering
    \includegraphics[width=0.95\linewidth]{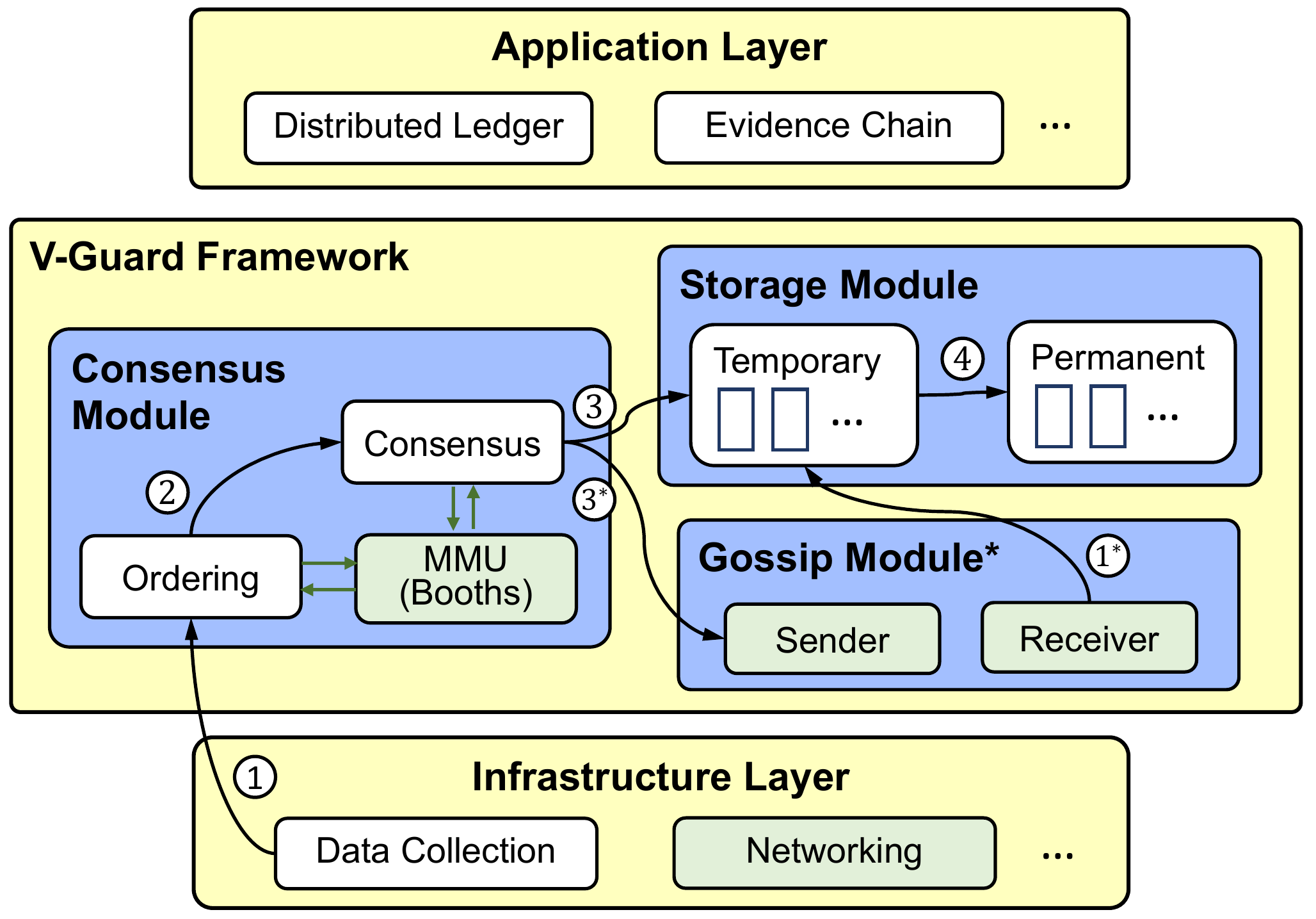}
    \caption{\algo architecture. The consensus module  operates in ordering and consensus instances interacting with the MMU. Committed data are stored in the storage module and further disseminated through the gossip module.}
    \label{fig:vgframework}
\end{figure}

\algo operates as a middleware system that is independent of its residing automotive operating system. It takes input from the vehicle's infrastructure layer for data collection (e.g., from sensors and monitoring devices), communicates with other vehicles, and provides a blockchain service that underpins distributed applications in V2X networks. In general, as shown in Figure~\ref{fig:vgframework}, a \algo instance consists of two core modules of consensus and storage and an optional module of gossiping (not required for correctness). 

\begin{itemize}
    \item The consensus module coordinates consensus for collected data entries (\S\ref{sec:consensus-module}). The proposer uses its membership management unit (MMU) to obtain a valid membership (\S\ref{sec:boothcomposer}). Then, it starts an ordering instance to coordinate with other validators in the obtained configuration and secures a unique ordering ID for the data entries (\S\ref{sec:ordering}). When a consensus instance, periodically scheduled with its own membership configuration, commits the ordered entries, the consensus is achieved (\S\ref{sec:consensus}).

    \item The gossip module is pluggable, with no effect on system correctness (\S\ref{sec:gossiping}). It aims to increase robustness by further disseminating and replicating committed data entries to more vehicles in the network.
        
    \item The storage module stores data entries committed in the consensus module (\S\ref{sec:storage}). It temporarily stores all entries by default according to a predefined policy (e.g., for $24$ hours) and provides an option for users or applications to permanently store selected data. 
        
\end{itemize}

\section{System model}
\label{sec:system-model}

V-Guard is a permissioned blockchain, which requires identity information from members; i.e., each vehicle can identify who they are communicating with. Since vehicles are manufactured with identities\footnote{Vehicle identification numbers (VIN)~\cite{vin} serve as vehicles' fingerprint, as no two vehicles in operation have the same VIN.}, this standard practice of permissioned blockchains is viable in practice. 

\textit{Service properties.} \algo's blockchain architecture provides evidence-based assurance that vehicles behave as intended. For example, it can be used to record the decision-making data produced by applied automation technologies. \textbf{\algo provides two guarantees: no party can \circle<1> fabricate evidence or \circle<2> bury evidence.} 
Data entries are ordered and committed by quorums of independent vehicles with their signatures. After consensus is reached, data entries are traceable and immutable.
Furthermore, neither the proposer nor the consensus pivot can bury evidence by claiming data loss because each committed data entry is consistently stored in vehicles included in the consensus process.

\emph{Network assumptions.}
\algo does not rely on network synchrony to ensure safety but requires \emph{partial synchrony} for liveness; i.e., message delivery and message processing times between two vehicles are bounded by a delay after an unknown global stabilization time (GST). The network may fail to deliver messages, delay them, duplicate them, or deliver them out of order.

\emph{Fault tolerance.} \algo assumes a Byzantine failure model where faulty members may behave arbitrarily, subject only to independent member failures. Each member is an independent entity, as \algo runs independently regardless of its underlying operating system; that is, when $f$ vehicles, regardless of their manufacturers, are faulty (individual or colluding), they are counted as $f$ failures. 
Note that a vehicle's behavior may not be universal when participating in multiple \algo instances ($\gamma >1$); i.e., it could operate correctly in one instance and falsely in another one.

In addition, we do not consider junk-pouring attacks (e.g., DDoS attacks) from faulty members, as such behavior can not be handled by consensus algorithms alone. They can often be handled at lower levers, such as rate controls and blacklist methods.

\emph{Cryptographic primitives.} \algo applies ($t$, $n$)-threshold signatures where $t$ out of $n$ members can collectively sign a message~\cite{shoup2000practical, libert2016born}. Threshold signatures can convert $t$ partially signed messages (of size $O(n)$) into one fully signed message (of size $O(1)$). The fully signed message can then be verified by all $n$ members, proving that $t$ members have signed it. The use of threshold signatures to obtain linearity has been common in state-of-the-art linear BFT protocols, such as HotStuff~\cite{yin2019hotstuff}, SBFT~\cite{gueta2019sbft}, and Prosecutor~\cite{zhang2021prosecutor}.

\emph{A note on changing members.} Similar to other BFT protocols, \algo requires a minimum quorum size of $2f+1$ to tolerate $f$ Byzantine failures in a total of $n = 3f+1$ members. Since \algo is designed to provide an evidence chain for vehicles and their manufacturers, we always include the vehicle's manufacturer (i.e., pivot validator) as a member during membership changes; thus, a new booth can have up to $n-2$ new members, and both the proposer and pivot validator can witness the agreement of each data entry. 

\section{The V-Guard consensus protocol}
\label{sec:consensus-module}
This section presents the \algo consensus protocol in detail. As shown in Figure~\ref{fig:vguard-consensus}, \algo has two types of services: \emph{event-driven} and \emph{daemon} services. Event-driven services are invoked when data is collected. Their orchestration achieves consensus for the collected data and forms a blockchain recording data and membership information. Daemon services handle network connections by dynamically connecting to and disconnecting from vehicles according to their availability.

\subsection{Membership management unit (MMU)}
\label{sec:boothcomposer}

The MMU keeps track of available vehicle connections in the network and prepares a queue of valid \textit{booths}; examples with animations can be found on \algo's GitHub page~\cite{vguardsource}). A booth, denoted by $\mathcal{V}$, is a \emph{membership configuration profile} that describes a set of participating members (vehicles). The profile contains a list of vehicle information on their network addresses (i.e., \texttt{v.conn}) and public keys (i.e., \texttt{v.pub}). The structure of a booth is listed below,
{\small
    \begin{verbatim}
    var boothQueue = struct {
        sync.RWMutex
        b []Booth //the queue of available booths
    }
    
    type Booth struct{
        v []struct{
            conn net.Conn //connection information
            pub  share.PubPoly //individual public key
            ...
        }
        thpub share.PubPoly //threshold signature public key 
        ...
    }
    \end{verbatim}  
}
\vspace{-1em}
The size of each booth is a predefined parameter established in the system setup. A valid booth must include the proposer ($V_p$), pivot validator ($V_{\pi}$), and the other vehicles as vehicle validators. Since \algo assumes BFT failures, the minimum size of a booth is four ($3f{+}1$) to tolerate $f{=}1$ Byzantine failure~\cite{fischer1986easy}.

The MMU interface provides available booths for ordering and consensus instances (denoted by $\mathcal{V}_o$ and $\mathcal{V}_c$, respectively). If there is no booth available in the queue (e.g., when the vehicle has no network connection in a tunnel), the ordering and consensus instances wait for new booths from the interface. In addition, the MMU dynamically manages booths based on their availability and efficiency. It removes a booth when $f$ vehicles are unavailable and periodically pings the other vehicles, prioritizing the booth with the lowest network latency.

In each booth, the proposer first establishes the distributed key generation (DKG) of a ($t$, $n$) threshold signature, where $t$ is the quorum size, and $n$ is the booth size. Once the DKG is completed, all vehicles within the same booth share the same public key for threshold signatures (\texttt{thpub}). It is important to mention that while vehicles from outside the booth cannot verify this threshold signature's public key, they can still verify the individual public key of a vehicle  (i.e., \texttt{v.pub}) as all vehicles use the same \algo key generation service.

\begin{figure}[t]
    \centering
    \includegraphics[width=0.95\linewidth]{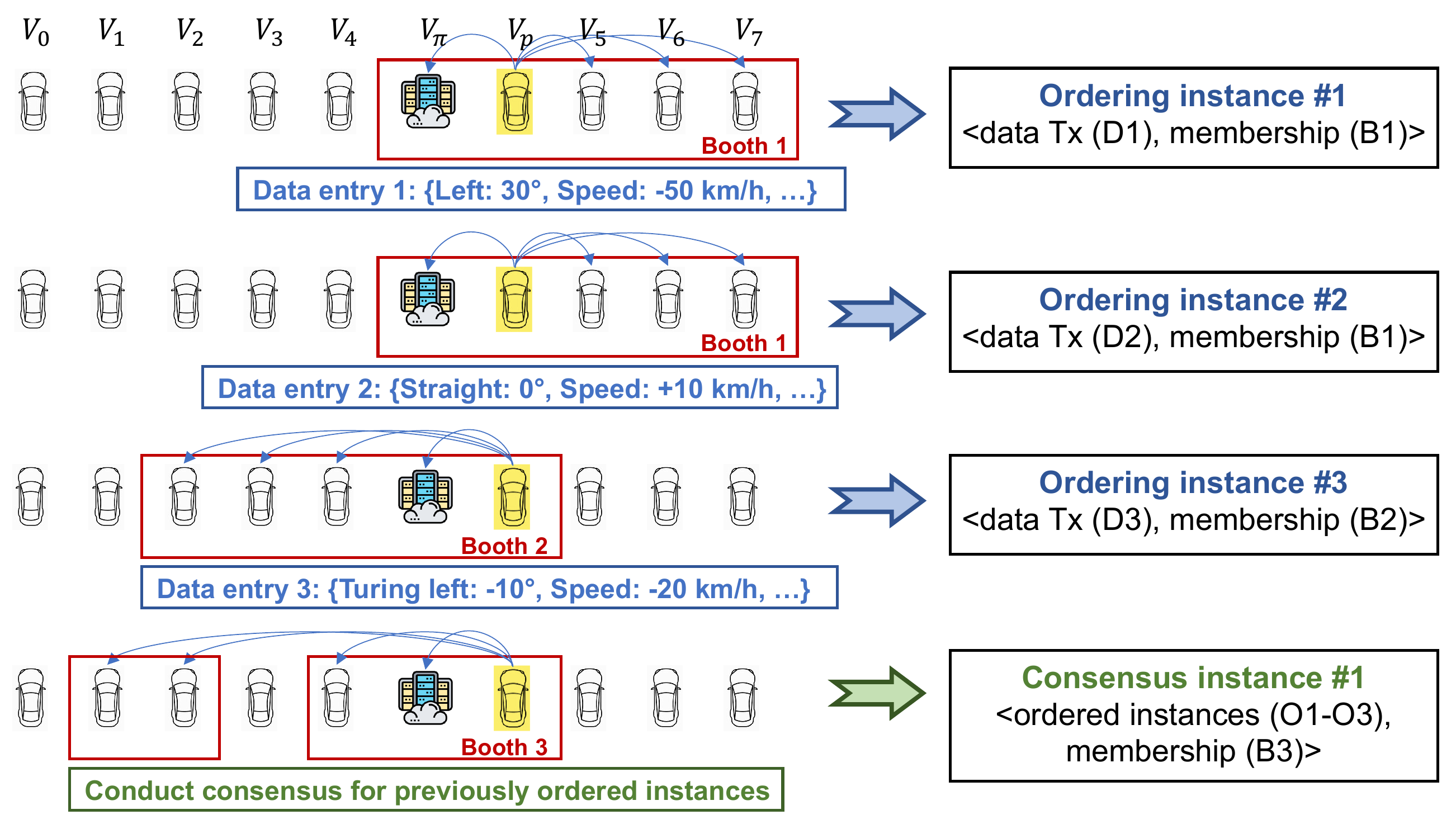}
    \caption{An example of MMU-embedded ordering and consensus procedures. When the current booth is no longer valid for ordering or consensus instances, the proposer ($V_p$) retrieves a new booth from the MMU. The agreement of membership profiles is paired with the agreement of data transactions in each ordering and consensus instance.}
    \label{fig:mmuexamples}
\end{figure}

With the feature of MMU, \algo can issues ordering and consensus procedures in changing memberships.
For example, Figure~\ref{fig:mmuexamples} shows a \algo instance of $10$ participating vehicles, where $V_p$ is the proposer, $V_{\pi}$ is the pivot validator, and $V_0$ to $V_7$ are $8$ vehicle validators. In this case, when the proposer issues an ordering instance for data entry 1, the procedure takes place in Booth 1 (the initial booth). The next ordering instance (i.e., data entry 2) will reuse the current booth (Booth 1) if it is still available. When the current booth becomes unavailable (e.g., some vehicles go offline), the MMU will provide a new booth (Booth 2) for the next ordering instance (i.e., data entry 3). As such, each ordering instance produces a paired result of an ordered data transaction and its residing membership information (details in \S\ref{sec:ordering}). When a scheduled consensus instance takes place, it interacts with the MMU in the same behavior to reuse the current booth or get a new booth. The consensus instance sustains the output of the ordering instances by committing the paired results, thereby ensuring the immutability of these results across different memberships. Equipped with the coalition between the MMU and ordering/consensus instances, \algo can achieve consensus seamlessly under changing members and produce an immutable ledger recording traceable data transactions with their corresponding membership profiles.

\begin{figure*}[t]
    \centering
    \includegraphics[width=\linewidth]{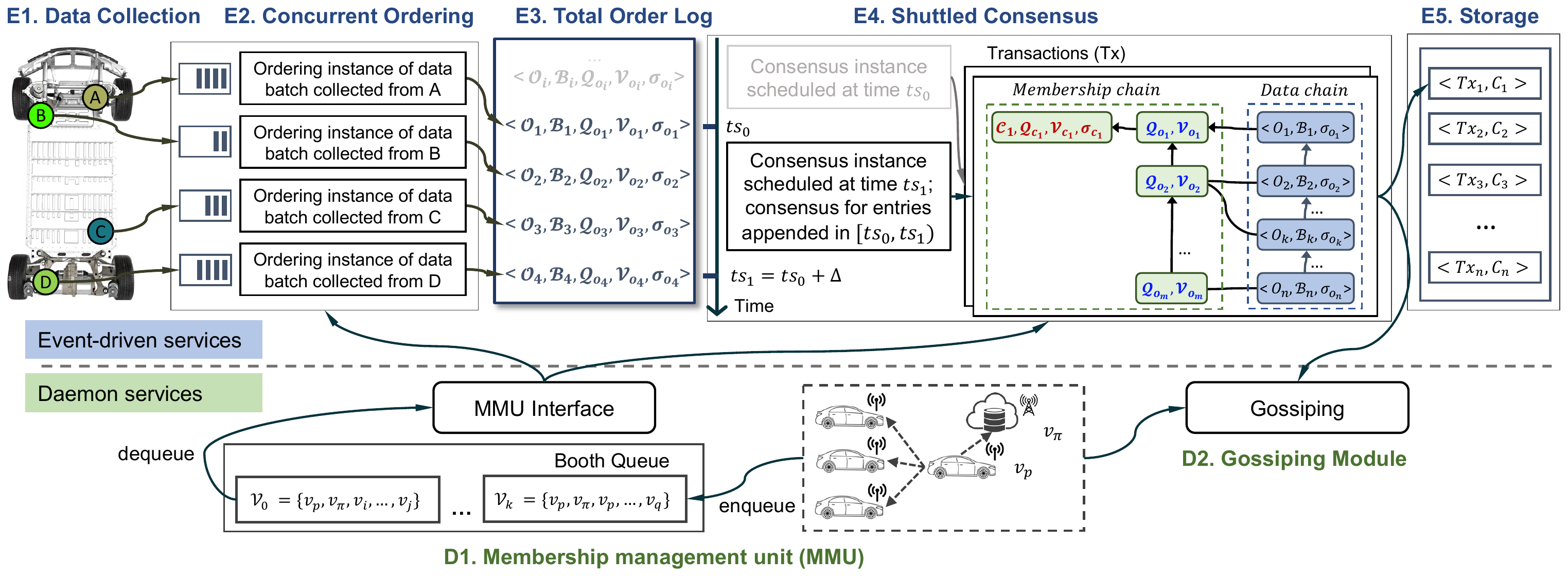}
    \caption{\algo workflow. The MMU prepares a queue of booths for ordering and consensus instances. Ordering instances concurrently append data entries to a totally ordered log. Consensus instances aggregate ordered entries into transactions and commit them periodically.
    Committed transactions can be further disseminated to other vehicles via gossiping.}
    \label{fig:vguard-consensus}
\end{figure*}

\subsection{Ordering -- Form a total order log}
\label{sec:ordering}
The ordering phase secures a unique sequence number (order) for each data batch on a total order log. \algo separates the ordering of data batches from consensus. The separation enables a non-blocking ordering procedure in which the ordering of a data batch does not wait for the completion of the consensus of its prior batch. The separation also allows ordering and consensus to be conducted in different booths during membership changes. The two features make \algo more applicable and efficient in V2X networks where vehicles are connected over unstable networks.

\textbf{The ordering phase instances append log entries to a total order log showing that \circle<1> what data is agreed upon in \circle<2> what order and by \circle<3> what vehicle membership.} 
Specifically, first, the proposer starts an ordering instance when collected data reaches their predefined batch sizes (denoted by $\beta$). Then, the ordering instance distributes data batches collected from the proposer to all validators, including the pivot and vehicles (``what data''); the proposer assigns each data batch with a unique ordering ID (i.e., the sequence number) (``what order'') and a membership configuration that indicates which validators agree on the ordering ID as a quorum (``what membership''). We describe the workflow of an ordering phase instance for a data batch as follows.

\setdefaultleftmargin{0.6cm}{0.6cm}{0.3cm}{}{}{}
\begin{enumerate}
\renewcommand{\labelenumi}{\textbf{O\arabic{enumi}}}

    \item \label{ordering:1}
    The proposer $V_p$ assigns a unique sequence number (denoted by $i$) as the ordering ID (denoted by $O_i$) to a data batch (denoted by $\mathcal{B}_i$) and starts an ordering instance.
    \begin{enumerate}
        \item $V_p$ confirms with the MMU that the booth for this ordering instance ($\mathcal{V}_o$) is available and up-to-date; if not, it invokes the MMU to get a new $\mathcal{V}_o$.
        
        \item $V_p$ sends $\langle \texttt{Pre-Order}, \mathcal{B}_i, h_{\mathcal{B}_i}, \mathcal{V}_o,  h_{\mathcal{V}_o}, i, \sigma_{V_p} \rangle$ to all members in $\mathcal{V}_o$, where $h_{\mathcal{B}_i}$ is the hash of $\mathcal{B}_i$; $h_{\mathcal{V}_o}$ is the hash of $\mathcal{V}_o$; and $\sigma_{V_p}$ is the signature that $V_p$ signs the combination of $i$, $h_{\mathcal{B}_i}$, and $h_{\mathcal{V}_o}$.
        
        \item $V_p$ creates a set $\mathcal{R}_{o_i}$, waiting for replies from validators in $\mathcal{V}_o$.
    \end{enumerate}
    
    \item \label{ordering:2}
    After receiving a \texttt{Pre-Order} message, $V_i$ first verifies the message and then replies to $V_p$. The verification succeeds when \circle<1> $h_B$ and $h_{\mathcal{V}_o}$ match the hashes of $B$ and $\mathcal{V}_o$, respectively; \circle<2> $\sigma_{V_p}$ is valid; and \circle<3> $i$ has not been used. If the verification succeeds, $V_i$ then sends $\langle \texttt{PO-Reply}, i, \sigma_{V_i} \rangle$ to $V_p$, where $\sigma_{V_i}$ is the signature that $V_i$ signs the combination of $i$, $h_B$, and $h_{\mathcal{V}_o}$.
    
    \item \label{ordering:3}
    $V_p$ collects \texttt{PO-Reply} messages from validators and adds them to $\mathcal{R}_{o_i}$ until it receives $2f$ replies (i.e., $|\mathcal{R}_{o_i}| = 2f$); then, $V_p$ takes the following actions.
    
    \begin{enumerate}
        \item $V_p$ converts the $2f$ collected signatures (i.e., $\sigma_{V_i}$ in $\mathcal{R}_{o_i}$) to a ($t$, $n$)-threshold signature, $\sigma_{o_i}$, where $t=2f$.
        
        \item $V_p$ creates a subset $\mathcal{Q}_{o_i}$ including the $2f$ vehicles, the signatures of which are converted to $\sigma_{o_i}$, from $\mathcal{V}_o$ as a membership quorum; i.e., $|\mathcal{Q}_{o_i}|=2f \land \mathcal{Q}_{o_i} \subset \mathcal{V}_o$. 
        
        \item $V_p$ sends $\langle \texttt{Order}, i, \mathcal{Q}_{o_i}, \sigma_{o_i} \rangle$ to all members in $\mathcal{V}_o$.
    \end{enumerate}
    
    $V_p$ now considers batch $\mathcal{B}_i$ ordered with ordering ID $O_i$ ($O_i=i$) by the quorum $\mathcal{Q}_{o_i}$ of the members from $\mathcal{V}_o$. It appends $\langle O_i, \mathcal{B}_i, \mathcal{Q}_{o_i}, \mathcal{V}_o, \sigma_{o_i} \rangle$ to the total order log.
    
    \item \label{ordering:4}
    After receiving an \texttt{Order} message, $V_i$ verifies it by three criteria: 
    \circle<1> $\sigma_{o_i}$ is valid with a threshold of $2f$ (correct quorum sizes);
    \circle<2> signers of $\sigma_{o_i}$ match $\mathcal{Q}_{o_i}$ (valid validators); 
    and \circle<3> $\forall$ $V_i \in \mathcal{Q}_{o_i}, V_i \in \mathcal{V}_o$ (the same membership).
    
    If the validation succeeds, $V_i$ considers batch $\mathcal{B}_i$ ordered with ID $O_i$ by quorum $\mathcal{Q}_{o_i}$ in $\mathcal{V}_o$. It appends $\langle O_i, \mathcal{B}_i, \mathcal{Q}_{o_i}, \mathcal{V}_o, \sigma_{o_i} \rangle$ to the total order log.

\end{enumerate}

In each ordering instance, each data entry is paired up with a verifiable booth. Validators receive the booth information ($\mathcal{V}_o$) in \textcolor{black}{O}\ref{ordering:1} and verify the quorum based on the received booth in \textcolor{black}{O}\ref{ordering:4}. The paired information is ordered with a unique ordering ID and will be committed in a consensus instance.

\subsection{Consensus -- Commit the total order log}
\label{sec:consensus}
The consensus phase ensures that all correct members in a booth agree on the entries shown on the total order log and produces an immutable ledger that periodically commits these log entries.

\textbf{Shuttled consensus.}
\algo periodically initiates consensus instances to commit the log entries appended to the total order log. Each consensus instance, similar to a ``shuttle bus'', is initiated at regular time intervals, denoted by $\Delta$ (e.g., $\Delta =100$~ms), and is responsible for log entries appended in $\Delta$. Formally, a consensus instance scheduled at time $ts+\Delta$ is responsible for log entries appended in [$ts$, $ts+\Delta$), where initially $ts=0$. Thus, consensus instances cover all the entries on the total order log; i.e., \textbf{no two adjacent consensus instances leave uncommitted log entries in between.} We use the starting timestamp (i.e., $ts$ in [$ts$, $ts+\Delta$)) as the ID of the corresponding consensus instance (i.e., $\mathcal{C}_i = ts$).

\textbf{Membership pruning.} 
We denote all the entries to be committed in a consensus instance $\mathcal{C}_i$ as a transaction ($tx$); i.e., $tx = \lbrace entry_{ts}, ..., entry_{ts+\Delta} \rbrace$. A transaction may contain multiple log entries whose orderings were conducted in the same membership. In this case, the transaction prunes away redundant membership and quorum information, linking entries ordered in the same booth to the same membership profile. For example, in Figure~\ref{fig:vguard-consensus}~\textcolor{black}{E4}, if entries $\lbrace \langle \mathcal{O}_2, \mathcal{B}_2 \rangle, ..., \langle \mathcal{O}_k, \mathcal{B}_k \rangle \rbrace$ are ordered by the membership configuration $\langle \mathcal{Q}_{O_2}, \mathcal{V}_{O_2}, \rangle$, then they link to only one membership profile. When membership changes infrequently, pruning can reduce the message size while maintaining the traceable membership feature. After applying pruning, a consensus instance with its ID $\mathcal{C}_i = ts$ (i.e., scheduled at time $ts+\Delta$) works as follows.

\begin{enumerate}
\renewcommand{\labelenumi}{\textbf{C\arabic{enumi}}}
    
    \item \label{c:c1}
    $V_p$ prepares a pruned transaction ($tx$) for log entries appended in $[ts, ts+\Delta)$. 
    
    \begin{enumerate}
        \item $V_p$ confirms with the MMU that the booth for this consensus instance ($\mathcal{V}_c$) is available and up-to-date. Note that $\mathcal{V}_c = \mathcal{V}_o$ if $\mathcal{V}_o$ is still available; otherwise, $\mathcal{V}_c$ is a new booth provided by the MMU. 
        
        \item 
        $V_p$ sends a \texttt{Pre-Commit} message to all $V_i$ in $\mathcal{V}_c$. Since the consensus phase may take place in different booths (i.e., $\mathcal{V}_c \neq \mathcal{V}_o$), $V_p$ checks if $V_i$ was previously in the quorums and saw the entries in $tx$.  
        \begin{itemize}
            \item If $V_i$ has seen all entries in $tx$, $V_p$ sends the starting and ending ordering IDs ($O_{ts}$ and $O_{ts+\Delta}$) of entries in $tx$ so that $V_i$ can locate $tx$ from its log. $V_p$ sends 
            \vspace{-0.5em}
            $$\langle \texttt{Pre-Commit}, ts, h_{tx}, O_{ts}, O_{ts+\Delta}, \mathcal{V}_c, h_{\mathcal{V}_c}, \sigma_{V_p} \rangle,$$
            where $h_{tx}$ and $h_{\mathcal{V}_c}$ are the hashes of $tx$ and $\mathcal{V}_c$, respectively, and $\sigma_{V_p}$ is the signature that $V_p$ signs the combination of $ts$, $h_{tx}$, and $h_{\mathcal{V}_c}$.
            
            \item If $V_i$ has not seen the entries in $tx$, $V_p$ sends 
            \vspace{-0.5em}
            $$\langle \texttt{Pre-Commit}, ts, h_{tx}, tx, \mathcal{V}_c, h_{\mathcal{V}_c}, \lbrace\mathcal{R}_{o_i}\rbrace, \lbrace\mathcal{Q}_{o_i}\rbrace, \sigma_{V_p} \rangle,$$
            piggybacking the entries ($tx$) to be committed, where $\lbrace\mathcal{R}_{o_i}\rbrace$ is the set of the signed replies for every entry in $tx$, and $\lbrace\mathcal{Q}_{o_i}\rbrace$ is the set of their corresponding signers (obtained from Step O3 in ordering).
        \end{itemize}
        
        \item $V_p$ creates a set $\mathcal{R}_{c}$, waiting for replies from validators.
    \end{enumerate}
    
    \item \label{c:c2}
    After receiving a \texttt{Pre-Commit} message, $V_i$ first verifies the message based on the following criteria.
    \begin{itemize}
        \item If $V_i$ has seen $tx$, $V_i$ locates the log entries from $O_{ts}$ to $O_{ts+\Delta}$ and calculates the hash of them. \circle<1> The hash must equal to $h_{tx}$, \circle<2> $\sigma_{V_p}$ is valid, and \circle<3> $ts$ has not been used by other consensus instances.
        
        \item If $V_i$ is not in the ordering booth (has not seen $tx$), then $V_i$ is unable to verify the threshold signature because $V_i$ did not participate in the DKG generation process in that booth (discussed in \S\ref{sec:boothcomposer}). Thus, to validate the correctness of the previous ordering phase, it must verify the replies in $\lbrace\mathcal{R}_{o_i}\rbrace$ of each entry included in $tx$. The public keys of the corresponding validators can be found in $\lbrace\mathcal{Q}_{o_i}\rbrace$. In this case,         \circle<1> $V_i$ calculates the hash of $tx$, and the hash must equal to $h_{tx}$; \circle<2> for each data entry  $\in tx$, $2f+1$ signed replies can be found in $\lbrace\mathcal{R}_{o_i}\rbrace$ and the signatures are valid in accordance to $\lbrace\mathcal{Q}_{o_i}\rbrace$; and \circle<3> $ts$ has not been used by previous consensus instances.
    \end{itemize}
    
    If the verification succeeds, $V_i$ sends $\langle \texttt{PC-Reply}, \mathcal{C}_i, \sigma_{V_i} \rangle$ to $V_p$, where $\sigma_{V_i}$ is the signature that $V_i$ signs the combination of $ts$, $h_{tx}$, and $h_{\mathcal{V}_c}$.
    
    \item \label{commitmsg} \label{c:c3}
    $V_p$ collects \texttt{PC-Reply} messages from validators and adds them to $\mathcal{R}_{c}$ until it receives $2f$ replies (i.e., $|\mathcal{R}_{c}|=2f$); $V_p$ then takes the following actions.
    
    \begin{enumerate}
        \item $V_p$ converts the $2f$ collected signatures (i.e., $\sigma_{V_i}$ in $\mathcal{R}_{c}$) to a ($t$, $n$) threshold signature, $\sigma_{c_{ts}}$, where $t=2f$.
        
        \item $V_p$ creates a subset $\mathcal{Q}_{c_{ts}}$ including the $2f$ vehicles, the signatures of which are converted to $\sigma_{c_{ts}}$, from $\mathcal{V}_c$ as a membership quorum; i.e., $|\mathcal{Q}_{c_{ts}}|=2f \land \mathcal{Q}_{c_{ts}} \subset \mathcal{V}_c$. 
        
        \item $V_p$ sends $\langle \texttt{Commit}, ts, \mathcal{Q}_{o_i}, \mathcal{V}_c, \sigma_{c_{ts}} \rangle$ to all members in $\mathcal{V}_c$.        
    \end{enumerate}
        
    $V_p$ now considers transaction $tx$ committed and broadcasts $tx$ to connected vehicles in the gossip module (introduced in~\S~\ref{sec:gossiping}).
    
    \item \label{c:c4}
    After receiving a \texttt{Commit} message, $V_i$ verifies it by three criteria:
    \circle<1> $\sigma_{c_{ts}}$ is valid with a threshold of $2f$;
    \circle<2> signers of $\sigma_{c_{ts}}$ match $\mathcal{Q}_{c_{ts}}$; 
    \circle<3> $\forall$ $V_i \in \mathcal{Q}_{c_{ts}}, V_i \in \mathcal{V}_c$.
    
    If the verification succeeds, $V_i$ considers  $tx$ committed.
\end{enumerate}

With separate ordering and consensus instances, data entries can be ordered and committed in different booths. Nevertheless, in each ordering/consensus instance, the booth is unchangeable (required by safety); i.e., the same booth must persist through \textcolor{black}{O}\ref{ordering:1}-\textcolor{black}{O}\ref{ordering:4} or \textcolor{black}{C}\ref{c:c1}-\textcolor{black}{C}\ref{c:c4}. If a booth fails before an instance completes, the instance aborts the current ordering/consensus process and invokes the MMU to obtain a new booth. Then, it retries the corresponding process.

\textbf{Consistency of memberships.} 
\algo's consensus contains agreements on both data entries and membership profiles. As illustrated in Figure~\ref{fig:vguard-consensus}, committed data batches form a \textit{data chain}, and pruned membership profiles form a \textit{membership chain}. Since each data entry is paired up with its residing instance's membership profiles, membership is agreed upon with data entries; they are both traceable and verifiable through participating members' signatures. This feature of supporting dynamic memberships makes \algo applicable and efficient in V2X networks.

In addition, \algo has a message complexity of $\mathcal{O}(n)$. Steps \textcolor{black}{O}\ref{ordering:1} and \textcolor{black}{O}\ref{ordering:2} in the ordering phase, and \textcolor{black}{C}\ref{c:c1} and \textcolor{black}{C}\ref{c:c2} in the consensus phase have a message complexity of $\mathcal{O}(n)$ as messages of size $\mathcal{O}(1)$ flow only between the proposer and ($n-1$) validators. In Step \textcolor{black}{O}\ref{ordering:3} and \textcolor{black}{C}\ref{c:c3}, the proposer converts $2f$ signatures (in total of size $\mathcal{O}(n)$) to one threshold signature of size $\mathcal{O}(1)$ and broadcasts it to ($n-1$) validators, so the message complexity in this step remains $\mathcal{O}(n)$. Therefore, \algo obtains linear message complexity. 

\subsection{Correctness discussion}
\label{sec:correctness}
We now discuss the correctness of \algo. In contrast to traditional BFT protocols, \algo does not operate in the succession of views, so it does not apply view changes. Instead of relying on views, \algo uses a dynamic membership management system to handle changes in the network. In addition, the proposer takes both the client and the leader roles, as it is the only member that produces data. When the proposer fails, its \algo instance halts, and its validators will join other vehicles' \algo instances when being included by other vehicles' MMUs.

\begin{theorem}[Validity] \label{validity}
 Every data entry committed by correct members in the consensus phase must have been proposed in the ordering phase.
\end{theorem}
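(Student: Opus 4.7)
The plan is to trace the verification requirements of the consensus phase backward to the ordering phase, showing that any committed entry must have originated in a \texttt{Pre-Order} message sent by the proposer $V_p$. Let $e$ be a data entry that a correct validator $V_i$ considers committed in some consensus instance $\mathcal{C}_i$. By step C4, $V_i$ has validated a \texttt{Commit} message for a transaction $tx$ containing $e$, and by C3 this \texttt{Commit} carries a threshold signature $\sigma_{c_{ts}}$ aggregated from $2f$ \texttt{PC-Reply}s. Each such reply was generated in step C2, which imposes one of two distinct verifications on $V_i$ depending on whether it participated in the ordering booth $\mathcal{V}_o$ for $tx$. I would formalize the argument by splitting on these two cases, since they are exactly the two code paths in C2.

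In the first case, $V_i$ has already seen $tx$ and locates its entries $O_{ts},\ldots,O_{ts+\Delta}$ in its local log. A correct validator only appends a log entry $\langle O_i,\mathcal{B}_i,\mathcal{Q}_{o_i},\mathcal{V}_o,\sigma_{o_i}\rangle$ via steps O3 (as proposer) or O4 (as validator), both of which require successful verification of an upstream \texttt{Pre-Order} message containing $\mathcal{B}_i$ that was issued by $V_p$ in O1; hence $e$ was proposed in the ordering phase. In the second case, the entries of $tx$ are piggybacked together with the sets $\{\mathcal{R}_{o_i}\}$ of signed \texttt{PO-Reply}s and their signer quorums $\{\mathcal{Q}_{o_i}\}$, and $V_i$ verifies these signatures against the individual public keys \texttt{v.pub}. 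Any correct validator whose signature appears in $\mathcal{R}_{o_i}$ must, by step O2, have first accepted a valid \texttt{Pre-Order} message from $V_p$ naming exactly that batch; since each ordering quorum of size $2f$ drawn from a booth of $n = 3f+1$ members intersects the correct set in at least $f$ members, at least one correct validator witnessed such a \texttt{Pre-Order}, which again anchors $e$ to a proposal by $V_p$ in the ordering phase.

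The main obstacle is the second case, where the committing validator never saw the ordering instance itself and must rely purely on cryptographic evidence. The reduction to a proposal by $V_p$ hinges on two ingredients: the unforgeability of the individual signatures from the threshold scheme of Section~\ref{sec:system-model} (so that faulty members cannot fabricate \texttt{PO-Reply}s on behalf of correct ones), and the quorum intersection guarantee that any $2f$ out of $3f+1$ signers include a correct validator whose honest behavior in O2 is what truly attests to the prior \texttt{Pre-Order}. Once these two standard ingredients are invoked, both cases collapse into the same conclusion, and validity follows.
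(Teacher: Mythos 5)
Your proposal is correct and takes essentially the same route as the paper's proof: tracing the verification chain backward from the \texttt{Commit} message through C2's checks to the signatures produced in O2, which only correct validators issue after seeing a \texttt{Pre-Order} from $V_p$. The only difference is one of rigor rather than substance --- you make explicit the two-case split on C2 (validator in vs.\ not in the ordering booth) and supply the quorum-intersection and signature-unforgeability arguments for the second case, where the paper simply asserts that ``the verification process in C2 fails'' for non-proposed entries.
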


\begin{proof}
Only a vehicle can be the proposer of the \algo instance it initiated. In \textcolor{black}{O}\ref{ordering:2}, validators sign the same content as the proposer, which includes the hash of a proposed data entry. In \textcolor{black}{O}\ref{ordering:3} and \textcolor{black}{O}\ref{ordering:4}, a valid threshold signature is converted from a quorum of signatures that sign the same hash of the proposed data entry. In addition, the proposer signs the hash of the transaction that includes the data entry in \textcolor{black}{C}\ref{c:c1}. If a transaction includes non-proposed data entries, the verification process in \textcolor{black}{C}\ref{c:c2} fails. Thus, a committed transaction must include data entries that have been proposed in the ordering phase.
\end{proof}

\begin{lemma} \label{lemma1}
No two correct members in an ordering booth agree on conflicting ordering IDs for the same data entry.
\end{lemma}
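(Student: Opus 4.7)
The plan is to argue by contradiction via a quorum-intersection argument tailored to the ordering phase. Suppose two correct validators $V_a, V_b \in \mathcal{V}_o$ append inconsistent entries for the same batch $\mathcal{B}$: namely $\langle O_1, \mathcal{B}, \mathcal{Q}_1, \mathcal{V}_o, \sigma_1 \rangle$ at $V_a$ and $\langle O_2, \mathcal{B}, \mathcal{Q}_2, \mathcal{V}_o, \sigma_2 \rangle$ at $V_b$, with $O_1 \neq O_2$. By the checks in \textbf{O4}, each $\sigma_k$ must be a valid $(2f, n)$-threshold signature aggregating $2f$ distinct validator signatures, all produced over the triple $(O_k, h_{\mathcal{B}}, h_{\mathcal{V}_o})$, with the signers forming the quorum $\mathcal{Q}_k \subset \mathcal{V}_o$.

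Next I would apply the standard counting step. Since $|\mathcal{V}_o| = 3f+1$ and each quorum in effect comprises the proposer together with its $2f$ validator signers (giving effective size $2f+1$ as in \textbf{O3}), we have $|\mathcal{Q}_1 \cap \mathcal{Q}_2| \geq 2(2f+1) - (3f+1) = f+1$. Because at most $f$ members of $\mathcal{V}_o$ are Byzantine, at least one correct validator $V_c$ lies in the intersection and must have issued Pre-Order replies endorsing both $(O_1, h_{\mathcal{B}}, h_{\mathcal{V}_o})$ and $(O_2, h_{\mathcal{B}}, h_{\mathcal{V}_o})$. Unforgeability of threshold signatures under the cryptographic assumptions of Section~\ref{sec:system-model} rules out any way of synthesizing the two $\sigma_k$ without $V_c$'s actual signatures.

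The main obstacle is ruling out that a correct $V_c$ could have issued both Pre-Order replies in the first place. I would read the verification criterion ``$i$ has not been used'' in \textbf{O2} as binding $V_c$ to at most one $(i, h_{\mathcal{B}})$ commitment per batch: once $V_c$ has signed a Pre-Order attesting $h_{\mathcal{B}}$ under $O_1$, it must reject any later Pre-Order that re-uses $h_{\mathcal{B}}$ under a distinct $O_2$ (and symmetrically). This contradicts $V_c$'s correctness and forces $O_1 = O_2$. A brief closing remark would observe that the argument is independent of whether the proposer is correct: even a Byzantine $V_p$ cannot assemble two distinct valid \texttt{Order} messages for the same $\mathcal{B}$, as doing so would compel some correct validator in $\mathcal{Q}_1 \cap \mathcal{Q}_2$ to violate the \textbf{O2} discipline.
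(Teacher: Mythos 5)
Your quorum-intersection argument is a genuinely different route from the paper's proof, and for the single-booth case it is arguably the more rigorous one. The paper's proof splits into two cases: for an unchanged booth it simply appeals to check \circle<3> in O\ref{ordering:2} (a correct validator refuses a \texttt{Pre-Order} carrying a previously assigned ordering ID), with no counting at all; for changing booths it argues that every booth must contain the pivot validator $V_{\pi}$, which does not collude with the proposer, so a double assignment across two booths stalls at O\ref{ordering:3} in at least one of them. Your intersection count is essentially sound for a single $\mathcal{V}_o$, though you should run it over the $3f$ validators ($2\cdot 2f - 3f = f$ common signers) and split on whether $V_p$ is correct: the $f+1$ figure you obtain includes $V_p$, who is not bound by the O\ref{ordering:2} discipline, so the ``correct member in the intersection'' you exhibit could be the proposer itself. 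Both you and the paper lean on reading check \circle<3> as also forbidding a correct validator from endorsing the same batch under a second, fresh ID; you are right to flag that this is an interpretive extension of what O\ref{ordering:2} literally says (it forbids reusing an ID, not re-ordering an entry), and your write-up is more explicit than the paper's about where that assumption enters.

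The genuine gap is the cross-booth case, which is where the paper's proof does most of its work. If the proposer assigns $\mathcal{B}$ the ID $O_1$ in booth $\mathcal{V}_o$ and $O_2$ in a different booth $\mathcal{V}_o'$, your intersection argument collapses: by the membership rules of Section~\ref{sec:system-model} the two booths may share as few as two members (the proposer and the pivot), so $\mathcal{Q}_1 \cap \mathcal{Q}_2$ need not contain any correct vehicle validator, and no contradiction is forced. The paper closes this case only through the structural requirement that $V_{\pi}$ sits in every booth and every quorum and does not collude with $V_p$ --- the pivot plays the role of a guaranteed honest intersection point across booths. Your proof never invokes the pivot validator, so as written it establishes the lemma only when both conflicting orderings occur within one fixed booth; to match the paper's coverage you would need to add the pivot-based argument, or explicitly restrict the claim to a single $\mathcal{V}_o$.
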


\begin{proof}
Under unchanged booths in the ordering phase, this lemma intuitively holds, as correct validators do not reply to a \texttt{Pre-Order} message that contains a previously assigned ordering ID (\circle<3> in \textcolor{black}{O}\ref{ordering:2}). Under dynamic booths, due to \algo's requirement on quorum constructions, each quorum must contain the proposer and pivot validator; if the proposer double assigns a data entry with two different ordering IDs and sends them to two ordering instances with different booths, one of the ordering instances will fail because the pivot validator must be included in both booths and does not collude with the proposer. When the pivot validator does not reply, the proposer cannot proceed with \textcolor{black}{O}\ref{ordering:3}, and this ordering instance will be aborted. Therefore, correct members do not agree on conflicting ordering IDs for the same data entry.
\end{proof}

\begin{theorem}[Safety] \label{safety}
All correct members agree on a total order for proposed data entries in the presence of less than $f$ failures in each booth.
\end{theorem}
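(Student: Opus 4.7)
The plan is to show the total order by combining two pieces: (i) every data entry is permanently bound to a unique ordering ID, and (ii) once bound, the consensus phase commits entries in exactly that ID-induced order consistently across all correct members. Piece (i) is essentially Lemma~\ref{lemma1} extended across booth changes, and piece (ii) will follow from a standard quorum-intersection argument on the threshold signatures produced in \textcolor{black}{C}\ref{c:c3}.

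First, I would lift Lemma~\ref{lemma1} from a single booth to the entire execution. The note at the end of \S\ref{sec:system-model} fixes that the pivot validator $V_\pi$ belongs to every booth, so the proof reduces to observing that $V_\pi$ retains the set of ordering IDs it has signed in any \texttt{PO-Reply}, and by clause \circle<3> in \textcolor{black}{O}\ref{ordering:2} it refuses to sign a \texttt{Pre-Order} whose ID has already been used. Since each ordering quorum must include $V_\pi$ (with $n=3f{+}1$, $|\mathcal{Q}_{o_i}|=2f$, and $V_\pi$ always in $\mathcal{V}_o$, the threshold signature cannot be formed without $V_\pi$'s share), the proposer cannot produce two valid \texttt{Order} certificates reusing the same $i$ for different batches, nor two different IDs for the same batch without $V_\pi$'s complicity. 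This gives a well-defined injective map from committed data entries to ordering IDs, and hence a total order on the ordered log that every correct member sees consistently.

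Next, I would show that the consensus phase is faithful to this order. A correct member only accepts a \texttt{Commit} message once it has verified the threshold signature $\sigma_{c_{ts}}$ against a quorum $\mathcal{Q}_{c_{ts}} \subset \mathcal{V}_c$ of size $2f$ (\textcolor{black}{C}\ref{c:c4}), and every signer first verified the transaction content against $h_{tx}$ (\textcolor{black}{C}\ref{c:c2}), either by hashing the located entries $O_{ts},\dots,O_{ts+\Delta}$ in its own log or, if it was absent from the ordering booth, by re-checking the piggybacked per-entry replies in $\lbrace\mathcal{R}_{o_i}\rbrace$ against $\lbrace\mathcal{Q}_{o_i}\rbrace$. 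Suppose for contradiction two correct members commit different transactions $tx \ne tx'$ for the same $\mathcal{C}_i = ts$. Both require $2f{+}1$ signers (counting $V_p$) from $\mathcal{V}_c$; with fewer than $f$ failures per booth, any two such quorums intersect in at least one correct validator, who would have had to sign two different hashes $h_{tx} \ne h_{tx'}$ under the same $ts$, contradicting the verification in \textcolor{black}{C}\ref{c:c2}. So all correct members commit the same transaction per slot, and by piece (i) the entries inside that transaction carry a common ordering prefix, yielding a common total order.

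The main obstacle I anticipate is the dynamic-membership subtlety in piece (i): a correct validator that only participated in booth $\mathcal{V}_o$ cannot directly check signatures from another booth $\mathcal{V}_o'$ whose DKG it did not join (as noted in \S\ref{sec:boothcomposer}), so the cross-booth non-equivocation argument cannot rely on threshold verification alone. The crucial lever is that $V_\pi$ is a common member of every booth and signs individually (not only through the threshold), so the cross-booth consistency is really anchored at $V_\pi$ plus the proposer, both of whom must co-sign any valid ordering certificate; this is exactly what Lemma~\ref{lemma1} exploits and what I would cite explicitly rather than reprove.
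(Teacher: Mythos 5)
Your proposal is correct and rests on the same key ingredient as the paper's own (much terser) proof sketch: both reduce safety to Lemma~\ref{lemma1}, arguing by contradiction that two entries committed with the same order would have to have been assigned the same ordering ID in the ordering phase. Where you differ is in the decomposition: the paper jumps directly from ``two quorums constructed in \textcolor{black}{C}\ref{c:c3} agree on the two entries'' to the contradiction with Lemma~\ref{lemma1}, whereas you insert an explicit per-slot agreement step --- a quorum-intersection argument showing that two correct members cannot commit different transactions for the same $\mathcal{C}_i = ts$, since two quorums of size $2f{+}1$ in a booth of $3f{+}1$ share a correct validator who verified $h_{tx}$ in \textcolor{black}{C}\ref{c:c2}. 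That step is genuinely absent from the paper's sketch and strengthens the argument. You also correctly identify the cross-booth subtlety and the role of $V_{\pi}$ as the anchor, which is exactly what Lemma~\ref{lemma1}'s proof exploits. One small point to tighten: your quorum-intersection step assumes both commit quorums for the same $ts$ live in the same $\mathcal{V}_c$; if the proposer equivocates by running the same consensus slot in two different booths, intersection within one booth does not apply, and you need the same $V_{\pi}$-anchoring argument (via criterion \circle<3> of \textcolor{black}{C}\ref{c:c2}, that $ts$ has not been used) that you invoke for the ordering phase. You state that lever only for piece (i); it should be applied to piece (ii) as well.
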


\begin{proof}[Proof (sketch)]
We prove the safety theorem by contradiction. We claim that two correct members commit the two data entries in the same order. Say if two data entries are committed in consensus instances with the same order, then there must exist two quorums constructed in \textcolor{black}{C}\ref{c:c3} agreeing on the two entries, which have been appended to the total order log; otherwise, the proposer cannot receive sufficient votes in \textcolor{black}{C}\ref{c:c3}. In this case, the two entries must have been ordered with the same ordering ID in the ordering phase, which contradicts Lemma~\ref{lemma1}. Therefore, each entry is committed with a unique ordering ID.
\end{proof}

\begin{theorem}[Liveness]
A correct proposer eventually receives replies to proposed data entries in the consensus phase.
\end{theorem}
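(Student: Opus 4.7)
The plan is to combine the partial-synchrony assumption with the MMU's guarantee that a valid booth is eventually available. Fix a data entry proposed by a correct proposer $V_p$. The liveness claim reduces to showing that (i) the entry is eventually appended to the total order log in some ordering instance and (ii) the periodic consensus instance covering the window that contains that entry eventually collects $2f$ \texttt{PC-Reply} messages in \textcolor{black}{C}\ref{c:c3}. Both reduce to the same kind of argument: as long as a booth remains valid throughout a single round, $V_p$ can always gather enough honest replies.

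First I would fix an ordering instance with booth $\mathcal{V}_o$. Since $|\mathcal{V}_o| \geq 3f+1$ with at most $f$ Byzantine members, at least $2f$ validators in $\mathcal{V}_o$ are correct. After GST, partial synchrony bounds every message and local computation by some $\delta$, so each correct validator receives the \texttt{Pre-Order} message, performs the \textcolor{black}{O}\ref{ordering:2} verification (which succeeds because $V_p$ is correct and assigns a fresh $i$), and returns a valid \texttt{PO-Reply} within $\delta$. Hence $V_p$ fills $\mathcal{R}_{o_i}$ with $2f$ valid replies in bounded time and completes \textcolor{black}{O}\ref{ordering:3}/\textcolor{black}{O}\ref{ordering:4}. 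The symmetric argument, applied to \texttt{Pre-Commit}/\texttt{PC-Reply} with booth $\mathcal{V}_c$, closes the consensus phase and establishes the theorem whenever the booth persists for the duration of the instance.

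Next I would handle the case in which a booth becomes invalid mid-instance. By the construction in \S\ref{sec:boothcomposer}, the MMU removes any booth once $f$ of its members become unreachable and, on request, supplies a new booth from its queue; the instance aborts and retries with the fresh booth. I would treat this as an induction over retries, showing that each attempt costs at most $O(\delta)$ time after GST and that only finitely many retries are needed before the MMU returns a booth that remains stable for an entire attempt. Because the pivot validator is included in every booth by the minimum-quorum requirement stated in \S\ref{sec:system-model}, this member is available to witness progress across retries, which simplifies the bookkeeping.

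The hard part will be formalizing the MMU's progress guarantee, since liveness really rests on an availability hypothesis not spelled out in the consensus protocol itself. In analogy with the standard ``eventually correct and synchronous leader'' assumption used in PBFT-style liveness arguments, I would state explicitly that after GST the MMU's queue contains at least one booth with $n \geq 3f+1$ members whose connectivity remains stable for at least one round-trip time $2\delta$. Under this auxiliary assumption, together with partial synchrony and the validity established by Theorem~\ref{validity}, every data entry proposed by a correct $V_p$ eventually elicits the $2f$ \texttt{PC-Reply} messages required by \textcolor{black}{C}\ref{c:c3}, which is precisely the statement of the theorem.
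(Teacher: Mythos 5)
Your argument is correct in substance, but it is organized quite differently from the paper's own (very brief) proof. The paper's proof leans almost entirely on one structural invariant of shuttled consensus --- that no two adjacent consensus instances leave uncommitted log entries in between --- so that a transaction which fails to commit in one instance is simply picked up by a later one; combined with partial synchrony after GST and Theorems~\ref{validity} and~\ref{safety}, this is taken to yield eventual replies. You instead decompose liveness into (i) a quorum-counting argument inside a single stable booth (at least $2f$ correct validators among the $n-1 \geq 3f$ validators, so the $2f$ replies required in \textcolor{black}{O}\ref{ordering:3} and \textcolor{black}{C}\ref{c:c3} arrive within bounded time after GST) and (ii) a retry argument across booth failures, driven by the MMU. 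What your route buys is precision about where the liveness hypothesis actually lives: you correctly observe that partial synchrony alone is not enough and that an explicit eventual-availability assumption on the MMU's booth queue is needed (the paper only gestures at this in \S\ref{sec:boothcomposer} by saying instances ``wait for new booths''). What the paper's route buys, and what your writeup underplays, is the coverage property of the periodic consensus instances: even if a whole consensus instance is abandoned, its window of log entries is not orphaned but inherited by a subsequent instance. A complete proof really wants both ingredients, so consider folding the paper's ``no gaps between adjacent consensus instances'' observation into your retry induction.
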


\begin{proof}
\algo assumes partial synchrony for liveness. After GST, the message delay and processing time are bounded. 
Since no two adjacent consensus instances leave uncommitted log entries in between, with Theorems~\ref{validity} and~\ref{safety}, a transaction that failed to be committed in a consensus instance will always be handled in a future consensus instance. Thus, during sufficiently long periods of synchrony, a correct proposer eventually receives replies from validators.
\end{proof}

\section{Gossiping}
\label{sec:gossiping}
The gossiping module strengthens system robustness by further disseminating committed transactions to the network. This module does not affect the correctness of consensus and can be applied based on preferences. When gossiping is enabled, after the consensus module commits a transaction, the proposer sends a \texttt{Gossip} message piggybacking the transaction to other connected vehicles, which will keep propagating this message to their connected vehicles. Therefore, the transmission forms a \emph{propagation tree} where the proposer is the root node named \emph{propagator} and other nodes are \emph{gossipers}.

Each \texttt{Gossip} message has a \emph{lifetime} (denoted by $\lambda$) that determines the number of propagators it traverses in a transmission link; i.e., the height of the propagation tree. For example, in Figure~\ref{fig:propagation-tree}, $V_1$ is the proposer and has direct connections with $V_2$, $V_3$, and $V_4$. The gossip message has a lifetime of $2$, and its propagation stops when a path has included $2$ propagators; e.g., path <$V_1$, $V_2$, $V_5$>. Specifically, we describe the gossiping workflow as follows.

\begin{SCfigure}[1][t]
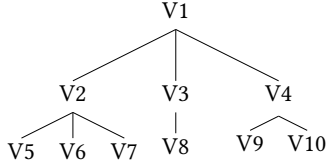

    \centering
    \begin{adjustbox}{width=0.5\linewidth}
    \Tree[.V1 [.V2 [.V5 ] [.V6 ] [.V7 ]]
          [.V3 [.V8 ]]
          [.V4 [.V9 ] [.V10 ]]
    ]
    \end{adjustbox}
    \caption{A propagation tree of a gossip message with $\lambda=2$. $V_1$ is the proposer and others are propagators.}
    \label{fig:propagation-tree}
\end{SCfigure}

\begin{description}
    \item[Proposer:] 
    The proposer ($V_p$) starts to disseminate committed transactions (the \texttt{Commit} message in \textcolor{black}{C}\ref{commitmsg}) by taking the following actions. 
    \begin{enumerate}[(a)]
            \item $V_p$ creates a lifetime (e.g., $\lambda = 3$) for a \texttt{Commit} message and a set $\mathcal{G}$ for recording the traverse information of the message's transmission path.
            
            \item $V_p$ hashes the \texttt{Commit} message and signs the combination of the hash result ($h_c$) and $\lambda$ to obtain a signature $\sigma_{V_p}$. Then, it adds a \emph{traverse entry} containing $\lambda$, $\sigma_{V_p}$, and $C_{V_p}$ to $\mathcal{G}$; i.e., $\mathcal{G}.add([\lambda, \sigma_{V_p}, C_{V_p}])$, where $C_{V_p}$ is $V_p$'s address and public key (\S~\ref{sec:boothcomposer}). 
            
            \item $V_p$ sends $\langle \texttt{Gossip}, \langle \texttt{Commit} \rangle, h_c, tx, \mathcal{G} \rangle$ to connected vehicles that are not included in the consensus process of the $\texttt{Commit}$ message and then creates a list for receiving \texttt{ack} messages from propagators.
    
        \end{enumerate}

    \item[Propagator:] A propagator ($V_i$) stores a valid \texttt{Gossip} message and may further disseminate the message if there is lifetime remaining.
    
    \begin{enumerate}[(a)]
        \item \label{gossip:propagator:a}
        $V_i$ verifies a received \texttt{Gossip} message by four criteria: \circle<1> it has not previously received this message; \circle<2> signatures in $\mathcal{G}$ are valid; \circle<3> $\lambda$s in $\mathcal{G}$ are strictly monotonically decreasing; and \circle<4> the message still has remaining lifetime; i.e., $\lambda_{min} = min \lbrace \mathcal{G.{\lambda}} \rbrace > 0$. If the verification succeeds, $V_i$ sends an \texttt{ack} message to this message's proposer and the pivot validator, registering itself on the propagator list.
        
        \item
        $V_i$ decrements this gossip message's lifetime; i.e., $\lambda_{new} {=} \max \lbrace \lambda_{min}-1, 0 \rbrace$. If $\lambda_{new} > 0$, then $V_i$ prepares to further disseminate the message. It signs the combination of $h_c$ and $\lambda_{new}$ to obtain a signature $\sigma_{V_i}$ and adds a new traverse entry into $\mathcal{G}$; i.e., $\mathcal{G}.add([\lambda_{new}, \sigma_{V_i}, V_i])$.
        
        \item $V_i$ sends $\langle \texttt{Gossip}, \langle \texttt{Commit} \rangle, h_c, tx, \mathcal{G} \rangle$ to other connected vehicles, excluding those it receives this \texttt{Gossip} message from (in 
        \textcolor{black}{(\ref{gossip:propagator:a})}).
    
    \end{enumerate}
    
\end{description}

The gossip module in \algo is an additional mechanism that provides an additional layer of redundancy. Since each propagator is registered to the proposer, the proposer can connect to propagators and read its own portion of data. This is especially important when consensus is conducted in small booth sizes, as the failure of vehicles can significantly affect the system's overall resilience.
Although gossiping does not guarantee the delivery of all transactions to all participants in the network, it can help to increase the probability of transaction delivery even in the face of network partitions or failures, strengthening the system's robustness without interfering with the core consensus process.

\section{The storage module}
\label{sec:storage}

\algo stores committed data batches in a \emph{storage master}. The storage master creates storage master instances (SMIs) for transactions from different vehicles, as a vehicle may operate in different roles in different booths (i.e., $\gamma > 1$). For example, in Figure~\ref{fig:vstorage}, the storage master has three instance zones containing proposer, validator, and gossiper SMIs (if the gossiping module is applied). A vehicle's storage master has at most one proposer SMI because it is the only proposer of its own \algo instance. When the vehicle joins other booths (operating as a validator), its storage master creates validator SMIs. The number of proposer and validator SMIs equals the vehicle's catering factor ($\gamma$). When gossiping is enabled, a vehicle may receive gossiping messages, and its storage master creates gossiper SMIs storing consensus results disseminated from other vehicles.

Each storage master instance has two layers: temporary and permanent, with the purpose of maximizing the usage of vehicles' limited storage space. Unlike the blockchain platforms working on servers (e.g., HyperLedger Fabric~\cite{androulaki2018hyperledger}, CCF~\cite{russinovich2019ccf}, and Diem~\cite{diem2020}), \algo operates among vehicles, which usually have only limited storage capability. 

The temporary storage temporarily stores transactions based on a predefined policy. \algo's implementation uses a timing policy that defines a time period (denoted by $\tau$) that the storage master stores a transaction (e.g., $\tau = 24$ hours). A transaction is registered in the temporary storage by default and is deleted after $\tau$ time if the user issues no further command. In addition, temporarily stored transactions can be moved to permanent storage and kept permanently per user request. This option hands over the control of storage to users. For example, when users suspect malfunctions of their vehicles, they may want to keep related transactions as evidence and move them to permanent storage. We introduce four SMI APIs for the layered design.

\begin{itemize}
\item \texttt{sm.RegisterToTemp(\&tx, time.Now())} is used to register a transaction to the temporary storage layer.

\item \texttt{sm.CleanUpTemp(time.Now())} is a daemon process that calculates the remaining time for transactions in the temporary storage and deletes expired ones.

\item \texttt{sm.MoveToPerm(\&tx)} is called by users, moving selected transactions from temporary to permanent storage.

\item \texttt{sm.DeletePerm(\&tx)} is called by users. A user may delete a permanently stored transaction after it has served the user's purpose.

\end{itemize}

The policy of temporary storage can be implemented differently. For example, a policy that defines a fixed size of storage space may apply. When the temporary storage exceeds the predefined size, it clears up old transactions in a fist-in-first-out (FIFO) manner. 

\begin{figure}[t]
    \centering
    \includegraphics[width=\linewidth]{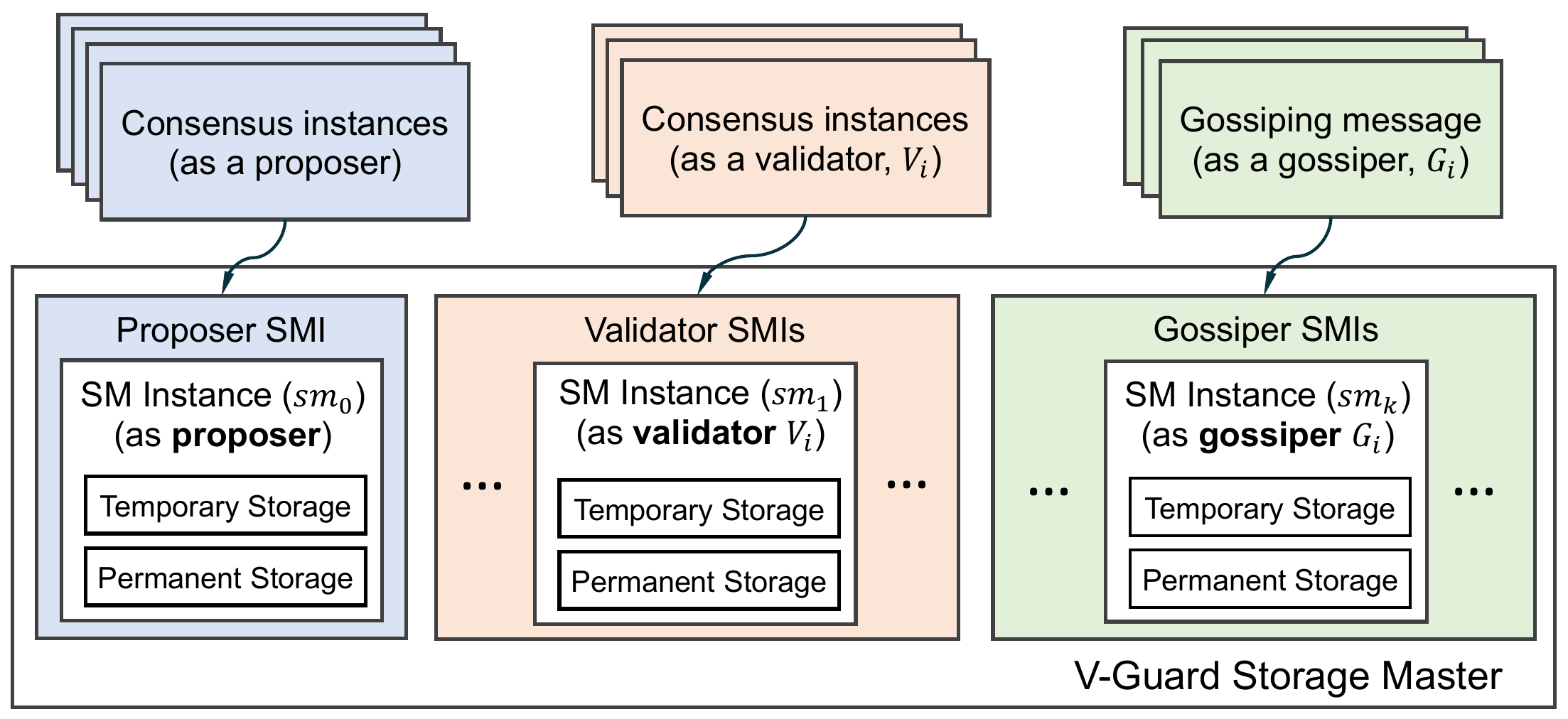}
    \caption{Storage master instances (SMIs). A vehicle may operate a proposer SMI when it conducts consensus for itself, validator SMIs when it participates in other vehicles' consensus, and gossiper SMIs when it enables the gossip module.}
    \label{fig:vstorage}
\end{figure}

The policy-based storage module can significantly reduce the use of storage space as vehicles often have limited storage capability. Note that the pivot validator can permanently store all data as automobile manufacturers often operate on their cloud platforms with scalable storage devices.

\begin{figure*}[h!]
\minipage{0.33\textwidth}
    \centering
    \includegraphics[width=0.99\textwidth]{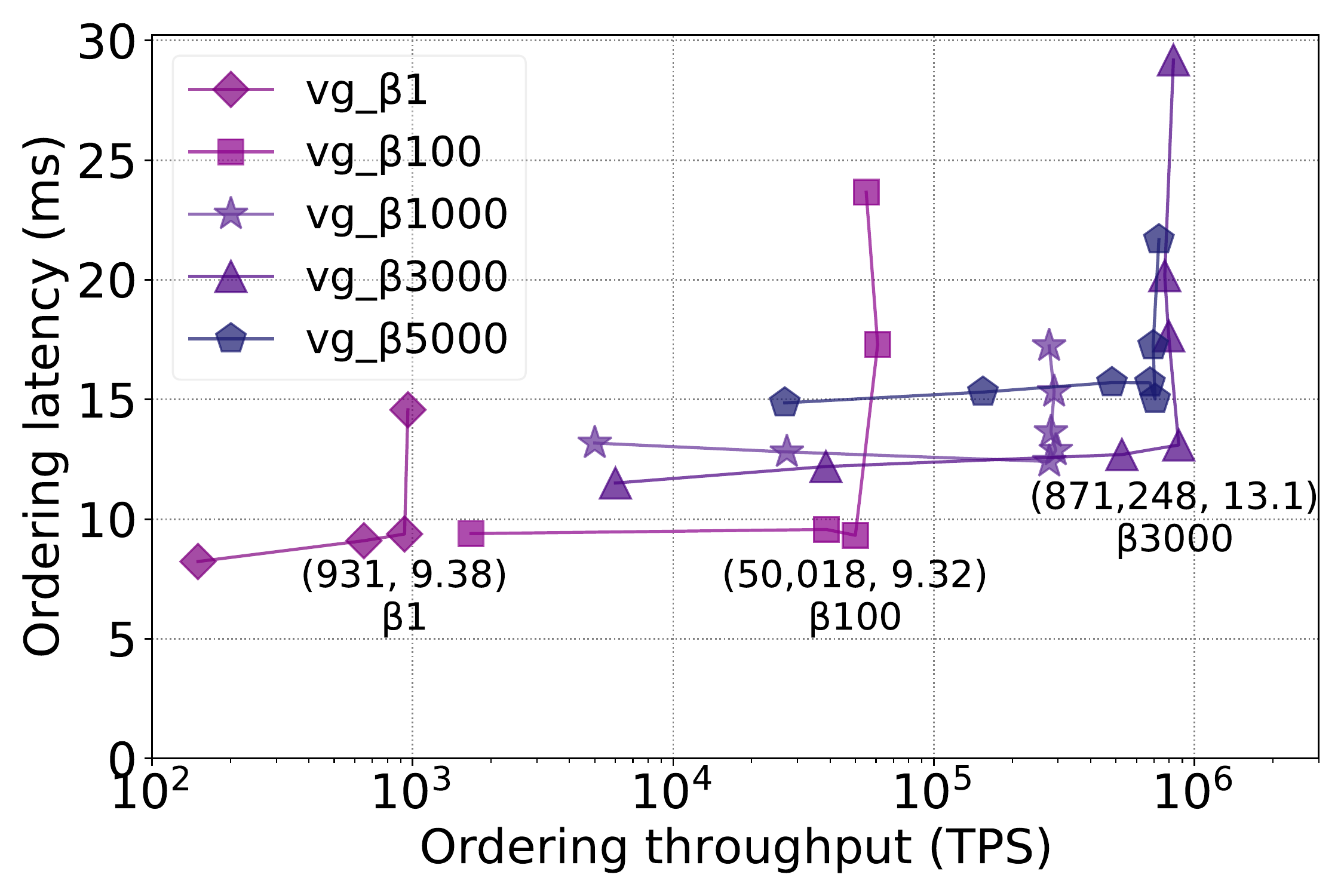}
    \caption{Ordering throughput vs. latency of \algo under varying batch sizes, where $\beta = 3000$ obtains the best performance under $n=4$, $\delta=0$, and $m=32$.}
    \label{fig:eval-vgordering-n4}
\endminipage \hfill
\minipage{0.33\textwidth}
    \centering
    \includegraphics[width=0.99\textwidth]{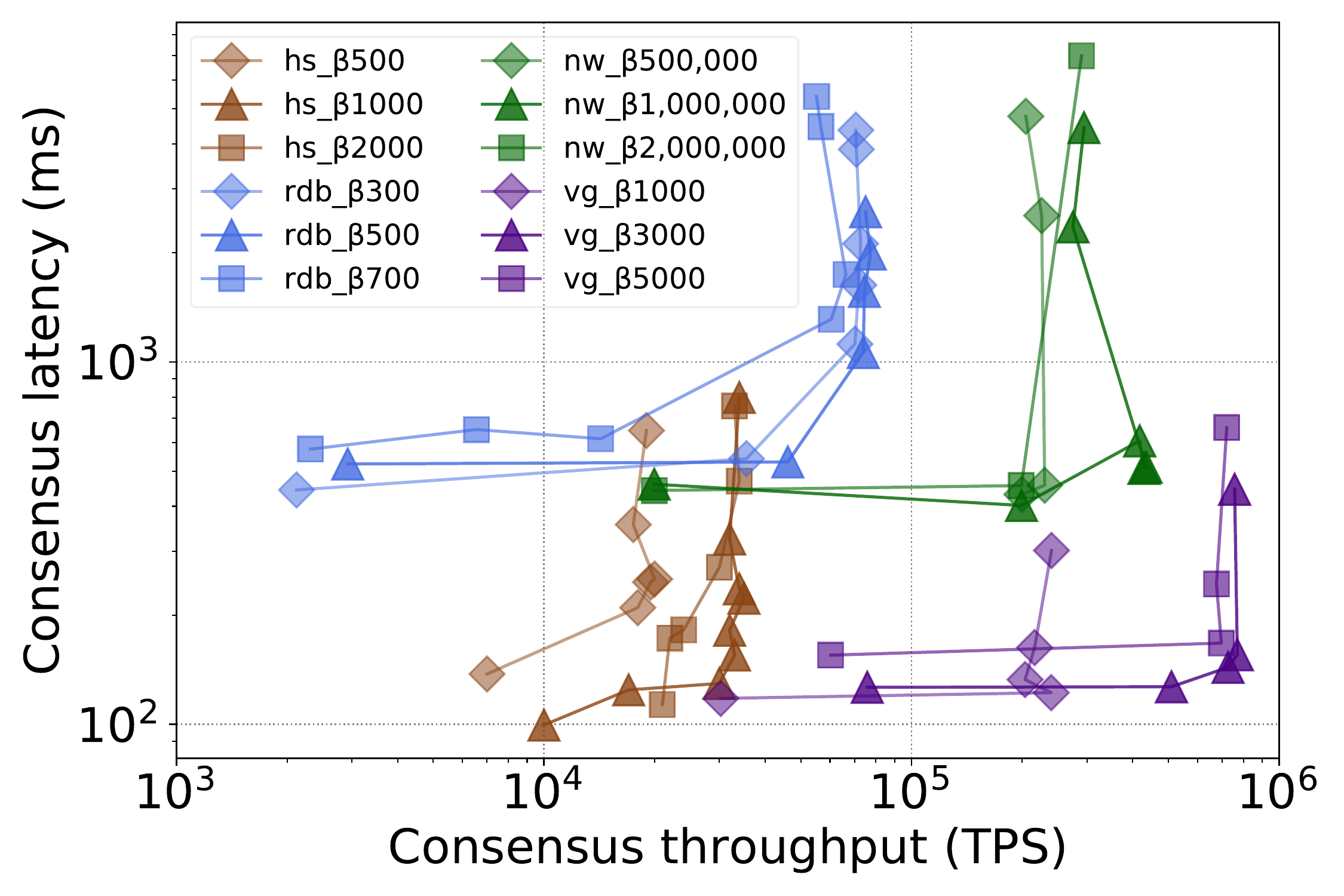}
    \caption{Consensus throughput vs. latency comparison of \algo and its baselines with varying batch sizes under $n=4$, $\delta=0$, and $m=32$.}
    \label{fig:eval-consensus-n4}
\endminipage  \hfill
\minipage{0.31\textwidth}
    \renewcommand{\arraystretch}{1.2}
    \begin{adjustbox}{width=\linewidth}
    \centering
        \begin{tabular}{rlrr}
        & Batch size  & { Throughput} & { Latency} \\
        &  & (TPS) & (ms)\\
        \hline
        HotStuff & ($\beta{=}1000$) & $34,015$ & 155\\
        ResilientDB & ($\beta{=}500$)  & $80,580$ & 4368\\
        Narwhal & ($\beta{=}10^{6*}$) & $423,058$ & $516$\\
        \textbf{V-Guard} (o) & ($\beta{=}3000$)  & $871,248$ & $13$ \\
         (c) &  & $765,930$ & $143$ \\
        \hline
        \end{tabular}
    \end{adjustbox}
    {
    \footnotesize
    * Narwhal uses fixed-bytes buffers as batches, so its buffer size $=m\times \beta_b$.
    }

    \caption{A summary of the peak performance of V-Guard (including ordering (o) and consensus (c)) and its baselines under their best batch sizes, where $n=4$, $\delta=0$, and $m=32$.}
    \label{fig:eval-table}
\endminipage
\end{figure*}

\section{Evaluation}
We compared the end-to-end performance of our \algo implementation (\texttt{vg}) against three state-of-the-art baseline approaches (using their open-source implementations): HotStuff~\cite{yin2019hotstuff, libhotstuff} (\texttt{hs}), a linear BFT protocol, whose variant is used in the Facebook Diem blockchain~\cite{diem2020}; ResilientDB~\cite{gupta13resilientdb, rdbgithub} (\texttt{rdb}), a BFT key-value store using PBFT~\cite{castro1999practical} as its consensus protocol; and Narwhal~\cite{danezis2022narwhal, nwgithub} (\texttt{nw}), a DAG-based mempool protocol that distributes transactions before consensus. We deployed them on $4$, $16$, $31$, $61$, and $100$ VM instances on a popular cloud platform~\cite{canadacloud}. Each instance includes a machine with $2$ vCPUs supported by $2.40$ GHz Intel Xeon processors (Skylake) with a cache size of $16$ MB, $7.5$ GB of RAM, and $90$ GB of disk space running on Ubuntu $18.04.1$ LTS. The TCP/IP bandwidth measured by \texttt{iperf} and the raw network latency between two instances are around $400$ Megabytes/s and $2$~ms, respectively. 

We introduce the following notation for reporting the results:

\begin{tabular}{@{}rp{7.2cm}}
    $m$ & The message size (bytes); i.e., the size of a data entry sent to achieve consensus, excluding any message header (other parameters in the message).\\
    $\beta$ & The batch size; i.e., the number of data entries (transactions) in a data batch sent to achieve consensus.\\
    $n$ & The number of nodes (the membership/cluster size). \\
    $\delta$ & Emulated network delay implemented by \texttt{netem}.\\
\end{tabular}

\textbf{Network delays.} We used \texttt{netem} to implement additional network delays of $\delta=0, 10\pm5, 50\pm10, 100\pm20$~ms in normal distribution at all scales. The performance change under the emulated network delays implies the performance in more complicated networks (e.g., wireless networks).

\textbf{Workloads.} \algo is a versatile blockchain that enables users to define their own message types. The size of messages is a critical factor in replication as all messages need to be serialized for transmission over the network, regardless of their type. To ensure a fair comparison with other baselines, the evaluation presented in the paper focuses on the replication performance of plain messages with different message sizes of $m=32$, $64$, and $128$ bytes. However, it is worth noting that \algo's messaging service is designed to be easily adaptable to support different types of messages. Some projects have already leveraged this feature and developed their own message types, which can be found on the \algo GitHub page~\cite{vguardsource}.

The rest of this section is organized as follows. \S~\ref{sec:eval:stationary} shows the evaluation result of \algo comparing against its baselines in a static membership; \S~\ref{sec:eval:dynamic} reports the performance of \algo's unique feature operating under dynamic memberships; and \S~\ref{sec:eval:summary} summarizes the evaluation results.

\subsection{Performance in static memberships}
\label{sec:eval:stationary}

We set \algo's consensus interval time to $\Delta=100$~ms; i.e., a consensus instance is scheduled every $100$~ms. Performance was reported in throughput and latency, where throughput is calculated as transactions per second (TPS) (i.e., the number of entries each algorithm commits per second), and latency was measured as the time elapsed between sending a request and receiving confirmation.

Batching is applied to measure performance. Under increasing batch sizes, throughput gains a diminishing marginal increase and peaks at a particular (the best) batch size; after this, throughput decreases as batching becomes more costly than other factors in the consensus process. 

To make fair comparisons, \algo operates only one instance (i.e., $\gamma =1$) when comparing against its baselines in \S\ref{sec:eval-sta-peak} and~\ref{sec:eval-sta-scalability}; the performance of multiple simultaneously running \algo instances is shown \S\ref{sec:eval:sta-catering}.

\begin{figure*}[t]
\minipage{\textwidth}
\centering
\includegraphics[width=0.5\textwidth]{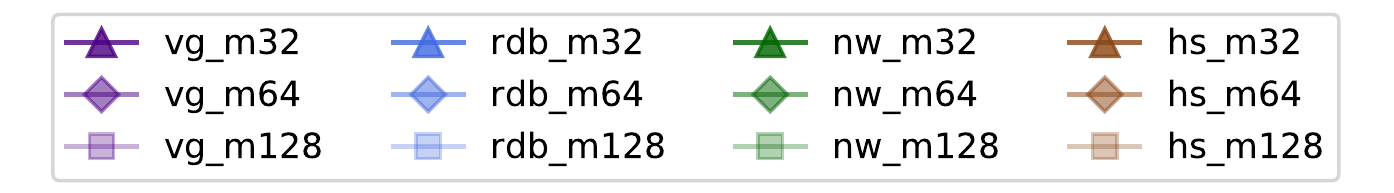}\endminipage\\
\minipage{0.49\textwidth}
    \includegraphics[width=\textwidth]{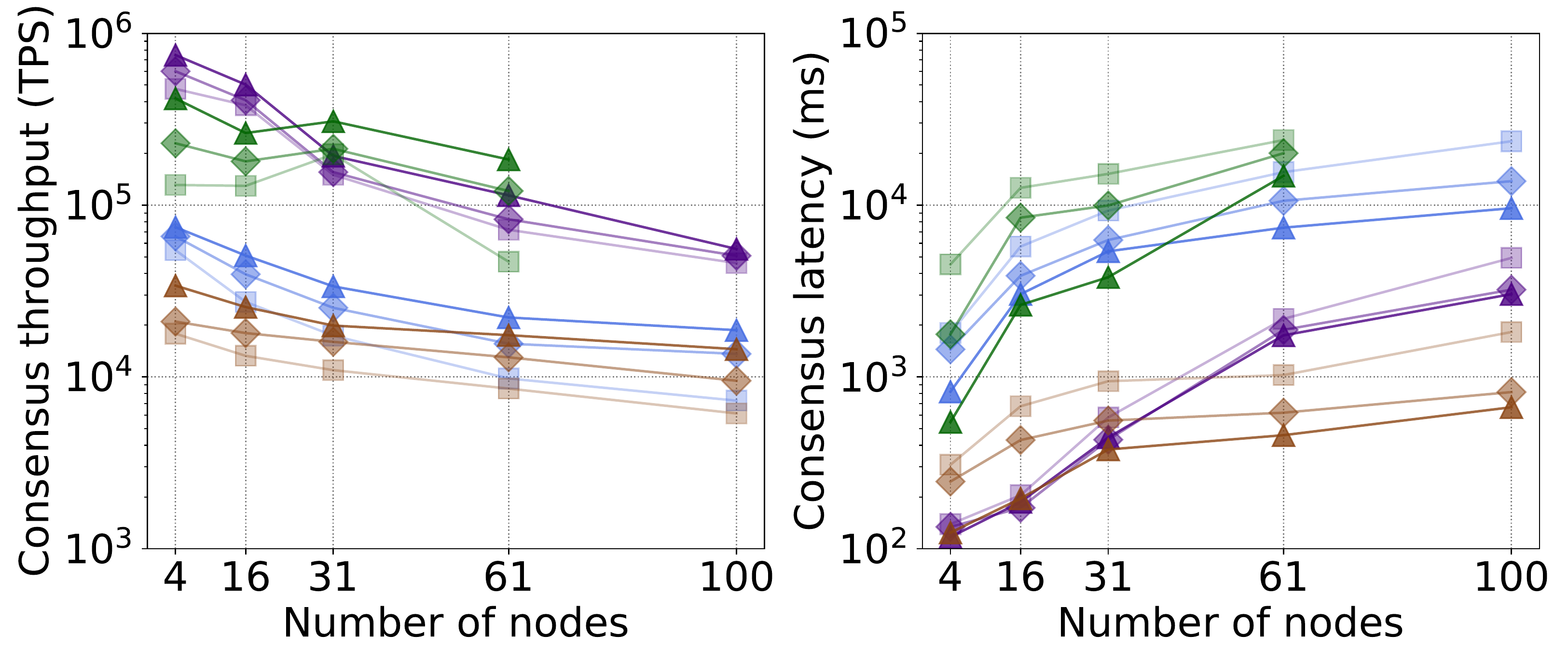}
    \subcaption{$\delta = 0$ ms.}
    \label{fig:eval-scala-l0}
\endminipage \hfill
\minipage{0.49\textwidth}
    \includegraphics[width=\textwidth]{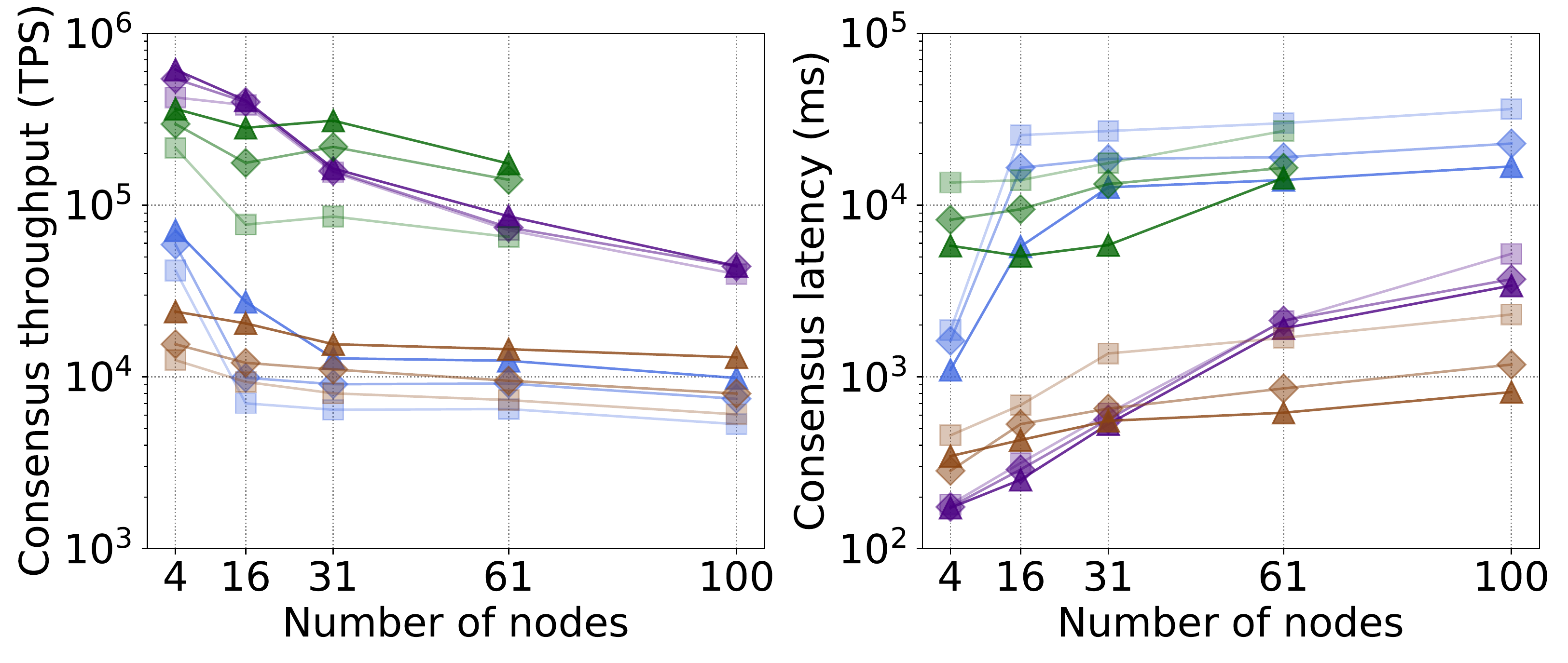}
    \subcaption{$\delta = 10\pm5$ ms.}
    \label{fig:eval-scala-l10}
\endminipage\\
\minipage{0.49\textwidth}
    \includegraphics[width=\textwidth]{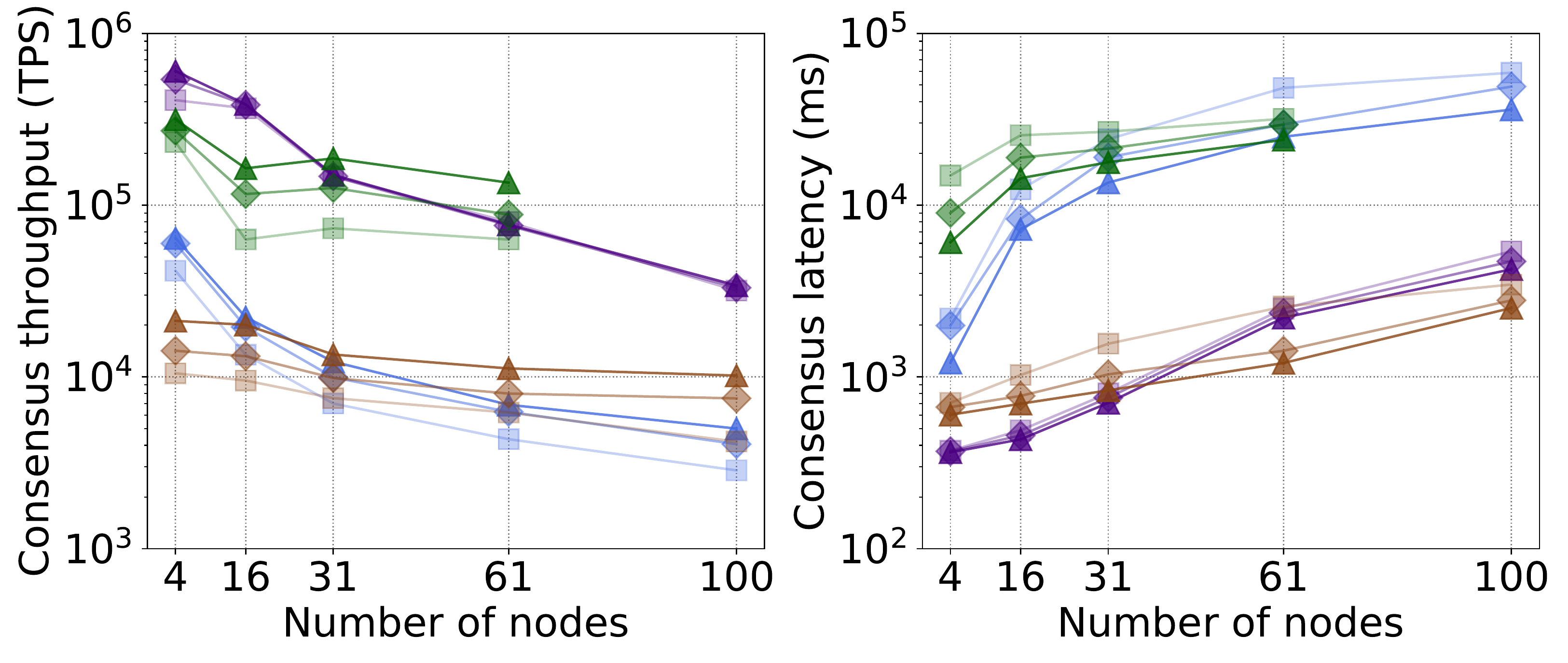}
    \subcaption{$\delta = 50\pm10$ ms.}
    \label{fig:eval-scala-l50}
\endminipage \hfill
\minipage{0.49\textwidth}
    \includegraphics[width=\textwidth]{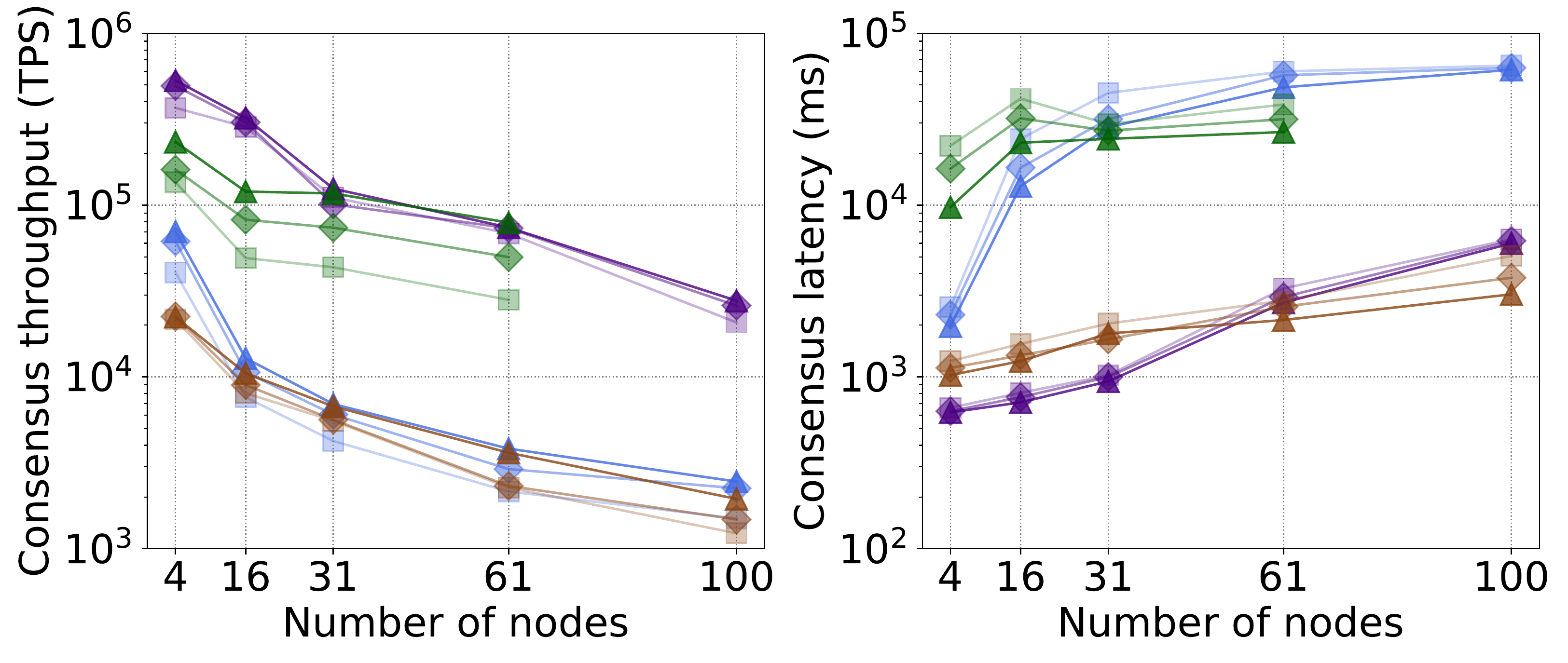}
    \subcaption{$\delta = 100\pm20$ ms.}
    \label{fig:eval-scala-l100}
\endminipage\\
\caption[Caption for LOF]{Performance comparisons of \algo and its baselines using their best batch sizes with varying message sizes of $m=32$, $64$, and $128$ bytes under implemented network delays of $\delta=0$, $10\pm5$, $50\pm10$, and $100\pm20$~ms (normal distribution) at increasing system scales of $n=4$, $16$, $31$, $61$, and $100$ nodes. (Narwhal's mempool protocol was unable to initialize because its required memory space exceeded our system resources under $n=100$.\protect\footnotemark
)}
\label{fig:eval:scalability}
\end{figure*}

\subsubsection{Peak performance}
\label{sec:eval-sta-peak}
The peak performance of all approaches occurred at $n{=}4$ nodes under a message size of $m{=}32$ bytes without implemented network delay (i.e., $\delta=0$~ms). To obtain \algo's peak performance, we kept increasing its batch size and monitoring the result (shown in Figure~\ref{fig:eval-vgordering-n4}): the ordering throughput first increases with diminishing marginal returns (e.g., $931>\frac{50,018}{100}=500.18 > \frac{871,248}{3000}=290$) and peaks at a throughput of $871,248$~TPS with a latency of $13$~ms under $\beta=3000$. After that, under a higher $\beta$, the ordering throughput decreases while the latency increases. We used this method to measure the peak performance of consensus throughput and latency for all approaches (shown in Figure~\ref{fig:eval-consensus-n4}). Under each batch size, we kept increasing clients to saturate the network (until an elbow of a curve occurs). We show the result of \algo and its baselines under three batch sizes that result in near-optimal (e.g., \texttt{vg\_1000}), optimal (e.g., \texttt{vg\_3000}), and over-optimal performance (e.g., \texttt{vg\_5000}). 

Figure~\ref{fig:eval-table} summarises the peak performance of all approaches, where \algo outperforms its baselines both in throughput and latency. Since consensus instances periodically commit ordered data entries, ordering and consensus have similar throughput.
\algo's high performance benefits from its implementation and design. Compared with the implementations of HotStuff~\cite{libhotstuff} and ResilientDB~\cite{rdbgithub}, \algo makes use of threshold signatures and obtains a linear message complexity. In addition, the separated ordering reduces latency by issuing ordering instances concurrently, and shuttled consensus significantly reduces message passing, which saves network bandwidth and increases throughput.

\subsubsection{Scalability}
\label{sec:eval-sta-scalability}

We then measured the consensus performance of \algo and its baselines at increasing message sizes and system scales under varying network delays (shown in Figure~\ref{fig:eval:scalability}). We set each approach's batch size to their best (optimal) batch sizes under $n=4$. The results show that when the message size and network delay increase, all approaches experience a decline in throughput and an increase in latency; in particular, message sizes have a more pronounced effect on throughput while network delays have a more direct effect on consensus latency.

\begin{figure*}[t]
\minipage{0.49\textwidth}
    \includegraphics[width=\textwidth]{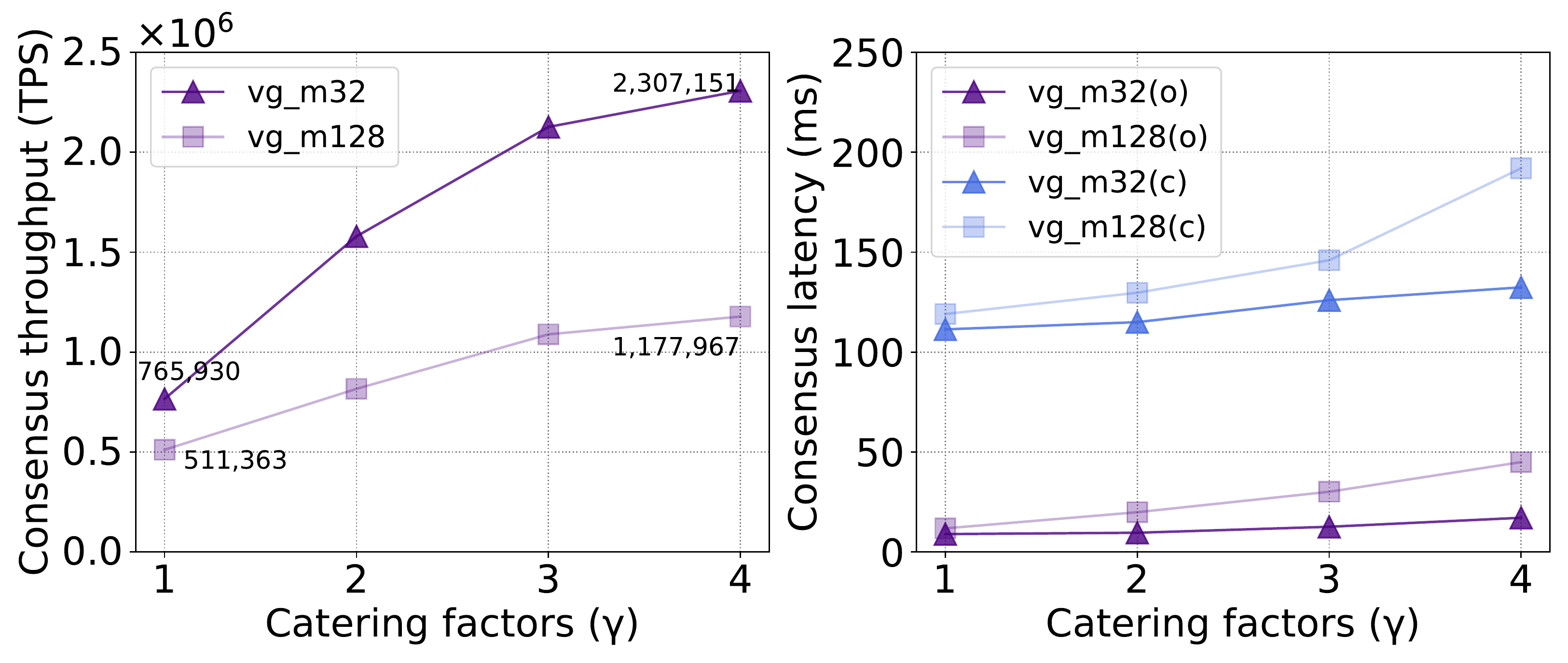}
    \subcaption{ $n=4$.}
    \label{fig:eval-catering-n4}
\endminipage \hfill
\minipage{0.495\textwidth}
    \includegraphics[width=\textwidth]{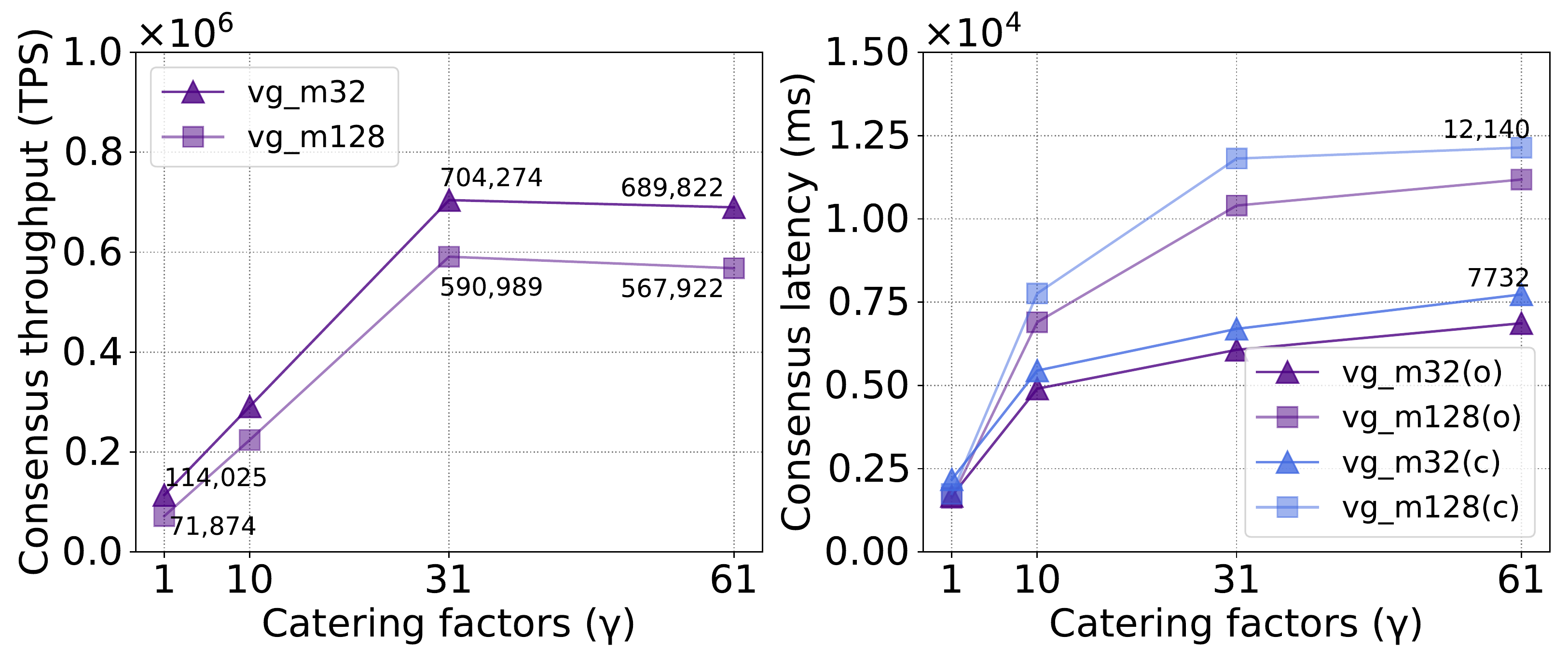}
    \subcaption{$n=61$.}
    \label{fig:eval-catering-n61}
\endminipage
\caption{\algo's performance under increasing catering factors, where $\delta=0$, $m=32$, and $\beta = 3000$.}
\label{fig:eval:catering}
\end{figure*}

\begin{figure*}[t]
\minipage{0.49\textwidth}
    \includegraphics[width=\textwidth]{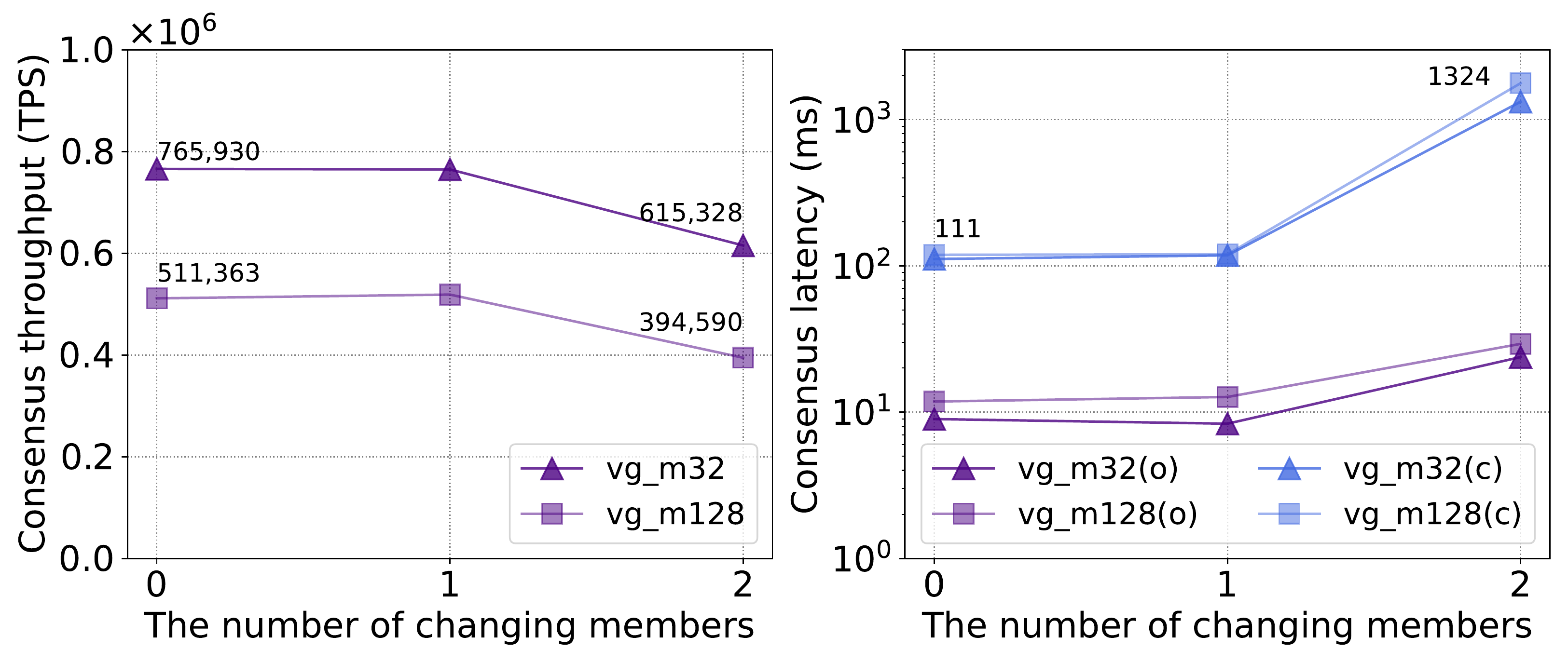}
    \subcaption{ $n=4$.}
    \label{fig:eval-dynamic-n4}
\endminipage \hfill
\minipage{0.49\textwidth}
    \includegraphics[width=\textwidth]{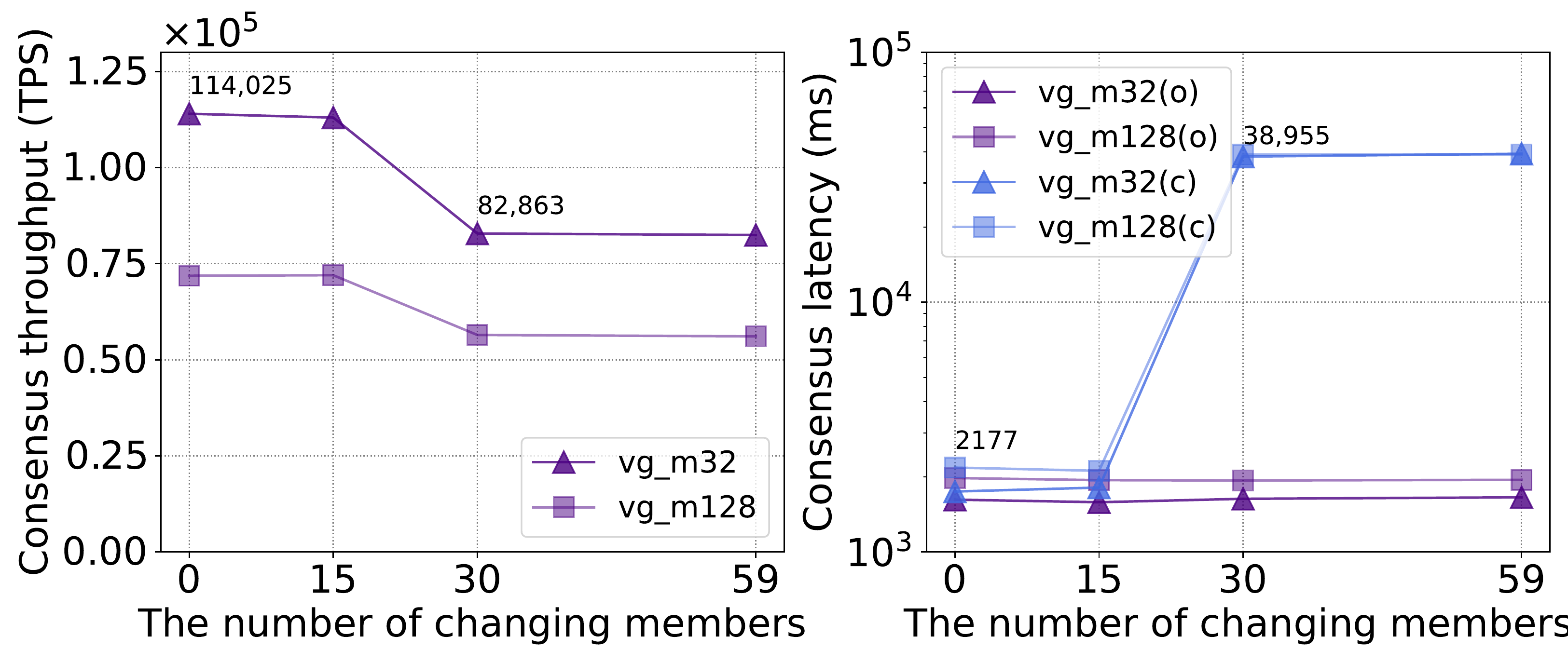}
    \subcaption{$n=61$.}
    \label{fig:eval-dynamic-n61}
\endminipage
\caption{\algo's performance under dynamic memberships, where $\delta=0$ and $\beta = 3000$. Since a \algo membership always includes the proposer and pivot validator, the maximum changing members allowed under $n=4$ and $61$ are $2$ and $59$ nodes, respectively.}
\label{fig:eval-dynamic-mem}
\end{figure*}

Under increasing system scales, \algo, HotStuff, and ResilientDB, which use a single leader to manage the consensus and transaction distribution, witness a decrease in throughput. In contrast, Narwhal observes an increase in throughput when the system scales up from $16$ to $31$ nodes. Narwhal develops a mempool protocol that distills the distribution of transactions from consensus. Each node in Narwhal runs as a worker and a validator, where the worker disseminates transactions by the DAG-based mempool protocol, while the validator concentrates on achieving consensus on the hash of transactions. Consequently, Narwhal's throughput increases (from $16$ to $31$ nodes) when more transactions are fed into the network but then starts to decline when the transaction distribution becomes a heavier burden than the consensus process. While Narwhal exhibits impressive throughput, its DAG structure results in high latency, making Narwhal one of the highest latency algorithms. 

\algo manages to achieve the highest throughput at small scales. This is due to the separation of the ordering and consensus processes, which significantly reduces message passing overhead. As the system scales up, \algo, despite only having one node (the proposer) to disseminate data entries, still maintains a comparable throughput to Narwhal, even though Narwhal uses a DAG structure and has multiple nodes disseminating data entries. Overall, \algo's separation of ordering and consensus processes allows it to achieve high throughput with low latency, even at large scales.

In addition, \algo shows the mildest performance drop under increasing message workloads, system scales, and network delays among all approaches. 
In particular, compared to its baselines, \algo's consensus throughput and latency are the least affected by higher network delays.
For example, when $n=4$ and $m=32$, its throughput drops from $765,930$~TPS under $\delta=0$~ms to $531,594$~TPS under $\delta=100\pm20$~ms while latency rises from $143$ to $622$~ms. 
 \algo's ordering instances take place in a non-blocking manner, without waiting for the completion of their prior instances, and shuttled consensus instances amortize the network delay among data entries included in $\Delta$. 
This feature makes \algo a strong fit for V2X applications, where higher network latency is the norm.

\footnotetext{{Narwhal's mempool protocol requires static memory allocations based on the system and batch sizes; when $n{=}100$, it requires each worker to allocate $5.7$ Gigabytes of static memory for the protocol alone before the program runs, which cannot be accommodated by our system resources.}}

\subsubsection{Performance under increasing catering factors}
\label{sec:eval:sta-catering}

We also evaluated \algo's performance when each node operates multiple \algo instances (i.e., $\gamma > 1$). For example, in a $4$-node system ($n_1$ to $n_4$) with $\gamma = 2$, $n_1$ and $n_2$ start their individual \algo instances and operate as the proposer; $n_1$ also operates as a validator in $n_2$'s instance and vice versa. The remaining two nodes, $n_3$ and $n_4$, operate as a validator in both $n_1$'s and $n_2$'s instances.

We measured \algo's throughput and latency at two system scales under increasing catering factors from $\gamma=1$ to $\gamma=n$ (shown in Figure~\ref{fig:eval:catering}). When $n{=}4$, the throughput increases with diminishing gains and peaks at a throughput of 2,307,151 TPS, while both the ordering and consensus latencies experience a slight increase. When $n{=}61$ and $m=32$, throughput peaks at $\gamma=31$  with $704,274$ TPS and decreases at $\gamma=61$ with $689,822$ TPS. With a higher catering factor, the performance of each instance drops as system resources are amortized among all instances.

\subsection{Performance in dynamic memberships}
\label{sec:eval:dynamic}

To demonstrate \algo's capability to achieve consensus under frequently changing membership, we assessed its performance under the worst-case scenario, wherein each data entry is processed and committed in separate booths. This means that membership alterations occur at the same rate as transaction consensus.

We measured the performance operating under dynamic memberships at two scales: $n{=}4$ and $n{=}61$. Since \algo always includes the proposer and pivot validator in membership, it allows a maximum of $2$ and $59$ members (nodes) to change under $n{=}4$ and $n{=}61$, respectively. In this case, each ordering and consensus instance gets a new booth from the MMU. For example, when $n{=}4$ with $2$ changing member, the ordering and consensus booths have one different member:

\vspace{0.5em}
\begin{tabular}{@{}rl}
    \vspace{0.1em}
    Ordering booth: & [$n_1$ (as $V_p$), $n_2$ (as $V_{\pi}$), $n_3$, and $n_4$]\\
    Consensus booth: & [$n_1$ (as $V_p$), $n_2$ (as $V_{\pi}$), \textcolor{blue}{$n_5$}, and \textcolor{blue}{$n_6$}]
    \vspace{0.1em}
\end{tabular}
\vspace{0.5em}

When $n{=}61$, we evaluated the performance of throughput and latency under $15$, $30$, and $59$ changing members under two messaging workloads ($m{=}32$ and $128$ bytes).

The results show that dynamic memberships do not affect performance when the number of changing members is less than $f$. Since each ordering and consensus instance takes place in different booths, when a consensus instance includes new members that have not previously seen the ordered data entries, new members must undergo a time-consuming task to validate all signatures of data entries included in a transaction. 
Thus, old members reply to the proposer faster than the new members; when the remaining old members can still form a quorum ($2f+1$), the system performance is not affected by changed members; e.g., $1$ and $15$ changing members in Figure~\ref{fig:eval-dynamic-n4} and Figure~\ref{fig:eval-dynamic-n61}, respectively.

In contrast, if the number of remaining old members in a consensus quorum drops below $2f+1$, the quorum will have to wait for new members to join. This can significantly impact performance, as new members become the bottleneck. The consensus latency increases as new members must complete the validation task and respond to the proposer, leading to a drop in overall performance. When $m=32$, throughput drops by $19\%$ and $27\%$ under $n=4$ and $n=61$, respectively. With $30$ new members in $n=61$, the consensus latency increases from $2177$ to $38,955$~ms.

\subsection{Summary of results}
\label{sec:eval:summary}
Under static membership, \algo outperforms its baselines in terms of throughput and latency (\S~\ref{sec:eval-sta-peak}). When the system scales up, \algo shows a similar performance decrease as its baselines under varying workloads (\S~\ref{sec:eval-sta-scalability}). Under increasing network delays, \algo exhibits a lower decrease in throughput than its baselines at all scales (\S~\ref{sec:eval-sta-scalability}). 
With increasing \algo instances ($\gamma >1$), throughput first increases to a peak and then starts to decrease (\S~\ref{sec:eval:sta-catering}). Under dynamic membership, \algo's performance remains unchanged when new members are fewer than $f$; with more than $f$ new members, latency surges drastically as the newly joined members must undergo the verification process (\S~\ref{sec:eval:dynamic}).

\section{Related work}
\label{sec:relatedwork}
\textbf{Blockchains in V2X networks.} 
V2X blockchains should be able to achieve consensus in real-time for a large volume of data, such as vehicle status data generated by Telsa Autopilot~\cite{autopilot} and Cadillac SuperCruise~\cite{cadillacsupercurise}, and cope with unstable connections among vehicles on the roads~\cite{ibmblockchain, limechain}.
Depending on whether membership management is required, blockchains can be categorized as permissionless and permissioned blockchains. Permissionless blockchains, applying the Proof-of-X protocol family (e.g., Proof-of-Work~\cite{nakamoto2019bitcoin} and Proof-of-Stake~\cite{gilad2017algorand}) to reach probabilistic agreement, allow participants to join anonymously. However, due to their limited performance, permissionless blockchains are often not deployed in latency-critical V2X blockchain applications coping with a large volume of data in real-time~\cite{hassija2020dagiov, elagin2020technological, yuan2016towards, jiang2018blockchain}, despite some privacy-critical V2X use cases because they require anonymous participation~\cite{peng2021privacy, feng2019survey}. 

In contrast, permissioned blockchains, registering participants with identities, achieve high throughput and low latency where participants can be verified by their signatures. They often apply BFT consensus algorithms (e.g., PBFT~\cite{castro1999practical}, SBFT~\cite{gueta2019sbft} and HotStuff~\cite{yin2019hotstuff}) to achieve deterministic agreement among all participants and have been favored by pioneering blockchain testbeds initiated by major automobile manufacturers~\cite{bmwbc, mercedesbc, vwbc1, toyotabc}. However, they are not adaptive to dynamic memberships; when membership changes, consensus has to temporarily stop and facilitates for updating configuration profiles~\cite{rodrigues2010automatic}.

\textbf{BFT consensus algorithms.}
Consensus algorithms are at the core of providing safety and liveness for state machine replication (SMR)~\cite{schneider1990state}. BFT SMR provides an abstraction of consensus services that agree on a total order of requests in the presence of Byzantine failures~\cite{schneider1984byzantine, lamport1982byzantine, lamport2019byzantine, attiya2004distributed}. Byzantine failures are becoming more common in blockchains as participating users may intentionally break the protocol (launching Byzantine attacks) to gain more profit~\cite{zhang2020byzantine, daian2020flash, lewis2014flash}.

Leader-based BFT algorithms have been widely used by permissioned blockchains, such as HyperLedger Fabric~\cite{androulaki2018hyperledger} and Diem~\cite{diem2020}. After PBFT~\cite{castro1999practical} pioneered a practical BFT solution, numerous approaches have been proposed for optimizations from various aspects. They use speculative decisions and reduce workloads for the single leader~\cite{kotla2007zyzzyva, duan2014hbft, gunn2019making}, 
develop high performance implementations~\cite{el2019blockchaindb, bessani2014state, guerraoui2010next, buchman2016tendermint, sousa2018byzantine, qi2021bidl}, 
reduce messaging costs~\cite{song2008bosco, yang2021dispersedledger, distler2011increasing, martin2006fast, liu2016xft, neiheiser2021kauri}, obtains linear message complexity by using threshold signatures~\cite{gueta2019sbft, yin2019hotstuff, zhang2021prosecutor}, 
distill transaction distribution from consensus~\cite{danezis2022narwhal}, 
limit faulty behavior using trusted hardware~\cite{behl2017hybrids, chun2007attested, kapitza2012cheapbft, levin2009trinc, decouchant2022damysus}, 
improve fault tolerance thresholds~\cite{hou2022using, xiang2021strengthened}, offer confidentiality protection dealing with secret sharing~\cite{vassantlal2022cobra},
and apply accountability for individual participants~\cite{civit2021polygraph, shamis2022ia, neu2021ebb}. 

In addition to leader-based algorithms, leaderless BFT algorithms avoid single points of failure and single leader bottlenecks~\cite{lamport2011brief, miller2016honey, crain2018dbft, duan2018beat, suri2021basil}. Without a leader, leaderless BFT algorithms often utilize binary Byzantine agreement~\cite{mostefaoui2014signature} to jointly form quorums~\cite{ben1994asynchronous} but suffer from high message and time costs for conflict resolutions.

\textbf{Membership reconfiguration.} Membership changes involve reconfiguration of consensus systems~\cite{lamport2010reconfiguring}, where the agreement protocol needs to switch to the updated replica-group configuration~\cite{distler2021byzantine}. 
Under loosely-consistent models, some approaches implement peer-to-peer lookups that determine a system membership as the neighborhood of a certain identifier (e.g, \cite{stoica2001chord, castro2002secure}). Under strong-consistent models, a sequence of consistent views of the system memberships is required to ensure the safety of consensus processes~\cite{johansen2006fireflies, cowling2009census, guerraoui2001generic}. For example, MC~\cite{rodrigues2010automatic} proposes a membership reconfiguration approach for PBFT~\cite{castro1999practical}, keeping track of system membership and periodically notifying other system nodes of membership changes. However, transaction consensus must wait for each membership reconfiguration to be completed, which makes it inefficient during frequent membership changes.

\section{Conclusions}
\algo is a permissioned blockchain that efficiently achieves consensus when system memberships constantly change (dynamic environment). It binds each data entry with a system membership profile and obtains chained consensus results of both proposed data and membership profiles. 
In addition, it separates ordering from consensus and allows data entries to be ordered and committed among different members. This separation brings high performance both in static and dynamic environments. Under static memberships, \algo's peak throughput is $1.8\times$ higher than its best-performing baseline with the lowest consensus latency. In a dynamic environment, \algo exhibits flexibility to operate under changing memberships achieving $73\%$ of the throughput of its performance in static environments.

\bibliographystyle{ACM-Reference-Format}
\bibliography{ref}

%%% -*-BibTeX-*-
%%% Do NOT edit. File created by BibTeX with style
%%% ACM-Reference-Format-Journals [18-Jan-2012].

\begin{thebibliography}{122}

%%% ====================================================================
%%% NOTE TO THE USER: you can override these defaults by providing
%%% customized versions of any of these macros before the \bibliography
%%% command.  Each of them MUST provide its own final punctuation,
%%% except for \shownote{}, \showDOI{}, and \showURL{}.  The latter two
%%% do not use final punctuation, in order to avoid confusing it with
%%% the Web address.
%%%
%%% To suppress output of a particular field, define its macro to expand
%%% to an empty string, or better, \unskip, like this:
%%%
%%% \newcommand{\showDOI}[1]{\unskip}   % LaTeX syntax
%%%
%%% \def \showDOI #1{\unskip}           % plain TeX syntax
%%%
%%% ====================================================================

\ifx \showCODEN    \undefined \def \showCODEN     #1{\unskip}     \fi
\ifx \showDOI      \undefined \def \showDOI       #1{#1}\fi
\ifx \showISBNx    \undefined \def \showISBNx     #1{\unskip}     \fi
\ifx \showISBNxiii \undefined \def \showISBNxiii  #1{\unskip}     \fi
\ifx \showISSN     \undefined \def \showISSN      #1{\unskip}     \fi
\ifx \showLCCN     \undefined \def \showLCCN      #1{\unskip}     \fi
\ifx \shownote     \undefined \def \shownote      #1{#1}          \fi
\ifx \showarticletitle \undefined \def \showarticletitle #1{#1}   \fi
\ifx \showURL      \undefined \def \showURL       {\relax}        \fi
% The following commands are used for tagged output and should be
% invisible to TeX
\providecommand\bibfield[2]{#2}
\providecommand\bibinfo[2]{#2}
\providecommand\natexlab[1]{#1}
\providecommand\showeprint[2][]{arXiv:#2}

\bibitem[\protect\citeauthoryear{Abraham, Lee, Brady, Fitzgerald, Mehler,
  Reimer, and Coughlin}{Abraham et~al\mbox{.}}{2016a}]%
        {abraham2016autonomous}
\bibfield{author}{\bibinfo{person}{Hillary Abraham}, \bibinfo{person}{Chaiwoo
  Lee}, \bibinfo{person}{Samantha Brady}, \bibinfo{person}{Craig Fitzgerald},
  \bibinfo{person}{Bruce Mehler}, \bibinfo{person}{Bryan Reimer}, {and}
  \bibinfo{person}{Joseph~F Coughlin}.} \bibinfo{year}{2016}\natexlab{a}.
\newblock \showarticletitle{Autonomous vehicles, trust, and driving
  alternatives: A survey of consumer preferences}.
\newblock \bibinfo{journal}{\emph{In Transportation Research Board 96th Annual
  Meeting}} (\bibinfo{year}{2016}).
\newblock


\bibitem[\protect\citeauthoryear{Abraham, Malkhi, Nayak, Ren, and
  Spiegelman}{Abraham et~al\mbox{.}}{2016b}]%
        {abraham2016solida}
\bibfield{author}{\bibinfo{person}{Ittai Abraham}, \bibinfo{person}{Dahlia
  Malkhi}, \bibinfo{person}{Kartik Nayak}, \bibinfo{person}{Ling Ren}, {and}
  \bibinfo{person}{Alexander Spiegelman}.} \bibinfo{year}{2016}\natexlab{b}.
\newblock \showarticletitle{Solida: A blockchain protocol based on
  reconfigurable byzantine consensus}.
\newblock \bibinfo{journal}{\emph{arXiv preprint arXiv:1612.02916}}
  (\bibinfo{year}{2016}).
\newblock


\bibitem[\protect\citeauthoryear{Administration}{Administration}{[n.d.]}]%
        {vin}
\bibfield{author}{\bibinfo{person}{National Highway Traffic~Safety
  Administration}.} \bibinfo{year}{[n.d.]}\natexlab{}.
\newblock \bibinfo{title}{{Vehicle Identification Numbers (VINs)}}.
\newblock \bibinfo{howpublished}{\url{http://www.ibc.ca/qc/}}.
\newblock


\bibitem[\protect\citeauthoryear{Administration}{Administration}{2021}]%
        {nhtsareport}
\bibfield{author}{\bibinfo{person}{National Highway Traffic~Safety
  Administration}.} \bibinfo{year}{2021}\natexlab{}.
\newblock \bibinfo{title}{{Summary Report: Standing General Order on Crash
  Reporting for Automated Driving Systems}}.
\newblock
  \bibinfo{howpublished}{\url{https://www.nhtsa.gov/sites/nhtsa.gov/files/2022-06/ADS-SGO-Report-June-2022.pdf}}.
\newblock


\bibitem[\protect\citeauthoryear{Administration}{Administration}{2022}]%
        {waymoscrashes}
\bibfield{author}{\bibinfo{person}{National Highway Traffic~Safety
  Administration}.} \bibinfo{year}{2022}\natexlab{}.
\newblock \bibinfo{title}{{Waymo tops NHTSA’s list for autonomous vehicle
  crashes}}.
\newblock
  \bibinfo{howpublished}{\url{https://www.freightwaves.com/news/waymo-tops-nhtsas-list-of-autonomous-vehicle-crashes/}}.
\newblock


\bibitem[\protect\citeauthoryear{Americas}{Americas}{2021}]%
        {v2xconnectivity}
\bibfield{author}{\bibinfo{person}{5G Americas}.}
  \bibinfo{year}{2021}\natexlab{}.
\newblock \bibinfo{title}{{Vehicular Connectivity: C-V2X \& 5G}}.
\newblock
  \bibinfo{howpublished}{\url{https://www.5gamericas.org/wp-content/uploads/2021/09/Vehicular-Connectivity-C-V2X-and-5G-InDesign-1.pdf}}.
\newblock


\bibitem[\protect\citeauthoryear{Androulaki, Barger, Bortnikov, Cachin,
  Christidis, De~Caro, Enyeart, Ferris, Laventman, Manevich,
  et~al\mbox{.}}{Androulaki et~al\mbox{.}}{2018}]%
        {androulaki2018hyperledger}
\bibfield{author}{\bibinfo{person}{Elli Androulaki}, \bibinfo{person}{Artem
  Barger}, \bibinfo{person}{Vita Bortnikov}, \bibinfo{person}{Christian
  Cachin}, \bibinfo{person}{Konstantinos Christidis}, \bibinfo{person}{Angelo
  De~Caro}, \bibinfo{person}{David Enyeart}, \bibinfo{person}{Christopher
  Ferris}, \bibinfo{person}{Gennady Laventman}, \bibinfo{person}{Yacov
  Manevich}, {et~al\mbox{.}}} \bibinfo{year}{2018}\natexlab{}.
\newblock \showarticletitle{Hyperledger fabric: a distributed operating system
  for permissioned blockchains}. In \bibinfo{booktitle}{\emph{Proceedings of
  the Thirteenth EuroSys Conference}}. ACM, \bibinfo{pages}{30}.
\newblock


\bibitem[\protect\citeauthoryear{Association}{Association}{2018}]%
        {americanbar}
\bibfield{author}{\bibinfo{person}{American~Bar Association}.}
  \bibinfo{year}{2018}\natexlab{}.
\newblock \bibinfo{title}{{Legal implications of driverless cars.}}
\newblock
  \bibinfo{howpublished}{\url{https://www.americanbar.org/news/abanews/publications/youraba/2018/december-2018/legal-implications-of-driverless-cars/}}.
\newblock


\bibitem[\protect\citeauthoryear{Association and (AASA)}{Association and
  (AASA)}{2021}]%
        {yourcaryourdatayourchoice}
\bibfield{author}{\bibinfo{person}{Auto~Care Association} {and}
  \bibinfo{person}{Automotive Aftermarket Suppliers~Association (AASA)}.}
  \bibinfo{year}{2021}\natexlab{}.
\newblock \bibinfo{title}{{Your Car. Your Data. Your Choice.}}
\newblock
  \bibinfo{howpublished}{\url{https://www.jobbernation.ca/your-car-your-data-your-choice-comes-to-canada/}}.
\newblock


\bibitem[\protect\citeauthoryear{Attiya and Welch}{Attiya and Welch}{2004}]%
        {attiya2004distributed}
\bibfield{author}{\bibinfo{person}{Hagit Attiya} {and}
  \bibinfo{person}{Jennifer Welch}.} \bibinfo{year}{2004}\natexlab{}.
\newblock \bibinfo{booktitle}{\emph{Distributed computing: fundamentals,
  simulations, and advanced topics}}. Vol.~\bibinfo{volume}{19}.
\newblock \bibinfo{publisher}{John Wiley \& Sons}.
\newblock


\bibitem[\protect\citeauthoryear{Authority}{Authority}{2021}]%
        {forbesmercedesbc2}
\bibfield{author}{\bibinfo{person}{Ford Authority}.}
  \bibinfo{year}{2021}\natexlab{}.
\newblock \bibinfo{title}{Ford Blockchain Initiative Aims To Identify
  Unethically Sourced Materials}.
\newblock
  \bibinfo{howpublished}{\url{https://fordauthority.com/2021/12/ford-blockchain-initiative-aims-to-identify-unethically-sourced-materials/}}.
\newblock


\bibitem[\protect\citeauthoryear{Base}{Base}{2023}]%
        {vguardsource}
\bibfield{author}{\bibinfo{person}{V-Guard~Code Base}.}
  \bibinfo{year}{2023}\natexlab{}.
\newblock \bibinfo{title}{{V-Guard BFT.}}
\newblock \bibinfo{howpublished}{\url{https://github.com/vguardbc/vguardbft}}.
\newblock


\bibitem[\protect\citeauthoryear{Behl, Distler, and Kapitza}{Behl
  et~al\mbox{.}}{2017}]%
        {behl2017hybrids}
\bibfield{author}{\bibinfo{person}{Johannes Behl}, \bibinfo{person}{Tobias
  Distler}, {and} \bibinfo{person}{R{\"u}diger Kapitza}.}
  \bibinfo{year}{2017}\natexlab{}.
\newblock \showarticletitle{{Hybrids on steroids: SGX-based high performance
  BFT}}. In \bibinfo{booktitle}{\emph{Proceedings of the Twelfth European
  Conference on Computer Systems}}. \bibinfo{pages}{222--237}.
\newblock


\bibitem[\protect\citeauthoryear{Ben-Or, Kelmer, and Rabin}{Ben-Or
  et~al\mbox{.}}{1994}]%
        {ben1994asynchronous}
\bibfield{author}{\bibinfo{person}{Michael Ben-Or}, \bibinfo{person}{Boaz
  Kelmer}, {and} \bibinfo{person}{Tal Rabin}.} \bibinfo{year}{1994}\natexlab{}.
\newblock \showarticletitle{Asynchronous secure computations with optimal
  resilience}. In \bibinfo{booktitle}{\emph{Proceedings of the thirteenth
  annual ACM symposium on Principles of distributed computing}}.
  \bibinfo{pages}{183--192}.
\newblock


\bibitem[\protect\citeauthoryear{benz Group}{benz Group}{2020}]%
        {mercedesbc}
\bibfield{author}{\bibinfo{person}{Mercedes benz Group}.}
  \bibinfo{year}{2020}\natexlab{}.
\newblock \bibinfo{title}{Once round the block, please!}
\newblock
  \bibinfo{howpublished}{\url{https://group.mercedes-benz.com/innovation/blockchain-2.html}}.
\newblock


\bibitem[\protect\citeauthoryear{Bessani, Alchieri, Sousa, Oliveira, and
  Pedone}{Bessani et~al\mbox{.}}{2020}]%
        {bessani2020byzantine}
\bibfield{author}{\bibinfo{person}{Alysson Bessani}, \bibinfo{person}{Eduardo
  Alchieri}, \bibinfo{person}{Jo{\~a}o Sousa}, \bibinfo{person}{Andr{\'e}
  Oliveira}, {and} \bibinfo{person}{Fernando Pedone}.}
  \bibinfo{year}{2020}\natexlab{}.
\newblock \showarticletitle{From byzantine replication to blockchain: Consensus
  is only the beginning}. In \bibinfo{booktitle}{\emph{2020 50th Annual
  IEEE/IFIP International Conference on Dependable Systems and Networks
  (DSN)}}. IEEE, \bibinfo{pages}{424--436}.
\newblock


\bibitem[\protect\citeauthoryear{Bessani, Sousa, and Alchieri}{Bessani
  et~al\mbox{.}}{2014}]%
        {bessani2014state}
\bibfield{author}{\bibinfo{person}{Alysson Bessani}, \bibinfo{person}{Jo{\~a}o
  Sousa}, {and} \bibinfo{person}{Eduardo~EP Alchieri}.}
  \bibinfo{year}{2014}\natexlab{}.
\newblock \showarticletitle{State machine replication for the masses with
  {BFT}-{SMART}}. In \bibinfo{booktitle}{\emph{2014 44th Annual IEEE/IFIP
  International Conference on Dependable Systems and Networks}}. IEEE,
  \bibinfo{pages}{355--362}.
\newblock


\bibitem[\protect\citeauthoryear{blockchain to improve data management in
  the~public sector}{blockchain to improve data management in the~public
  sector}{2017}]%
        {mckinseyblockchain}
\bibfield{author}{\bibinfo{person}{Using blockchain to improve data management
  in the~public sector}.} \bibinfo{year}{2017}\natexlab{}.
\newblock \bibinfo{title}{{McKinsey}}.
\newblock
  \bibinfo{howpublished}{\url{https://www.mckinsey.com/capabilities/mckinsey-digital/our-insights/using-blockchain-to-improve-data-management-in-the-public-sector}}.
\newblock


\bibitem[\protect\citeauthoryear{BMW}{BMW}{2021}]%
        {bmwbc}
\bibfield{author}{\bibinfo{person}{BMW}.} \bibinfo{year}{2021}\natexlab{}.
\newblock \bibinfo{title}{{How blockchain automotive solutions can help
  drivers}}.
\newblock
  \bibinfo{howpublished}{\url{https://www.bmw.com/en/innovation/blockchain-automotive.html}}.
\newblock


\bibitem[\protect\citeauthoryear{Board}{Board}{2021}]%
        {telsadrivelesscrash}
\bibfield{author}{\bibinfo{person}{National Transportation~Safety Board}.}
  \bibinfo{year}{2021}\natexlab{}.
\newblock \bibinfo{title}{{A driverless Tesla crashed and burned for four
  hours, police said, killing two passengers in Texas}}.
\newblock
  \bibinfo{howpublished}{\url{https://www.washingtonpost.com/nation/2021/04/19/tesla-texas-driverless-crash/}}.
\newblock


\bibitem[\protect\citeauthoryear{Buchman}{Buchman}{2016}]%
        {buchman2016tendermint}
\bibfield{author}{\bibinfo{person}{Ethan Buchman}.}
  \bibinfo{year}{2016}\natexlab{}.
\newblock \emph{\bibinfo{title}{{Tendermint: Byzantine fault tolerance in the
  age of blockchains}}}.
\newblock \bibinfo{thesistype}{Ph.D. Dissertation}.
\newblock


\bibitem[\protect\citeauthoryear{Cadillac}{Cadillac}{[n.d.]}]%
        {cadillacsupercurise}
\bibfield{author}{\bibinfo{person}{GM Cadillac}.}
  \bibinfo{year}{[n.d.]}\natexlab{}.
\newblock \bibinfo{title}{{Cadillac SuperCurise}}.
\newblock
  \bibinfo{howpublished}{\url{https://www.cadillac.com/world-of-cadillac/innovation/super-cruise}}.
\newblock


\bibitem[\protect\citeauthoryear{CARSURANCE}{CARSURANCE}{2022}]%
        {carinsurance}
\bibfield{author}{\bibinfo{person}{CARSURANCE}.}
  \bibinfo{year}{2022}\natexlab{}.
\newblock \bibinfo{title}{{24 Self-Driving Car Statistics \& Facts}}.
\newblock
  \bibinfo{howpublished}{\url{https://carsurance.net/insights/self-driving-car-statistics/}}.
\newblock


\bibitem[\protect\citeauthoryear{Castro, Druschel, Ganesh, Rowstron, and
  Wallach}{Castro et~al\mbox{.}}{2002}]%
        {castro2002secure}
\bibfield{author}{\bibinfo{person}{Miguel Castro}, \bibinfo{person}{Peter
  Druschel}, \bibinfo{person}{Ayalvadi Ganesh}, \bibinfo{person}{Antony
  Rowstron}, {and} \bibinfo{person}{Dan~S Wallach}.}
  \bibinfo{year}{2002}\natexlab{}.
\newblock \showarticletitle{Secure routing for structured peer-to-peer overlay
  networks}.
\newblock \bibinfo{journal}{\emph{ACM SIGOPS Operating Systems Review}}
  \bibinfo{volume}{36}, \bibinfo{number}{SI} (\bibinfo{year}{2002}),
  \bibinfo{pages}{299--314}.
\newblock


\bibitem[\protect\citeauthoryear{Castro, Liskov, et~al\mbox{.}}{Castro
  et~al\mbox{.}}{1999}]%
        {castro1999practical}
\bibfield{author}{\bibinfo{person}{Miguel Castro}, \bibinfo{person}{Barbara
  Liskov}, {et~al\mbox{.}}} \bibinfo{year}{1999}\natexlab{}.
\newblock \showarticletitle{{Practical Byzantine fault tolerance}}. In
  \bibinfo{booktitle}{\emph{OSDI}}, Vol.~\bibinfo{volume}{99}.
  \bibinfo{pages}{173--186}.
\newblock


\bibitem[\protect\citeauthoryear{Chun, Maniatis, Shenker, and Kubiatowicz}{Chun
  et~al\mbox{.}}{2007}]%
        {chun2007attested}
\bibfield{author}{\bibinfo{person}{Byung-Gon Chun}, \bibinfo{person}{Petros
  Maniatis}, \bibinfo{person}{Scott Shenker}, {and} \bibinfo{person}{John
  Kubiatowicz}.} \bibinfo{year}{2007}\natexlab{}.
\newblock \showarticletitle{{Attested append-only memory: Making adversaries
  stick to their word}}.
\newblock \bibinfo{journal}{\emph{ACM SIGOPS Operating Systems Review}}
  \bibinfo{volume}{41}, \bibinfo{number}{6} (\bibinfo{year}{2007}),
  \bibinfo{pages}{189--204}.
\newblock


\bibitem[\protect\citeauthoryear{Civit, Gilbert, and Gramoli}{Civit
  et~al\mbox{.}}{2021}]%
        {civit2021polygraph}
\bibfield{author}{\bibinfo{person}{Pierre Civit}, \bibinfo{person}{Seth
  Gilbert}, {and} \bibinfo{person}{Vincent Gramoli}.}
  \bibinfo{year}{2021}\natexlab{}.
\newblock \showarticletitle{Polygraph: Accountable byzantine agreement}. In
  \bibinfo{booktitle}{\emph{2021 IEEE 41st International Conference on
  Distributed Computing Systems (ICDCS)}}. IEEE, \bibinfo{pages}{403--413}.
\newblock


\bibitem[\protect\citeauthoryear{Clavin, Duan, Zhang, Janeja, Joshi, Yesha,
  Erickson, and Li}{Clavin et~al\mbox{.}}{2020}]%
        {clavin2020blockchains}
\bibfield{author}{\bibinfo{person}{James Clavin}, \bibinfo{person}{Sisi Duan},
  \bibinfo{person}{Haibin Zhang}, \bibinfo{person}{Vandana~P Janeja},
  \bibinfo{person}{Karuna~P Joshi}, \bibinfo{person}{Yelena Yesha},
  \bibinfo{person}{Lucy~C Erickson}, {and} \bibinfo{person}{Justin~D Li}.}
  \bibinfo{year}{2020}\natexlab{}.
\newblock \showarticletitle{Blockchains for government: use cases and
  challenges}.
\newblock \bibinfo{journal}{\emph{Digital Government: Research and Practice}}
  \bibinfo{volume}{1}, \bibinfo{number}{3} (\bibinfo{year}{2020}),
  \bibinfo{pages}{1--21}.
\newblock


\bibitem[\protect\citeauthoryear{Cloud}{Cloud}{[n.d.]}]%
        {canadacloud}
\bibfield{author}{\bibinfo{person}{Compute~Canada Cloud}.}
  \bibinfo{year}{[n.d.]}\natexlab{}.
\newblock \bibinfo{title}{{Arbutus Cloud Platform}}.
\newblock
  \bibinfo{howpublished}{\url{https://arbutus.cloud.computecanada.ca/}}.
\newblock


\bibitem[\protect\citeauthoryear{CNBC}{CNBC}{2018}]%
        {cnbcbc}
\bibfield{author}{\bibinfo{person}{CNBC}.} \bibinfo{year}{2018}\natexlab{}.
\newblock \bibinfo{title}{Four of the world’s largest automakers want to
  bring the ‘blockchain’ to your car}.
\newblock
  \bibinfo{howpublished}{\url{https://www.cnbc.com/2018/05/02/four-of-the-worlds-largest-automakers-want-to-bring-the-blockchain-to-your-car.html}}.
\newblock


\bibitem[\protect\citeauthoryear{Council}{Council}{2021}]%
        {forbesbcautomotive}
\bibfield{author}{\bibinfo{person}{Forbes~Technology Council}.}
  \bibinfo{year}{2021}\natexlab{}.
\newblock \bibinfo{title}{{Blockchain In The Automotive Sector: Three Use Cases
  And Three Challenges}}.
\newblock
  \bibinfo{howpublished}{\url{https://www.forbes.com/sites/forbestechcouncil/2021/12/22/blockchain-in-the-automotive-sector-three-use-cases-and-three-challenges/?sh=5dcd164f2508}}.
\newblock


\bibitem[\protect\citeauthoryear{Cowling, Ports, Liskov, Popa, and
  Gaikwad}{Cowling et~al\mbox{.}}{2009}]%
        {cowling2009census}
\bibfield{author}{\bibinfo{person}{James Cowling}, \bibinfo{person}{Dan~RK
  Ports}, \bibinfo{person}{Barbara Liskov}, \bibinfo{person}{Raluca~Ada Popa},
  {and} \bibinfo{person}{Abhijeet Gaikwad}.} \bibinfo{year}{2009}\natexlab{}.
\newblock \showarticletitle{Census: Location-aware membership management for
  large-scale distributed systems}.
\newblock  (\bibinfo{year}{2009}).
\newblock


\bibitem[\protect\citeauthoryear{Crain, Gramoli, Larrea, and Raynal}{Crain
  et~al\mbox{.}}{2018}]%
        {crain2018dbft}
\bibfield{author}{\bibinfo{person}{Tyler Crain}, \bibinfo{person}{Vincent
  Gramoli}, \bibinfo{person}{Mikel Larrea}, {and} \bibinfo{person}{Michel
  Raynal}.} \bibinfo{year}{2018}\natexlab{}.
\newblock \showarticletitle{{Dbft: Efficient leaderless Byzantine consensus and
  its application to blockchains}}. In \bibinfo{booktitle}{\emph{2018 IEEE 17th
  International Symposium on Network Computing and Applications (NCA)}}. IEEE,
  \bibinfo{pages}{1--8}.
\newblock


\bibitem[\protect\citeauthoryear{Daian, Goldfeder, Kell, Li, Zhao, Bentov,
  Breidenbach, and Juels}{Daian et~al\mbox{.}}{2020}]%
        {daian2020flash}
\bibfield{author}{\bibinfo{person}{Philip Daian}, \bibinfo{person}{Steven
  Goldfeder}, \bibinfo{person}{Tyler Kell}, \bibinfo{person}{Yunqi Li},
  \bibinfo{person}{Xueyuan Zhao}, \bibinfo{person}{Iddo Bentov},
  \bibinfo{person}{Lorenz Breidenbach}, {and} \bibinfo{person}{Ari Juels}.}
  \bibinfo{year}{2020}\natexlab{}.
\newblock \showarticletitle{{Flash boys 2.0: Frontrunning in decentralized
  exchanges, miner extractable value, and consensus instability}}. In
  \bibinfo{booktitle}{\emph{2020 IEEE Symposium on Security and Privacy (SP)}}.
  IEEE, \bibinfo{pages}{910--927}.
\newblock


\bibitem[\protect\citeauthoryear{Danezis, Kokoris-Kogias, Sonnino, and
  Spiegelman}{Danezis et~al\mbox{.}}{2022}]%
        {danezis2022narwhal}
\bibfield{author}{\bibinfo{person}{George Danezis}, \bibinfo{person}{Lefteris
  Kokoris-Kogias}, \bibinfo{person}{Alberto Sonnino}, {and}
  \bibinfo{person}{Alexander Spiegelman}.} \bibinfo{year}{2022}\natexlab{}.
\newblock \showarticletitle{{Narwhal and Tusk: a DAG-based mempool and
  efficient BFT consensus}}. In \bibinfo{booktitle}{\emph{Proceedings of the
  Seventeenth European Conference on Computer Systems}}.
  \bibinfo{pages}{34--50}.
\newblock


\bibitem[\protect\citeauthoryear{Decouchant, Kozhaya, Rahli, and Yu}{Decouchant
  et~al\mbox{.}}{2022}]%
        {decouchant2022damysus}
\bibfield{author}{\bibinfo{person}{J{\'e}r{\'e}mie Decouchant},
  \bibinfo{person}{David Kozhaya}, \bibinfo{person}{Vincent Rahli}, {and}
  \bibinfo{person}{Jiangshan Yu}.} \bibinfo{year}{2022}\natexlab{}.
\newblock \showarticletitle{DAMYSUS: streamlined BFT consensus leveraging
  trusted components}. In \bibinfo{booktitle}{\emph{Proceedings of the
  Seventeenth European Conference on Computer Systems}}.
  \bibinfo{pages}{1--16}.
\newblock


\bibitem[\protect\citeauthoryear{Diem}{Diem}{2020}]%
        {diem2020}
\bibfield{author}{\bibinfo{person}{Diem}.} \bibinfo{year}{2020}\natexlab{}.
\newblock \bibinfo{title}{The Diem blockchain}.
\newblock
  \bibinfo{howpublished}{\url{https://developers.diem.com/main/docs/the-diem-blockchain-paper}}.
\newblock


\bibitem[\protect\citeauthoryear{Distler}{Distler}{2021}]%
        {distler2021byzantine}
\bibfield{author}{\bibinfo{person}{Tobias Distler}.}
  \bibinfo{year}{2021}\natexlab{}.
\newblock \showarticletitle{Byzantine fault-tolerant state-machine replication
  from a systems perspective}.
\newblock \bibinfo{journal}{\emph{ACM Computing Surveys (CSUR)}}
  \bibinfo{volume}{54}, \bibinfo{number}{1} (\bibinfo{year}{2021}),
  \bibinfo{pages}{1--38}.
\newblock


\bibitem[\protect\citeauthoryear{Distler and Kapitza}{Distler and
  Kapitza}{2011}]%
        {distler2011increasing}
\bibfield{author}{\bibinfo{person}{Tobias Distler} {and}
  \bibinfo{person}{R{\"u}diger Kapitza}.} \bibinfo{year}{2011}\natexlab{}.
\newblock \showarticletitle{{Increasing performance in Byzantine fault-tolerant
  systems with on-demand replica consistency}}. In
  \bibinfo{booktitle}{\emph{Proceedings of the sixth conference on Computer
  systems}}. \bibinfo{pages}{91--106}.
\newblock


\bibitem[\protect\citeauthoryear{Duan, Peisert, and Levitt}{Duan
  et~al\mbox{.}}{2014}]%
        {duan2014hbft}
\bibfield{author}{\bibinfo{person}{Sisi Duan}, \bibinfo{person}{Sean Peisert},
  {and} \bibinfo{person}{Karl~N Levitt}.} \bibinfo{year}{2014}\natexlab{}.
\newblock \showarticletitle{{hBFT: speculative Byzantine fault tolerance with
  minimum cost}}.
\newblock \bibinfo{journal}{\emph{IEEE Transactions on Dependable and Secure
  Computing}} \bibinfo{volume}{12}, \bibinfo{number}{1} (\bibinfo{year}{2014}),
  \bibinfo{pages}{58--70}.
\newblock


\bibitem[\protect\citeauthoryear{Duan, Reiter, and Zhang}{Duan
  et~al\mbox{.}}{2018}]%
        {duan2018beat}
\bibfield{author}{\bibinfo{person}{Sisi Duan}, \bibinfo{person}{Michael~K
  Reiter}, {and} \bibinfo{person}{Haibin Zhang}.}
  \bibinfo{year}{2018}\natexlab{}.
\newblock \showarticletitle{{BEAT: Asynchronous BFT made practical}}. In
  \bibinfo{booktitle}{\emph{Proceedings of the 2018 ACM SIGSAC Conference on
  Computer and Communications Security}}. \bibinfo{pages}{2028--2041}.
\newblock


\bibitem[\protect\citeauthoryear{Duan and Zhang}{Duan and Zhang}{2022}]%
        {duan2022foundations}
\bibfield{author}{\bibinfo{person}{Sisi Duan} {and} \bibinfo{person}{Haibin
  Zhang}.} \bibinfo{year}{2022}\natexlab{}.
\newblock \showarticletitle{Foundations of dynamic bft}.
\newblock \bibinfo{journal}{\emph{Cryptology ePrint Archive}}
  (\bibinfo{year}{2022}).
\newblock


\bibitem[\protect\citeauthoryear{(EDR)}{(EDR)}{2020}]%
        {blackbox101}
\bibfield{author}{\bibinfo{person}{Black Box 101: Understanding Event
  Data~Recorders (EDR)}.} \bibinfo{year}{2020}\natexlab{}.
\newblock \bibinfo{title}{{Pushor Mitchell}}.
\newblock
  \bibinfo{howpublished}{\url{https://www.pushormitchell.com/2020/11/black-box-101-understanding-event-data-recorders-edr/}}.
\newblock


\bibitem[\protect\citeauthoryear{El-Hindi, Binnig, Arasu, Kossmann, and
  Ramamurthy}{El-Hindi et~al\mbox{.}}{2019}]%
        {el2019blockchaindb}
\bibfield{author}{\bibinfo{person}{Muhammad El-Hindi}, \bibinfo{person}{Carsten
  Binnig}, \bibinfo{person}{Arvind Arasu}, \bibinfo{person}{Donald Kossmann},
  {and} \bibinfo{person}{Ravi Ramamurthy}.} \bibinfo{year}{2019}\natexlab{}.
\newblock \showarticletitle{{BlockchainDB: A shared database on blockchains}}.
\newblock \bibinfo{journal}{\emph{Proceedings of the VLDB Endowment}}
  \bibinfo{volume}{12}, \bibinfo{number}{11} (\bibinfo{year}{2019}),
  \bibinfo{pages}{1597--1609}.
\newblock


\bibitem[\protect\citeauthoryear{Elagin, Spirkina, Buinevich, and
  Vladyko}{Elagin et~al\mbox{.}}{2020}]%
        {elagin2020technological}
\bibfield{author}{\bibinfo{person}{Vasiliy Elagin}, \bibinfo{person}{Anastasia
  Spirkina}, \bibinfo{person}{Mikhail Buinevich}, {and} \bibinfo{person}{Andrei
  Vladyko}.} \bibinfo{year}{2020}\natexlab{}.
\newblock \showarticletitle{Technological aspects of blockchain application for
  vehicle-to-network}.
\newblock \bibinfo{journal}{\emph{Information}} \bibinfo{volume}{11},
  \bibinfo{number}{10} (\bibinfo{year}{2020}), \bibinfo{pages}{465}.
\newblock


\bibitem[\protect\citeauthoryear{Feng, He, Zeadally, Khan, and Kumar}{Feng
  et~al\mbox{.}}{2019}]%
        {feng2019survey}
\bibfield{author}{\bibinfo{person}{Qi Feng}, \bibinfo{person}{Debiao He},
  \bibinfo{person}{Sherali Zeadally}, \bibinfo{person}{Muhammad~Khurram Khan},
  {and} \bibinfo{person}{Neeraj Kumar}.} \bibinfo{year}{2019}\natexlab{}.
\newblock \showarticletitle{A survey on privacy protection in blockchain
  system}.
\newblock \bibinfo{journal}{\emph{Journal of Network and Computer
  Applications}}  \bibinfo{volume}{126} (\bibinfo{year}{2019}),
  \bibinfo{pages}{45--58}.
\newblock


\bibitem[\protect\citeauthoryear{Fischer, Lynch, and Merritt}{Fischer
  et~al\mbox{.}}{1986}]%
        {fischer1986easy}
\bibfield{author}{\bibinfo{person}{Michael~J Fischer}, \bibinfo{person}{Nancy~A
  Lynch}, {and} \bibinfo{person}{Michael Merritt}.}
  \bibinfo{year}{1986}\natexlab{}.
\newblock \showarticletitle{Easy impossibility proofs for distributed consensus
  problems}.
\newblock \bibinfo{journal}{\emph{Distributed Computing}} \bibinfo{volume}{1},
  \bibinfo{number}{1} (\bibinfo{year}{1986}), \bibinfo{pages}{26--39}.
\newblock


\bibitem[\protect\citeauthoryear{Forbes}{Forbes}{2020a}]%
        {gmblockchainforbes}
\bibfield{author}{\bibinfo{person}{Forbes}.} \bibinfo{year}{2020}\natexlab{a}.
\newblock \bibinfo{title}{{GM Financial Announces Collaboration With Blockchain
  Startup Spring Labs}}.
\newblock
  \bibinfo{howpublished}{\url{https://www.forbes.com/sites/sarahhansen/2019/02/11/gm-financial-announces-collaboration-with-blockchain-startup-spring-labs/?sh=34403f8f31c9}}.
\newblock


\bibitem[\protect\citeauthoryear{Forbes}{Forbes}{2020b}]%
        {forbesmercedesbc}
\bibfield{author}{\bibinfo{person}{Forbes}.} \bibinfo{year}{2020}\natexlab{b}.
\newblock \bibinfo{title}{Mercedes Leveraging Blockchain As Part Of Its
  Ambition2039 Initiative}.
\newblock
  \bibinfo{howpublished}{\url{https://www.forbes.com/sites/darrynpollock/2020/01/31/mercedes-leveraging-blockchain-as-part-of-its-ambition2039-initiative/?sh=79f5c9464b8a}}.
\newblock


\bibitem[\protect\citeauthoryear{Forbes}{Forbes}{2022}]%
        {forbesprivacy}
\bibfield{author}{\bibinfo{person}{Forbes}.} \bibinfo{year}{2022}\natexlab{}.
\newblock \bibinfo{title}{{Privacy Battle Over Connected Cars Takes An
  Interesting Turn In California}}.
\newblock
  \bibinfo{howpublished}{\url{https://www.forbes.com/sites/stevetengler/2022/05/17/privacy-battle-over-connected-cars-takes-an-interesting-turn-in-california/?sh=23c35e164d72}}.
\newblock


\bibitem[\protect\citeauthoryear{Ford}{Ford}{2020}]%
        {forbesmercedesbc1}
\bibfield{author}{\bibinfo{person}{Ford}.} \bibinfo{year}{2020}\natexlab{}.
\newblock \bibinfo{title}{FORD STUDY SHOWS BLOCKCHAIN, DYNAMIC GEOFENCING AND
  PLUG-IN HYBRID VANS CAN HELP IMPROVE URBAN AIR QUALITY}.
\newblock
  \bibinfo{howpublished}{\url{https://media.ford.com/content/fordmedia/feu/en/news/2020/12/17/ford-study-shows-blockchain--dynamic-geofencing-and-plug-in-hybr.html}}.
\newblock


\bibitem[\protect\citeauthoryear{Garcia, Bessani, and Neves}{Garcia
  et~al\mbox{.}}{2019}]%
        {garcia2019lazarus}
\bibfield{author}{\bibinfo{person}{Miguel Garcia}, \bibinfo{person}{Alysson
  Bessani}, {and} \bibinfo{person}{Nuno Neves}.}
  \bibinfo{year}{2019}\natexlab{}.
\newblock \showarticletitle{Lazarus: Automatic management of diversity in bft
  systems}. In \bibinfo{booktitle}{\emph{Proceedings of the 20th International
  Middleware Conference}}. \bibinfo{pages}{241--254}.
\newblock


\bibitem[\protect\citeauthoryear{Garcia, Molina-Galan, Boban, Gozalvez,
  Coll-Perales, {\c{S}}ahin, and Kousaridas}{Garcia et~al\mbox{.}}{2021}]%
        {garcia2021tutorial}
\bibfield{author}{\bibinfo{person}{Mario H~Casta{\~n}eda Garcia},
  \bibinfo{person}{Alejandro Molina-Galan}, \bibinfo{person}{Mate Boban},
  \bibinfo{person}{Javier Gozalvez}, \bibinfo{person}{Baldomero Coll-Perales},
  \bibinfo{person}{Taylan {\c{S}}ahin}, {and} \bibinfo{person}{Apostolos
  Kousaridas}.} \bibinfo{year}{2021}\natexlab{}.
\newblock \showarticletitle{A tutorial on 5G NR V2X communications}.
\newblock \bibinfo{journal}{\emph{IEEE Communications Surveys \& Tutorials}}
  \bibinfo{volume}{23}, \bibinfo{number}{3} (\bibinfo{year}{2021}),
  \bibinfo{pages}{1972--2026}.
\newblock


\bibitem[\protect\citeauthoryear{Gilad, Hemo, Micali, Vlachos, and
  Zeldovich}{Gilad et~al\mbox{.}}{2017}]%
        {gilad2017algorand}
\bibfield{author}{\bibinfo{person}{Yossi Gilad}, \bibinfo{person}{Rotem Hemo},
  \bibinfo{person}{Silvio Micali}, \bibinfo{person}{Georgios Vlachos}, {and}
  \bibinfo{person}{Nickolai Zeldovich}.} \bibinfo{year}{2017}\natexlab{}.
\newblock \showarticletitle{Algorand: Scaling byzantine agreements for
  cryptocurrencies}. In \bibinfo{booktitle}{\emph{Proceedings of the 26th
  Symposium on Operating Systems Principles}}. \bibinfo{pages}{51--68}.
\newblock


\bibitem[\protect\citeauthoryear{Global}{Global}{2022}]%
        {insurer2}
\bibfield{author}{\bibinfo{person}{SP Global}.}
  \bibinfo{year}{2022}\natexlab{}.
\newblock \bibinfo{title}{Data access among top concerns as insurers await
  autonomous cars}.
\newblock
  \bibinfo{howpublished}{\url{https://www.spglobal.com/marketintelligence/en/news-insights/latest-news-headlines/data-access-among-top-concerns-as-insurers-await-autonomous-cars-69760098}}.
\newblock


\bibitem[\protect\citeauthoryear{Guerraoui, Kne{\v{z}}evi{\'c}, Qu{\'e}ma, and
  Vukoli{\'c}}{Guerraoui et~al\mbox{.}}{2010}]%
        {guerraoui2010next}
\bibfield{author}{\bibinfo{person}{Rachid Guerraoui}, \bibinfo{person}{Nikola
  Kne{\v{z}}evi{\'c}}, \bibinfo{person}{Vivien Qu{\'e}ma}, {and}
  \bibinfo{person}{Marko Vukoli{\'c}}.} \bibinfo{year}{2010}\natexlab{}.
\newblock \showarticletitle{{The next 700 BFT protocols}}. In
  \bibinfo{booktitle}{\emph{Proceedings of the 5th European conference on
  Computer systems}}. \bibinfo{pages}{363--376}.
\newblock


\bibitem[\protect\citeauthoryear{Guerraoui and Schiper}{Guerraoui and
  Schiper}{2001}]%
        {guerraoui2001generic}
\bibfield{author}{\bibinfo{person}{Rachid Guerraoui} {and}
  \bibinfo{person}{Andr{\'e} Schiper}.} \bibinfo{year}{2001}\natexlab{}.
\newblock \showarticletitle{The generic consensus service}.
\newblock \bibinfo{journal}{\emph{IEEE Transactions on Software Engineering}}
  \bibinfo{volume}{27}, \bibinfo{number}{1} (\bibinfo{year}{2001}),
  \bibinfo{pages}{29--41}.
\newblock


\bibitem[\protect\citeauthoryear{Gueta, Abraham, Grossman, Malkhi, Pinkas,
  Reiter, Seredinschi, Tamir, and Tomescu}{Gueta et~al\mbox{.}}{2019}]%
        {gueta2019sbft}
\bibfield{author}{\bibinfo{person}{Guy~Golan Gueta}, \bibinfo{person}{Ittai
  Abraham}, \bibinfo{person}{Shelly Grossman}, \bibinfo{person}{Dahlia Malkhi},
  \bibinfo{person}{Benny Pinkas}, \bibinfo{person}{Michael Reiter},
  \bibinfo{person}{Dragos-Adrian Seredinschi}, \bibinfo{person}{Orr Tamir},
  {and} \bibinfo{person}{Alin Tomescu}.} \bibinfo{year}{2019}\natexlab{}.
\newblock \showarticletitle{{SBFT: A scalable and decentralized trust
  infrastructure}}. In \bibinfo{booktitle}{\emph{2019 49th Annual IEEE/IFIP
  international conference on dependable systems and networks (DSN)}}. IEEE,
  \bibinfo{pages}{568--580}.
\newblock


\bibitem[\protect\citeauthoryear{Gunn, Liu, Vavala, and Asokan}{Gunn
  et~al\mbox{.}}{2019}]%
        {gunn2019making}
\bibfield{author}{\bibinfo{person}{Lachlan~J Gunn}, \bibinfo{person}{Jian Liu},
  \bibinfo{person}{Bruno Vavala}, {and} \bibinfo{person}{N Asokan}.}
  \bibinfo{year}{2019}\natexlab{}.
\newblock \showarticletitle{{Making speculative BFT resilient with trusted
  monotonic counters}}. In \bibinfo{booktitle}{\emph{2019 38th Symposium on
  Reliable Distributed Systems (SRDS)}}. IEEE, \bibinfo{pages}{133--13309}.
\newblock


\bibitem[\protect\citeauthoryear{Gupta, Rahnama, Hellings, and Sadoghi}{Gupta
  et~al\mbox{.}}{[n.d.]}]%
        {gupta13resilientdb}
\bibfield{author}{\bibinfo{person}{Suyash Gupta}, \bibinfo{person}{Sajjad
  Rahnama}, \bibinfo{person}{Jelle Hellings}, {and} \bibinfo{person}{Mohammad
  Sadoghi}.} \bibinfo{year}{[n.d.]}\natexlab{}.
\newblock \showarticletitle{{ResilientDB: Global Scale Resilient Blockchain
  Fabric}}.
\newblock \bibinfo{journal}{\emph{Proceedings of the VLDB Endowment}}
  \bibinfo{volume}{13}, \bibinfo{number}{6} (\bibinfo{year}{[n.\,d.]}).
\newblock


\bibitem[\protect\citeauthoryear{Hassija, Chamola, Han, Rodrigues, and
  Guizani}{Hassija et~al\mbox{.}}{2020}]%
        {hassija2020dagiov}
\bibfield{author}{\bibinfo{person}{Vikas Hassija}, \bibinfo{person}{Vinay
  Chamola}, \bibinfo{person}{Guangjie Han}, \bibinfo{person}{Joel~JPC
  Rodrigues}, {and} \bibinfo{person}{Mohsen Guizani}.}
  \bibinfo{year}{2020}\natexlab{}.
\newblock \showarticletitle{DAGIoV: A framework for vehicle to vehicle
  communication using directed acyclic graph and game theory}.
\newblock \bibinfo{journal}{\emph{IEEE Transactions on Vehicular Technology}}
  \bibinfo{volume}{69}, \bibinfo{number}{4} (\bibinfo{year}{2020}),
  \bibinfo{pages}{4182--4191}.
\newblock


\bibitem[\protect\citeauthoryear{HotStuff}{HotStuff}{2019}]%
        {libhotstuff}
\bibfield{author}{\bibinfo{person}{HotStuff}.} \bibinfo{year}{2019}\natexlab{}.
\newblock \bibinfo{title}{{Libhotstuff: A general-puropse BFT state machine
  replication library with modularity and simplicity}}.
\newblock
  \bibinfo{howpublished}{\url{https://github.com/hot-stuff/libhotstuff}}.
\newblock


\bibitem[\protect\citeauthoryear{Hou, Yu, and Saxena}{Hou
  et~al\mbox{.}}{2022}]%
        {hou2022using}
\bibfield{author}{\bibinfo{person}{Ruomu Hou}, \bibinfo{person}{Haifeng Yu},
  {and} \bibinfo{person}{Prateek Saxena}.} \bibinfo{year}{2022}\natexlab{}.
\newblock \showarticletitle{Using throughput-centric byzantine broadcast to
  tolerate malicious majority in blockchains}. In
  \bibinfo{booktitle}{\emph{2022 IEEE Symposium on Security and Privacy (SP)}}.
  IEEE, \bibinfo{pages}{1263--1280}.
\newblock


\bibitem[\protect\citeauthoryear{Huff~Jr, Day~Grady, and Brinnkley}{Huff~Jr
  et~al\mbox{.}}{2021}]%
        {huff2021tell}
\bibfield{author}{\bibinfo{person}{Earl~W Huff~Jr}, \bibinfo{person}{Siobahn
  Day~Grady}, {and} \bibinfo{person}{Julian Brinnkley}.}
  \bibinfo{year}{2021}\natexlab{}.
\newblock \showarticletitle{Tell Me What I Need To Know: Consumers’ Desire
  for Information Transparency in Self-Driving Vehicles}. In
  \bibinfo{booktitle}{\emph{Proceedings of the Human Factors and Ergonomics
  Society Annual Meeting}}, Vol.~\bibinfo{volume}{65}. SAGE Publications Sage
  CA: Los Angeles, CA, \bibinfo{pages}{327--331}.
\newblock


\bibitem[\protect\citeauthoryear{IBM}{IBM}{2020}]%
        {ibmblockchain}
\bibfield{author}{\bibinfo{person}{IBM}.} \bibinfo{year}{2020}\natexlab{}.
\newblock \bibinfo{title}{Blockchain-powered autonomous automobiles can be the
  answer}.
\newblock
  \bibinfo{howpublished}{\url{https://www.ibm.com/blogs/blockchain/2020/04/blockchain-powered-autonomous-automobiles-can-be-the-answer/}}.
\newblock


\bibitem[\protect\citeauthoryear{Inc.}{Inc.}{2020}]%
        {qadbcautomotive}
\bibfield{author}{\bibinfo{person}{QAD Inc.}} \bibinfo{year}{2020}\natexlab{}.
\newblock \bibinfo{title}{{Five ways blockchain is changing the automotive
  industry}}.
\newblock
  \bibinfo{howpublished}{\url{https://www.qad.com/blog/2020/03/5-ways-blockchain-is-changing-the-automotive-industry}}.
\newblock


\bibitem[\protect\citeauthoryear{is~a~black box and does my car~have
  one?}{is~a~black box and does my car~have one?}{2022}]%
        {blackbox2}
\bibfield{author}{\bibinfo{person}{What is~a~black box} {and}
  \bibinfo{person}{does my car~have one?}} \bibinfo{year}{2022}\natexlab{}.
\newblock \bibinfo{title}{{LALANDE PERSONAL INJURY LAWYERS}}.
\newblock
  \bibinfo{howpublished}{\url{https://injured.ca/what-is-a-black-box-and-does-my-car-have-one/}}.
\newblock


\bibitem[\protect\citeauthoryear{Jiang, Fang, and Wang}{Jiang
  et~al\mbox{.}}{2018}]%
        {jiang2018blockchain}
\bibfield{author}{\bibinfo{person}{Tigang Jiang}, \bibinfo{person}{Hua Fang},
  {and} \bibinfo{person}{Honggang Wang}.} \bibinfo{year}{2018}\natexlab{}.
\newblock \showarticletitle{Blockchain-based internet of vehicles: Distributed
  network architecture and performance analysis}.
\newblock \bibinfo{journal}{\emph{IEEE Internet of Things Journal}}
  \bibinfo{volume}{6}, \bibinfo{number}{3} (\bibinfo{year}{2018}),
  \bibinfo{pages}{4640--4649}.
\newblock


\bibitem[\protect\citeauthoryear{Johansen, Allavena, and Van~Renesse}{Johansen
  et~al\mbox{.}}{2006}]%
        {johansen2006fireflies}
\bibfield{author}{\bibinfo{person}{H{\aa}vard Johansen},
  \bibinfo{person}{Andr{\'e} Allavena}, {and} \bibinfo{person}{Robbert
  Van~Renesse}.} \bibinfo{year}{2006}\natexlab{}.
\newblock \showarticletitle{Fireflies: scalable support for intrusion-tolerant
  network overlays}.
\newblock \bibinfo{journal}{\emph{ACM SIGOPS Operating Systems Review}}
  \bibinfo{volume}{40}, \bibinfo{number}{4} (\bibinfo{year}{2006}),
  \bibinfo{pages}{3--13}.
\newblock


\bibitem[\protect\citeauthoryear{Kapitza, Behl, Cachin, Distler, Kuhnle,
  Mohammadi, Schr{\"o}der-Preikschat, and Stengel}{Kapitza
  et~al\mbox{.}}{2012}]%
        {kapitza2012cheapbft}
\bibfield{author}{\bibinfo{person}{R{\"u}diger Kapitza},
  \bibinfo{person}{Johannes Behl}, \bibinfo{person}{Christian Cachin},
  \bibinfo{person}{Tobias Distler}, \bibinfo{person}{Simon Kuhnle},
  \bibinfo{person}{Seyed~Vahid Mohammadi}, \bibinfo{person}{Wolfgang
  Schr{\"o}der-Preikschat}, {and} \bibinfo{person}{Klaus Stengel}.}
  \bibinfo{year}{2012}\natexlab{}.
\newblock \showarticletitle{{CheapBFT: Resource-efficient Byzantine fault
  tolerance}}. In \bibinfo{booktitle}{\emph{Proceedings of the 7th ACM european
  conference on Computer Systems}}. \bibinfo{pages}{295--308}.
\newblock


\bibitem[\protect\citeauthoryear{Kotla, Alvisi, Dahlin, Clement, and
  Wong}{Kotla et~al\mbox{.}}{2007}]%
        {kotla2007zyzzyva}
\bibfield{author}{\bibinfo{person}{Ramakrishna Kotla}, \bibinfo{person}{Lorenzo
  Alvisi}, \bibinfo{person}{Mike Dahlin}, \bibinfo{person}{Allen Clement},
  {and} \bibinfo{person}{Edmund Wong}.} \bibinfo{year}{2007}\natexlab{}.
\newblock \showarticletitle{Zyzzyva: speculative {Byzantine} fault tolerance}.
  In \bibinfo{booktitle}{\emph{Proceedings of twenty-first ACM SIGOPS symposium
  on Operating systems principles}}. \bibinfo{pages}{45--58}.
\newblock


\bibitem[\protect\citeauthoryear{Lab}{Lab}{2020}]%
        {toyotabc}
\bibfield{author}{\bibinfo{person}{Toyota~Blockchain Lab}.}
  \bibinfo{year}{2020}\natexlab{}.
\newblock \bibinfo{title}{Accelerating Blockchain Technology Initiatives and
  External Collaboration}.
\newblock
  \bibinfo{howpublished}{\url{https://global.toyota/en/newsroom/corporate/31827481.html}}.
\newblock


\bibitem[\protect\citeauthoryear{Lamport}{Lamport}{2011}]%
        {lamport2011brief}
\bibfield{author}{\bibinfo{person}{Leslie Lamport}.}
  \bibinfo{year}{2011}\natexlab{}.
\newblock \showarticletitle{{Brief announcement: Leaderless Byzantine paxos}}.
  In \bibinfo{booktitle}{\emph{International Symposium on Distributed
  Computing}}. Springer, \bibinfo{pages}{141--142}.
\newblock


\bibitem[\protect\citeauthoryear{Lamport and Fischer}{Lamport and
  Fischer}{1982}]%
        {lamport1982byzantine}
\bibfield{author}{\bibinfo{person}{Leslie Lamport} {and}
  \bibinfo{person}{Michael Fischer}.} \bibinfo{year}{1982}\natexlab{}.
\newblock \bibinfo{booktitle}{\emph{Byzantine generals and transaction commit
  protocols}}.
\newblock \bibinfo{type}{{T}echnical {R}eport}. \bibinfo{institution}{Technical
  Report 62, SRI International}.
\newblock


\bibitem[\protect\citeauthoryear{Lamport, Malkhi, and Zhou}{Lamport
  et~al\mbox{.}}{2010}]%
        {lamport2010reconfiguring}
\bibfield{author}{\bibinfo{person}{Leslie Lamport}, \bibinfo{person}{Dahlia
  Malkhi}, {and} \bibinfo{person}{Lidong Zhou}.}
  \bibinfo{year}{2010}\natexlab{}.
\newblock \showarticletitle{Reconfiguring a state machine}.
\newblock \bibinfo{journal}{\emph{ACM SIGACT News}} \bibinfo{volume}{41},
  \bibinfo{number}{1} (\bibinfo{year}{2010}), \bibinfo{pages}{63--73}.
\newblock


\bibitem[\protect\citeauthoryear{Lamport, Shostak, and Pease}{Lamport
  et~al\mbox{.}}{2019}]%
        {lamport2019byzantine}
\bibfield{author}{\bibinfo{person}{Leslie Lamport}, \bibinfo{person}{Robert
  Shostak}, {and} \bibinfo{person}{Marshall Pease}.}
  \bibinfo{year}{2019}\natexlab{}.
\newblock \showarticletitle{The {Byzantine} generals problem}.
\newblock In \bibinfo{booktitle}{\emph{Concurrency: the Works of Leslie
  Lamport}}. \bibinfo{pages}{203--226}.
\newblock


\bibitem[\protect\citeauthoryear{Levin, Douceur, Lorch, and Moscibroda}{Levin
  et~al\mbox{.}}{2009}]%
        {levin2009trinc}
\bibfield{author}{\bibinfo{person}{Dave Levin}, \bibinfo{person}{John~R
  Douceur}, \bibinfo{person}{Jacob~R Lorch}, {and} \bibinfo{person}{Thomas
  Moscibroda}.} \bibinfo{year}{2009}\natexlab{}.
\newblock \showarticletitle{{TrInc: Small Trusted Hardware for Large
  Distributed Systems.}}. In \bibinfo{booktitle}{\emph{NSDI}},
  Vol.~\bibinfo{volume}{9}. \bibinfo{pages}{1--14}.
\newblock


\bibitem[\protect\citeauthoryear{Lewis}{Lewis}{2014}]%
        {lewis2014flash}
\bibfield{author}{\bibinfo{person}{Michael Lewis}.}
  \bibinfo{year}{2014}\natexlab{}.
\newblock \bibinfo{booktitle}{\emph{Flash boys: a Wall Street revolt}}.
\newblock \bibinfo{publisher}{WW Norton \& Company}.
\newblock


\bibitem[\protect\citeauthoryear{Libert, Joye, and Yung}{Libert
  et~al\mbox{.}}{2016}]%
        {libert2016born}
\bibfield{author}{\bibinfo{person}{Beno{\^\i}t Libert}, \bibinfo{person}{Marc
  Joye}, {and} \bibinfo{person}{Moti Yung}.} \bibinfo{year}{2016}\natexlab{}.
\newblock \showarticletitle{{Born and raised distributively: Fully distributed
  non-interactive adaptively-secure threshold signatures with short shares}}.
\newblock \bibinfo{journal}{\emph{Theoretical Computer Science}}
  \bibinfo{volume}{645} (\bibinfo{year}{2016}), \bibinfo{pages}{1--24}.
\newblock


\bibitem[\protect\citeauthoryear{LimeChain}{LimeChain}{[n.d.]}]%
        {limechain}
\bibfield{author}{\bibinfo{person}{LimeChain}.}
  \bibinfo{year}{[n.d.]}\natexlab{}.
\newblock \bibinfo{title}{Automotive industry problems today}.
\newblock
  \bibinfo{howpublished}{\url{https://limechain.tech/blockchain-use-cases/blockchain-automotive-industry/}}.
\newblock


\bibitem[\protect\citeauthoryear{Liu, Viotti, Cachin, Qu{\'e}ma, and
  Vukoli{\'c}}{Liu et~al\mbox{.}}{2016}]%
        {liu2016xft}
\bibfield{author}{\bibinfo{person}{Shengyun Liu}, \bibinfo{person}{Paolo
  Viotti}, \bibinfo{person}{Christian Cachin}, \bibinfo{person}{Vivien
  Qu{\'e}ma}, {and} \bibinfo{person}{Marko Vukoli{\'c}}.}
  \bibinfo{year}{2016}\natexlab{}.
\newblock \showarticletitle{{XFT: Practical fault tolerance beyond crashes}}.
  In \bibinfo{booktitle}{\emph{12th USENIX Symposium on Operating Systems
  Design and Implementation (OSDI 16)}}. \bibinfo{pages}{485--500}.
\newblock


\bibitem[\protect\citeauthoryear{Lorch, Adya, Bolosky, Chaiken, Douceur, and
  Howell}{Lorch et~al\mbox{.}}{2006}]%
        {lorch2006smart}
\bibfield{author}{\bibinfo{person}{Jacob~R Lorch}, \bibinfo{person}{Atul Adya},
  \bibinfo{person}{William~J Bolosky}, \bibinfo{person}{Ronnie Chaiken},
  \bibinfo{person}{John~R Douceur}, {and} \bibinfo{person}{Jon Howell}.}
  \bibinfo{year}{2006}\natexlab{}.
\newblock \showarticletitle{The SMART way to migrate replicated stateful
  services}. In \bibinfo{booktitle}{\emph{Proceedings of the 1st ACM
  SIGOPS/EuroSys European Conference on Computer Systems 2006}}.
  \bibinfo{pages}{103--115}.
\newblock


\bibitem[\protect\citeauthoryear{Martin and Alvisi}{Martin and Alvisi}{2004}]%
        {martin2004framework}
\bibfield{author}{\bibinfo{person}{J-P Martin} {and} \bibinfo{person}{Lorenzo
  Alvisi}.} \bibinfo{year}{2004}\natexlab{}.
\newblock \showarticletitle{A framework for dynamic byzantine storage}. In
  \bibinfo{booktitle}{\emph{International Conference on Dependable Systems and
  Networks, 2004}}. IEEE, \bibinfo{pages}{325--334}.
\newblock


\bibitem[\protect\citeauthoryear{Martin and Alvisi}{Martin and Alvisi}{2006}]%
        {martin2006fast}
\bibfield{author}{\bibinfo{person}{J-P Martin} {and} \bibinfo{person}{Lorenzo
  Alvisi}.} \bibinfo{year}{2006}\natexlab{}.
\newblock \showarticletitle{Fast {Byzantine} consensus}.
\newblock \bibinfo{journal}{\emph{IEEE Transactions on Dependable and Secure
  Computing}} \bibinfo{volume}{3}, \bibinfo{number}{3} (\bibinfo{year}{2006}),
  \bibinfo{pages}{202--215}.
\newblock


\bibitem[\protect\citeauthoryear{Miller, Xia, Croman, Shi, and Song}{Miller
  et~al\mbox{.}}{2016}]%
        {miller2016honey}
\bibfield{author}{\bibinfo{person}{Andrew Miller}, \bibinfo{person}{Yu Xia},
  \bibinfo{person}{Kyle Croman}, \bibinfo{person}{Elaine Shi}, {and}
  \bibinfo{person}{Dawn Song}.} \bibinfo{year}{2016}\natexlab{}.
\newblock \showarticletitle{{The honey badger of BFT protocols}}. In
  \bibinfo{booktitle}{\emph{Proceedings of the 2016 ACM SIGSAC Conference on
  Computer and Communications Security}}. \bibinfo{pages}{31--42}.
\newblock


\bibitem[\protect\citeauthoryear{Mostefaoui, Moumen, and Raynal}{Mostefaoui
  et~al\mbox{.}}{2014}]%
        {mostefaoui2014signature}
\bibfield{author}{\bibinfo{person}{Achour Mostefaoui}, \bibinfo{person}{Hamouma
  Moumen}, {and} \bibinfo{person}{Michel Raynal}.}
  \bibinfo{year}{2014}\natexlab{}.
\newblock \showarticletitle{{Signature-free asynchronous Byzantine consensus
  with t< n/3 and O (n2) messages}}. In \bibinfo{booktitle}{\emph{Proceedings
  of the 2014 ACM symposium on Principles of distributed computing}}.
  \bibinfo{pages}{2--9}.
\newblock


\bibitem[\protect\citeauthoryear{MystenLabs}{MystenLabs}{[n.d.]}]%
        {nwgithub}
\bibfield{author}{\bibinfo{person}{MystenLabs}.}
  \bibinfo{year}{[n.d.]}\natexlab{}.
\newblock \bibinfo{title}{{Narwhal and Tusk}}.
\newblock \bibinfo{howpublished}{\url{https://github.com/MystenLabs/narwhal}}.
\newblock


\bibitem[\protect\citeauthoryear{Nakamoto}{Nakamoto}{2019}]%
        {nakamoto2019bitcoin}
\bibfield{author}{\bibinfo{person}{Satoshi Nakamoto}.}
  \bibinfo{year}{2019}\natexlab{}.
\newblock \bibinfo{booktitle}{\emph{Bitcoin: A peer-to-peer electronic cash
  system}}.
\newblock \bibinfo{type}{{T}echnical {R}eport}. \bibinfo{institution}{Manubot}.
\newblock


\bibitem[\protect\citeauthoryear{Neiheiser, Matos, and Rodrigues}{Neiheiser
  et~al\mbox{.}}{2021}]%
        {neiheiser2021kauri}
\bibfield{author}{\bibinfo{person}{Ray Neiheiser}, \bibinfo{person}{Miguel
  Matos}, {and} \bibinfo{person}{Lu{\'\i}s Rodrigues}.}
  \bibinfo{year}{2021}\natexlab{}.
\newblock \showarticletitle{{Kauri: Scalable BFT consensus with pipelined
  tree-based dissemination and aggregation}}. In
  \bibinfo{booktitle}{\emph{Proceedings of the ACM SIGOPS 28th Symposium on
  Operating Systems Principles}}. \bibinfo{pages}{35--48}.
\newblock


\bibitem[\protect\citeauthoryear{Neu, Tas, and Tse}{Neu et~al\mbox{.}}{2021}]%
        {neu2021ebb}
\bibfield{author}{\bibinfo{person}{Joachim Neu}, \bibinfo{person}{Ertem~Nusret
  Tas}, {and} \bibinfo{person}{David Tse}.} \bibinfo{year}{2021}\natexlab{}.
\newblock \showarticletitle{Ebb-and-flow protocols: A resolution of the
  availability-finality dilemma}. In \bibinfo{booktitle}{\emph{2021 IEEE
  Symposium on Security and Privacy (SP)}}. IEEE, \bibinfo{pages}{446--465}.
\newblock


\bibitem[\protect\citeauthoryear{News}{News}{2022}]%
        {teslashprotest2}
\bibfield{author}{\bibinfo{person}{Bloomberg News}.}
  \bibinfo{year}{2022}\natexlab{}.
\newblock \bibinfo{title}{{Angry Tesla Owner Protests Atop Car at Shanghai Auto
  Show}}.
\newblock
  \bibinfo{howpublished}{\url{https://www.bloomberg.com/news/articles/2021-04-19/angry-tesla-owner-protests-atop-car-at-shanghai-auto-show\#xj4y7vzkg}}.
\newblock


\bibitem[\protect\citeauthoryear{of~State~Legislatures}{of~State~Legislatures}{2020}]%
        {legislation}
\bibfield{author}{\bibinfo{person}{National~Conference of State~Legislatures}.}
  \bibinfo{year}{2020}\natexlab{}.
\newblock \bibinfo{title}{{Self-Driving Vehicles Enacted Legislation}}.
\newblock
  \bibinfo{howpublished}{\url{https://www.ncsl.org/research/transportation/autonomous-vehicles-self-driving-vehicles-enacted-legislation.aspx}}.
\newblock


\bibitem[\protect\citeauthoryear{Pattinson, Chen, and Basu}{Pattinson
  et~al\mbox{.}}{2020}]%
        {pattinson2020legal}
\bibfield{author}{\bibinfo{person}{Jo-Ann Pattinson}, \bibinfo{person}{Haibo
  Chen}, {and} \bibinfo{person}{Subhajit Basu}.}
  \bibinfo{year}{2020}\natexlab{}.
\newblock \showarticletitle{Legal issues in automated vehicles: critically
  considering the potential role of consent and interactive digital
  interfaces}.
\newblock \bibinfo{journal}{\emph{Humanities and Social Sciences
  Communications}} \bibinfo{volume}{7}, \bibinfo{number}{1}
  (\bibinfo{year}{2020}), \bibinfo{pages}{1--10}.
\newblock


\bibitem[\protect\citeauthoryear{Peng, Feng, Yan, Li, Zhou, and Shimizu}{Peng
  et~al\mbox{.}}{2021}]%
        {peng2021privacy}
\bibfield{author}{\bibinfo{person}{Li Peng}, \bibinfo{person}{Wei Feng},
  \bibinfo{person}{Zheng Yan}, \bibinfo{person}{Yafeng Li},
  \bibinfo{person}{Xiaokang Zhou}, {and} \bibinfo{person}{Shohei Shimizu}.}
  \bibinfo{year}{2021}\natexlab{}.
\newblock \showarticletitle{Privacy preservation in permissionless blockchain:
  A survey}.
\newblock \bibinfo{journal}{\emph{Digital Communications and Networks}}
  \bibinfo{volume}{7}, \bibinfo{number}{3} (\bibinfo{year}{2021}),
  \bibinfo{pages}{295--307}.
\newblock


\bibitem[\protect\citeauthoryear{Post}{Post}{2022a}]%
        {teslashprotest1}
\bibfield{author}{\bibinfo{person}{South China~Morning Post}.}
  \bibinfo{year}{2022}\natexlab{a}.
\newblock \bibinfo{title}{{Tesla protest at Shanghai Auto Show 2021 ends with
  woman dragged off by security after climbing onto car and shouting}}.
\newblock
  \bibinfo{howpublished}{\url{https://www.scmp.com/news/people-culture/article/3130311/tesla-protest-shanghai-auto-show-2021-ends-woman-dragged}}.
\newblock


\bibitem[\protect\citeauthoryear{Post}{Post}{2022b}]%
        {teslascrahes}
\bibfield{author}{\bibinfo{person}{The~Washington Post}.}
  \bibinfo{year}{2022}\natexlab{b}.
\newblock \bibinfo{title}{{Teslas running Autopilot involved in 273 crashes
  reported since last year}}.
\newblock
  \bibinfo{howpublished}{\url{https://www.washingtonpost.com/technology/2022/06/15/tesla-autopilot-crashes/}}.
\newblock


\bibitem[\protect\citeauthoryear{Qi, Chen, Jiang, Jiang, Shen, Zhao, Wang,
  Zhang, Chen, Au, et~al\mbox{.}}{Qi et~al\mbox{.}}{2021}]%
        {qi2021bidl}
\bibfield{author}{\bibinfo{person}{Ji Qi}, \bibinfo{person}{Xusheng Chen},
  \bibinfo{person}{Yunpeng Jiang}, \bibinfo{person}{Jianyu Jiang},
  \bibinfo{person}{Tianxiang Shen}, \bibinfo{person}{Shixiong Zhao},
  \bibinfo{person}{Sen Wang}, \bibinfo{person}{Gong Zhang}, \bibinfo{person}{Li
  Chen}, \bibinfo{person}{Man~Ho Au}, {et~al\mbox{.}}}
  \bibinfo{year}{2021}\natexlab{}.
\newblock \showarticletitle{Bidl: A high-throughput, low-latency permissioned
  blockchain framework for datacenter networks}. In
  \bibinfo{booktitle}{\emph{Proceedings of the ACM SIGOPS 28th Symposium on
  Operating Systems Principles}}. \bibinfo{pages}{18--34}.
\newblock


\bibitem[\protect\citeauthoryear{Qualcomm}{Qualcomm}{2021}]%
        {qualcommv2x}
\bibfield{author}{\bibinfo{person}{Qualcomm}.} \bibinfo{year}{2021}\natexlab{}.
\newblock \bibinfo{title}{{Connecting vehicles to everything with C-V2X. }}.
\newblock
  \bibinfo{howpublished}{\url{https://www.qualcomm.com/research/5g/cellular-v2x}}.
\newblock


\bibitem[\protect\citeauthoryear{Regulation}{Regulation}{2022}]%
        {gdpr}
\bibfield{author}{\bibinfo{person}{The General Data~Protection Regulation}.}
  \bibinfo{year}{2022}\natexlab{}.
\newblock \bibinfo{title}{{GDPR: Principles relating to processing of personal
  data}}.
\newblock \bibinfo{howpublished}{\url{https://gdpr-info.eu/art-5-gdpr/}}.
\newblock


\bibitem[\protect\citeauthoryear{ResilientDB}{ResilientDB}{[n.d.]}]%
        {rdbgithub}
\bibfield{author}{\bibinfo{person}{ResilientDB}.}
  \bibinfo{year}{[n.d.]}\natexlab{}.
\newblock \bibinfo{title}{{ResilientDB: A scalable permissioned blockchain
  fabric}}.
\newblock
  \bibinfo{howpublished}{\url{https://github.com/resilientdb/resilientdb}}.
\newblock


\bibitem[\protect\citeauthoryear{Rodrigues, Liskov, Chen, Liskov, and
  Schultz}{Rodrigues et~al\mbox{.}}{2010}]%
        {rodrigues2010automatic}
\bibfield{author}{\bibinfo{person}{Rodrigo Rodrigues}, \bibinfo{person}{Barbara
  Liskov}, \bibinfo{person}{Kathryn Chen}, \bibinfo{person}{Moses Liskov},
  {and} \bibinfo{person}{David Schultz}.} \bibinfo{year}{2010}\natexlab{}.
\newblock \showarticletitle{Automatic reconfiguration for large-scale reliable
  storage systems}.
\newblock \bibinfo{journal}{\emph{IEEE Transactions on Dependable and Secure
  Computing}} \bibinfo{volume}{9}, \bibinfo{number}{2} (\bibinfo{year}{2010}),
  \bibinfo{pages}{145--158}.
\newblock


\bibitem[\protect\citeauthoryear{Russinovich, Ashton, Avanessians, Castro,
  Chamayou, Clebsch, Costa, Fournet, Kerner, Krishna, Maffre, Moscibroda,
  Nayak, Ohrimenko, Schuster, Schwartz, Shamis, Vrousgou, and
  Wintersteiger}{Russinovich et~al\mbox{.}}{2019}]%
        {russinovich2019ccf}
\bibfield{author}{\bibinfo{person}{Mark Russinovich}, \bibinfo{person}{Edward
  Ashton}, \bibinfo{person}{Christine Avanessians}, \bibinfo{person}{Miguel
  Castro}, \bibinfo{person}{Amaury Chamayou}, \bibinfo{person}{Sylvan Clebsch},
  \bibinfo{person}{Manuel Costa}, \bibinfo{person}{Cédric Fournet},
  \bibinfo{person}{Matthew Kerner}, \bibinfo{person}{Sid Krishna},
  \bibinfo{person}{Julien Maffre}, \bibinfo{person}{Thomas Moscibroda},
  \bibinfo{person}{Kartik Nayak}, \bibinfo{person}{Olya Ohrimenko},
  \bibinfo{person}{Felix Schuster}, \bibinfo{person}{Roy Schwartz},
  \bibinfo{person}{Alex Shamis}, \bibinfo{person}{Olga Vrousgou}, {and}
  \bibinfo{person}{Christoph~M. Wintersteiger}.}
  \bibinfo{year}{2019}\natexlab{}.
\newblock \bibinfo{booktitle}{\emph{{CCF: A Framework for Building Confidential
  Verifiable Replicated Services}}}.
\newblock \bibinfo{type}{{T}echnical {R}eport} MSR-TR-2019-16.
  \bibinfo{institution}{Microsoft}.
\newblock
\urldef\tempurl%
\url{https://www.microsoft.com/en-us/research/publication/ccf-a-framework-for-building-confidential-verifiable-replicated-services/}
\showURL{%
\tempurl}


\bibitem[\protect\citeauthoryear{Schneider}{Schneider}{1984}]%
        {schneider1984byzantine}
\bibfield{author}{\bibinfo{person}{Fred~B Schneider}.}
  \bibinfo{year}{1984}\natexlab{}.
\newblock \showarticletitle{Byzantine generals in action: Implementing
  fail-stop processors}.
\newblock \bibinfo{journal}{\emph{ACM Transactions on Computer Systems (TOCS)}}
  \bibinfo{volume}{2}, \bibinfo{number}{2} (\bibinfo{year}{1984}),
  \bibinfo{pages}{145--154}.
\newblock


\bibitem[\protect\citeauthoryear{Schneider}{Schneider}{1990}]%
        {schneider1990state}
\bibfield{author}{\bibinfo{person}{Fred~B Schneider}.}
  \bibinfo{year}{1990}\natexlab{}.
\newblock \showarticletitle{{The state machine approach: A tutorial}}.
\newblock \bibinfo{journal}{\emph{Fault-tolerant distributed computing}}
  (\bibinfo{year}{1990}), \bibinfo{pages}{18--41}.
\newblock


\bibitem[\protect\citeauthoryear{Shamis, Pietzuch, Canakci, Castro, Fournet,
  Ashton, Chamayou, Clebsch, Delignat-Lavaud, Kerner, et~al\mbox{.}}{Shamis
  et~al\mbox{.}}{2022}]%
        {shamis2022ia}
\bibfield{author}{\bibinfo{person}{Alex Shamis}, \bibinfo{person}{Peter
  Pietzuch}, \bibinfo{person}{Burcu Canakci}, \bibinfo{person}{Miguel Castro},
  \bibinfo{person}{C{\'e}dric Fournet}, \bibinfo{person}{Edward Ashton},
  \bibinfo{person}{Amaury Chamayou}, \bibinfo{person}{Sylvan Clebsch},
  \bibinfo{person}{Antoine Delignat-Lavaud}, \bibinfo{person}{Matthew Kerner},
  {et~al\mbox{.}}} \bibinfo{year}{2022}\natexlab{}.
\newblock \showarticletitle{{IA-CCF: Individual Accountability for Permissioned
  Ledgers}}. In \bibinfo{booktitle}{\emph{19th USENIX Symposium on Networked
  Systems Design and Implementation (NSDI 22)}}. \bibinfo{pages}{467--491}.
\newblock


\bibitem[\protect\citeauthoryear{Shoup}{Shoup}{2000}]%
        {shoup2000practical}
\bibfield{author}{\bibinfo{person}{Victor Shoup}.}
  \bibinfo{year}{2000}\natexlab{}.
\newblock \showarticletitle{Practical threshold signatures}. In
  \bibinfo{booktitle}{\emph{International Conference on the Theory and
  Applications of Cryptographic Techniques}}. Springer,
  \bibinfo{pages}{207--220}.
\newblock


\bibitem[\protect\citeauthoryear{Sommerfeld}{Sommerfeld}{2021}]%
        {insurer1}
\bibfield{author}{\bibinfo{person}{Lorraine Sommerfeld}.}
  \bibinfo{year}{2021}\natexlab{}.
\newblock \bibinfo{title}{Autonomous cars are on their way, and insurance
  companies aren't ready}.
\newblock
  \bibinfo{howpublished}{\url{https://driving.ca/column/lorraine/autonomous-cars-are-on-their-way-and-insurance-companies-arent-ready}}.
\newblock


\bibitem[\protect\citeauthoryear{Song and van Renesse}{Song and van
  Renesse}{2008}]%
        {song2008bosco}
\bibfield{author}{\bibinfo{person}{Yee~Jiun Song} {and}
  \bibinfo{person}{Robbert van Renesse}.} \bibinfo{year}{2008}\natexlab{}.
\newblock \showarticletitle{Bosco: One-step {Byzantine} asynchronous
  consensus}. In \bibinfo{booktitle}{\emph{International Symposium on
  Distributed Computing}}. Springer, \bibinfo{pages}{438--450}.
\newblock


\bibitem[\protect\citeauthoryear{Sousa, Bessani, and Vukolic}{Sousa
  et~al\mbox{.}}{2018}]%
        {sousa2018byzantine}
\bibfield{author}{\bibinfo{person}{Joao Sousa}, \bibinfo{person}{Alysson
  Bessani}, {and} \bibinfo{person}{Marko Vukolic}.}
  \bibinfo{year}{2018}\natexlab{}.
\newblock \showarticletitle{A {Byzantine} fault-tolerant ordering service for
  the hyperledger fabric blockchain platform}. In
  \bibinfo{booktitle}{\emph{2018 48th annual IEEE/IFIP international conference
  on dependable systems and networks (DSN)}}. IEEE, \bibinfo{pages}{51--58}.
\newblock


\bibitem[\protect\citeauthoryear{Stoica, Morris, Karger, Kaashoek, and
  Balakrishnan}{Stoica et~al\mbox{.}}{2001}]%
        {stoica2001chord}
\bibfield{author}{\bibinfo{person}{Ion Stoica}, \bibinfo{person}{Robert
  Morris}, \bibinfo{person}{David Karger}, \bibinfo{person}{M~Frans Kaashoek},
  {and} \bibinfo{person}{Hari Balakrishnan}.} \bibinfo{year}{2001}\natexlab{}.
\newblock \showarticletitle{Chord: A scalable peer-to-peer lookup service for
  internet applications}.
\newblock \bibinfo{journal}{\emph{ACM SIGCOMM computer communication review}}
  \bibinfo{volume}{31}, \bibinfo{number}{4} (\bibinfo{year}{2001}),
  \bibinfo{pages}{149--160}.
\newblock


\bibitem[\protect\citeauthoryear{Suri-Payer, Burke, Wang, Zhang, Alvisi, and
  Crooks}{Suri-Payer et~al\mbox{.}}{2021}]%
        {suri2021basil}
\bibfield{author}{\bibinfo{person}{Florian Suri-Payer},
  \bibinfo{person}{Matthew Burke}, \bibinfo{person}{Zheng Wang},
  \bibinfo{person}{Yunhao Zhang}, \bibinfo{person}{Lorenzo Alvisi}, {and}
  \bibinfo{person}{Natacha Crooks}.} \bibinfo{year}{2021}\natexlab{}.
\newblock \showarticletitle{Basil: Breaking up BFT with ACID (transactions)}.
  In \bibinfo{booktitle}{\emph{Proceedings of the ACM SIGOPS 28th Symposium on
  Operating Systems Principles}}. \bibinfo{pages}{1--17}.
\newblock


\bibitem[\protect\citeauthoryear{Tesla}{Tesla}{[n.d.]}]%
        {autopilot}
\bibfield{author}{\bibinfo{person}{Tesla}.} \bibinfo{year}{[n.d.]}\natexlab{}.
\newblock \bibinfo{title}{{Autopilot}}.
\newblock \bibinfo{howpublished}{\url{https://www.tesla.com/autopilot}}.
\newblock


\bibitem[\protect\citeauthoryear{Vassantlal, Alchieri, Ferreira, and
  Bessani}{Vassantlal et~al\mbox{.}}{2022}]%
        {vassantlal2022cobra}
\bibfield{author}{\bibinfo{person}{Robin Vassantlal}, \bibinfo{person}{Eduardo
  Alchieri}, \bibinfo{person}{Bernardo Ferreira}, {and}
  \bibinfo{person}{Alysson Bessani}.} \bibinfo{year}{2022}\natexlab{}.
\newblock \showarticletitle{COBRA: Dynamic Proactive Secret Sharing for
  Confidential BFT Services}. In \bibinfo{booktitle}{\emph{2022 IEEE Symposium
  on Security and Privacy (SP)}}. IEEE Computer Society,
  \bibinfo{pages}{1528--1528}.
\newblock


\bibitem[\protect\citeauthoryear{Virk}{Virk}{2021}]%
        {auotnewstrans}
\bibfield{author}{\bibinfo{person}{Peter Virk}.}
  \bibinfo{year}{2021}\natexlab{}.
\newblock \bibinfo{title}{{Connected cars require data transparency}}.
\newblock
  \bibinfo{howpublished}{\url{https://www.autonews.com/commentary/connected-cars-require-data-transparency}}.
\newblock


\bibitem[\protect\citeauthoryear{Volkswagen}{Volkswagen}{2019}]%
        {vwbc1}
\bibfield{author}{\bibinfo{person}{Volkswagen}.}
  \bibinfo{year}{2019}\natexlab{}.
\newblock \bibinfo{title}{Protecting people and the environment with
  blockchain}.
\newblock
  \bibinfo{howpublished}{\url{https://www.volkswagenag.com/en/news/stories/2019/04/protecting-people-and-the-environment-with-blockchain.html\#}}.
\newblock


\bibitem[\protect\citeauthoryear{Volkswagen}{Volkswagen}{2021}]%
        {vwbc2}
\bibfield{author}{\bibinfo{person}{Volkswagen}.}
  \bibinfo{year}{2021}\natexlab{}.
\newblock \bibinfo{title}{Volkswagen pilots blockchain to integrate electric
  vehicles with power grids}.
\newblock
  \bibinfo{howpublished}{\url{https://cointelegraph.com/news/volkswagen-pilots-blockchain-to-integrate-electric-vehicles-with-power-grids}}.
\newblock


\bibitem[\protect\citeauthoryear{Xiang, Malkhi, Nayak, and Ren}{Xiang
  et~al\mbox{.}}{2021}]%
        {xiang2021strengthened}
\bibfield{author}{\bibinfo{person}{Zhuolun Xiang}, \bibinfo{person}{Dahlia
  Malkhi}, \bibinfo{person}{Kartik Nayak}, {and} \bibinfo{person}{Ling Ren}.}
  \bibinfo{year}{2021}\natexlab{}.
\newblock \showarticletitle{Strengthened fault tolerance in Byzantine fault
  tolerant replication}. In \bibinfo{booktitle}{\emph{2021 IEEE 41st
  International Conference on Distributed Computing Systems (ICDCS)}}. IEEE,
  \bibinfo{pages}{205--215}.
\newblock


\bibitem[\protect\citeauthoryear{Yang, Park, Alizadeh, Kannan, and Tse}{Yang
  et~al\mbox{.}}{2021}]%
        {yang2021dispersedledger}
\bibfield{author}{\bibinfo{person}{Lei Yang}, \bibinfo{person}{Seo~Jin Park},
  \bibinfo{person}{Mohammad Alizadeh}, \bibinfo{person}{Sreeram Kannan}, {and}
  \bibinfo{person}{David Tse}.} \bibinfo{year}{2021}\natexlab{}.
\newblock \showarticletitle{{DispersedLedger: High-Throughput Byzantine
  Consensus on Variable Bandwidth Networks}}.
\newblock \bibinfo{journal}{\emph{arXiv preprint arXiv:2110.04371}}
  (\bibinfo{year}{2021}).
\newblock


\bibitem[\protect\citeauthoryear{Yin, Malkhi, Reiter, Golan{-}Gueta, and
  Abraham}{Yin et~al\mbox{.}}{2019}]%
        {yin2019hotstuff}
\bibfield{author}{\bibinfo{person}{Maofan Yin}, \bibinfo{person}{Dahlia
  Malkhi}, \bibinfo{person}{Michael~K. Reiter}, \bibinfo{person}{Guy
  Golan{-}Gueta}, {and} \bibinfo{person}{Ittai Abraham}.}
  \bibinfo{year}{2019}\natexlab{}.
\newblock \showarticletitle{{HotStuff: BFT Consensus with Linearity and
  Responsiveness}}. In \bibinfo{booktitle}{\emph{Proceedings of the 2019 {ACM}
  Symposium on Principles of Distributed Computing, {PODC} 2019, Toronto, ON,
  Canada, July 29 - August 2, 2019}}. \bibinfo{publisher}{{ACM}},
  \bibinfo{pages}{347--356}.
\newblock


\bibitem[\protect\citeauthoryear{Yuan and Wang}{Yuan and Wang}{2016}]%
        {yuan2016towards}
\bibfield{author}{\bibinfo{person}{Yong Yuan} {and} \bibinfo{person}{Fei-Yue
  Wang}.} \bibinfo{year}{2016}\natexlab{}.
\newblock \showarticletitle{Towards blockchain-based intelligent transportation
  systems}. In \bibinfo{booktitle}{\emph{2016 IEEE 19th international
  conference on intelligent transportation systems (ITSC)}}. IEEE,
  \bibinfo{pages}{2663--2668}.
\newblock


\bibitem[\protect\citeauthoryear{Zhang and Jacobsen}{Zhang and
  Jacobsen}{2021}]%
        {zhang2021prosecutor}
\bibfield{author}{\bibinfo{person}{Gengrui Zhang} {and}
  \bibinfo{person}{Hans-Arno Jacobsen}.} \bibinfo{year}{2021}\natexlab{}.
\newblock \showarticletitle{{Prosecutor: An efficient BFT consensus algorithm
  with behavior-aware penalization against Byzantine attacks}}. In
  \bibinfo{booktitle}{\emph{Proceedings of the 22nd International Middleware
  Conference}}. \bibinfo{pages}{52--63}.
\newblock


\bibitem[\protect\citeauthoryear{Zhang, Setty, Chen, Zhou, and Alvisi}{Zhang
  et~al\mbox{.}}{2020}]%
        {zhang2020byzantine}
\bibfield{author}{\bibinfo{person}{Yunhao Zhang}, \bibinfo{person}{Srinath
  T.~V. Setty}, \bibinfo{person}{Qi Chen}, \bibinfo{person}{Lidong Zhou}, {and}
  \bibinfo{person}{Lorenzo Alvisi}.} \bibinfo{year}{2020}\natexlab{}.
\newblock \showarticletitle{{Byzantine Ordered Consensus without Byzantine
  Oligarchy}}. In \bibinfo{booktitle}{\emph{14th {USENIX} Symposium on
  Operating Systems Design and Implementation, {OSDI} 2020, Virtual Event,
  November 4-6, 2020}}. \bibinfo{publisher}{{USENIX} Association},
  \bibinfo{pages}{633--649}.
\newblock


\end{thebibliography}

\end{document}